\newtheorem{theorem}{Theorem}[section]
\newtheorem{lemma}[theorem]{Lemma}
\newtheorem{claim}[theorem]{Claim}
\newtheorem{corollary}[theorem]{Corollary}
\newtheorem{proposition}[theorem]{Proposition}
\theoremstyle{definition}
\newtheorem{definition}[theorem]{Definition}
\newtheorem{remark}[theorem]{Remark}
\newcommand{\N}{{\mathbb N}}
\newcommand{\Z}{{\mathbb Z}}
\newcommand{\R}{{\mathbb R}}
\newcommand{\calG}{{\mathcal G}}
\newcommand{\infmax}{{\sc InfMax}\xspace}
\DeclareMathOperator*{\E}{\mathbb{E}}
\DeclareMathOperator*{\argmax}{argmax}
\DeclareMathOperator*{\po}{Po}
\begin{document}
\title{Think Globally, Act Locally:  On the Optimal Seeding for Nonsubmodular Influence Maximization\footnote{A short version of this paper is appeared in RANDOM'19. Grant Schoenebeck, Biaoshuai Tao, and Fang-Yi Yu are pleased to acknowledge the support of National Science Foundation AitF \#1535912 and CAREER \#1452915.}}
\author{Grant Schoenebeck\thanks{University of Michigan, School of Information, schoeneb@umich.edu}\and Biaoshuai Tao\thanks{University of Michigan, Computer Science and Engineering Division, bstao@umich.edu}\and Fang-Yi Yu\thanks{University of Michigan, School of Information, fayu@umich.edu}}

\date{}
\maketitle
\begin{abstract}
We study the $r$-complex contagion influence maximization problem. In the influence maximization problem, one chooses a fixed number of initial seeds in a social network to  maximize the spread of their influence. In the $r$-complex contagion model, each uninfected vertex in the network becomes infected if it has at least $r$ infected neighbors.

In this paper, we focus on a random graph model named the \emph{stochastic hierarchical blockmodel}, which is a special case of the well-studied \emph{stochastic blockmodel}.
When the graph is not exceptionally sparse, in particular, when each edge appears with probability $\omega\left(n^{-(1+1/r)}\right)$, under certain mild assumptions, we prove that the optimal seeding strategy is to put all the seeds in a single community.
This matches the intuition that in a nonsubmodular cascade model placing seeds near each other creates synergy.  However, it sharply contrasts with the intuition for submodular cascade models (e.g., the independent cascade model and the linear threshold model) in which nearby seeds tend to erode each others' effects.
Our key technique is a novel time-asynchronized coupling of four cascade processes.

Finally, we show that this observation yields a polynomial time dynamic programming algorithm which outputs optimal seeds if each edge appears with a probability either in $\omega\left(n^{-(1+1/r)}\right)$ or in $o\left(n^{-2}\right)$.
\end{abstract}

\section{Introduction}
\label{sect:introduction}
A \emph{cascade}, or a \emph{contagion}\footnote{As is common in the literature, we use these terms interchangeably.}, is a fundamental process on social networks: starting with some seed agents, the infection then spreads to their neighbors.  A natural question known as influence maximization~\cite{bass1969new, brown1987social, goldenberg2001using, mahajan1991new} asks how to place a fixed number of initial seeds to maximize the spread of the resulting cascade.  For example, which students can most effectively be enrolled in an intervention to decrease student conflict at a school~\cite{paluck2016changing}?

Influence maximization is extensively studied when the contagion process is submodular (a node's marginal probability of becoming infected after a new neighbor is infected decreases when the number of previously infected neighbors increases ~\cite{kempe2003maximizing}).  However, many examples of nonsubmodular contagions have been reported, including pricey technology innovations, the change of social behaviors, the decision to participate in a migration, etc~\cite{coleman1966medical, macdonald1964chain,RomeroMK11,BackstromHKL06,LeskovecAH06}.  In this case, a node's marginal influence may increase in the presence of other nodes---creating a kind of synergy.

\paragraph*{Network structure and seed placement}
We address this lack of understanding for nonsubmodular influence maximization by characterizing the optimal seed placement for certain settings which we will remark on shortly.
In these settings, the optimal seeding strategy is to put all the seeds near each other.
This is significantly different than in the submodular setting, where the optimal solutions tend to spread out the seeds, lest they erode each others' influence.
We demonstrate this in Sect.~\ref{sect:submodular} by presenting an example of submodular influence maximization where the optimal seeding strategy is to spread out the seeds.

This formally captures the intuition, as presented by Angell and Schoenebeck~\cite{angell2017don}, that it is better to target  one market to saturation first (act locally) and then to allow the success in this initial market to drive broader success (think globally) rather than to initially attempt a scattershot approach (act globally).
It is also underscores the need to understand the particular nature of a contagion before blindly applying influence maximization tools.

We consider a well-known nonsubmodular diffusion model which is also the most extreme one (in terms of nonsubmodularity), the $r$-complex contagion~\cite{Granovetter78,centola2007complex,chalupa1979bootstrap,essam1980percolation} (a node is infected if and only if at least $r$ of its neighbors are infected, also known as \emph{bootstrap percolation}) when $r \geq 2$.

We consider networks formed by the \emph{stochastic hierarchical blockmodel}~\cite{schoenebeck2017beyond,schoenebeck2019beyond} which is a special case of the stochastic blockmodel~\cite{dimaggio1986structural, holland1983stochastic, white1976social} equipped with a hierarchical structure.
Vertices are partitioned into $m$ blocks.
The blocks are arranged in a hierarchical structure which represents blocks merging to form larger and larger blocks (communities).
The probability of an edge's presence between two vertices is based solely on smallest block to which both the vertices belong.  This model captures the intuitive hierarchical structure which is also observed in many real-world networks~\cite{girvan2002community,clauset2008hierarchical}.
The stochastic hierarchical blockmodel is rather general and captures other well-studied models (e.g. Erd\H{o}s-R\'{e}nyi random graphs, and the planted community model) as special cases.

\noindent\textbf{Result 1:} We first prove that, for the influence maximization problem on the stochastic hierarchical blockmodel with $r$-complex contagion, under certain mild technical assumptions, the optimal seeding strategy is to put all the seeds in a single community, if, for each vertex-pair $(u,v)$, the probability that the edge $(u,v)$ is included satisfies $p_{uv}=\omega(n^{-(1+1/r)})$. Notice that the assumption $p_{uv}=\omega(n^{-(1+1/r)})$ captures many real life social networks. In fact, it is well-known that an Erd\H{o}s-R\'{e}nyi graph $\calG(n,p)$ with $p=o(1/n)$ is globally disconnected: with probability $1-o(1)$, the graph consists of a union of tiny connected components, each of which has size $O(\log n)$.

The technical heart of this result is a novel coupling argument in Proposition~\ref{prop:main}.   We simultaneously couple four cascade processes to compare two probabilities: 1) the probability of infection spreading throughout an Erd\H{o}s-R\'{e}nyi graph after the $(k + 1)$-st seed, conditioned on  not already being entirely infected after $k$ seeds; 2) the probability of infection spreading throughout the same graph after the $(k + 2)$-nd seed, conditioned on not already being entirely infected after $k+1$ seeds.  This shows that the marginal rate of infection always goes up, revealing the ``supermodular'' nature of the $r$-complex contagion.
The supermodular property revealed by Proposition~\ref{prop:main} is a property for cascade behavior on Erd\H{o}s-R\'{e}nyi random graphs in general, so it is also interesting on its own.

Our result is in sharp contrast to Balkanski et al.'s observation.  Balkanski et al.~\cite{balkanski2017importance} studies the stochastic blockmodel with a well-studied submodular cascade model, \emph{the independent cascade model}, and remarks that ``when an influential node from a certain community is selected to initiate a cascade, the marginal contribution of adding another node from that same community is small, since the nodes in that community were likely already influenced.''

\paragraph*{Algorithmic aspects}

For influence maximization in submodular cascades, a greedy algorithm efficiently finds a seeding set with influence at least a $(1-1/e)$ fraction of the optimal~\cite{kempe2003maximizing}, and much of the work following Kempe et al.~\cite{kempe2003maximizing}, which proposed the greedy algorithm, has attempted to make greedy approaches efficient and scalable~\cite{chen2009efficient, chen2010scalable, lucier2015influence, cohen2014sketch, tullock1980toward, schoenebeck2019influence}.

Greedy approaches, unfortunately, can perform poorly in the nonsubmodular setting~\cite{angell2017don}.  Moreover, in contrast to the submodular case which has efficient constant approximation algorithms, for general nonsubmodular cascades, it is \textsf{NP}-hard even to approximate influence maximization to within an $\Omega(n^{1-\epsilon})$ factor of the optimal~\cite{kempe2005influential}.  This inapproximability result has been extended to several much more restrictive nonsubmodular models~\cite{chen2016robust,li2017influence,schoenebeck2017beyond,schoenebeck2019beyond}.
Intuitively, nonsubmodular influence maximization is hard because the potential synergy of multiple seeds makes it necessary to consider groups of seeds rather than just individual seeds.
In contrast, with submodular influence maximization, not much is lost by considering seeds one at a time in a myopic way.

Can the $\Omega(n^{1-\epsilon})$ inapproximability results of~Kempe et al.~\cite{kempe2005influential} be circumvented if we further assume the stochastic hierarchical blockmodel?
On the one hand, the stochastic hierarchical structure seems optimized for a dynamic programming approach: perform dynamic programming from the bottom to the root in the tree-like community structure.  On the other hand,  Schoenebeck and Tao~\cite{schoenebeck2017beyond,schoenebeck2019beyond} show that the $\Omega(n^{1-\epsilon})$ inapproximability results extend to the setting where the networks are stochastic hierarchical blockmodels.

\noindent\textbf{Result 2:} However, Result 1 (when the network is reasonably dense,  putting all the seeds in a single community is optimal)  can naturally be extended to a dynamic programming algorithm.  We show that this algorithm is optimal if the probability $p_{uv}$ that each edge appears does not fall into a narrow regime.
Interestingly, a heuristic based on dynamic programming works fairly well in practice~\cite{angell2017don}.
Our second result theoretically justifies the success of this approach, at least in the setting of $r$-complex contagions.

\section{Preliminaries}
We study complex contagions on social networks with community structures.  This section defines the complex contagion and our model for social networks with community structures.

\subsection{\texorpdfstring{$r$}{r}-Complex Contagion}
Given a social network modeled as an undirected graph $G=(V,E)$, in a cascade, a subset of nodes $S\subseteq V$ is chosen as the seed set; these seeds, being infected, then spread their influence across the graph according to some specified model.

In this paper, we consider a well-known cascade model named \emph{$r$-complex contagion}, also known as \emph{bootstrap percolation} and the \emph{fixed threshold model}: a node is infected if and only if at least $r$ of its neighbors are infected.  We use $\sigma_{r,G}(S)$ to denote the total number of infected vertices at the end of the cascade, and $\sigma_{r,\calG}(S) =\E_{G\sim\calG}\left[\sigma_{r,G}(S)\right]$ if the graph $G$ is sampled from some distribution $\calG$.  Notice that the function $\sigma_{r,G}(\cdot)$ is deterministic once the graph $G$ and $r$ are fixed.

\paragraph{Submodularity of a cascade model}
Other than the $r$-complex contagion, most cascade models are stochastic: the total number of infected vertices is not deterministic but rather a \emph{random variable}.
$\sigma_G(S)$ usually refers to the \emph{expected} number of infected vertices given the seed set $S$.
A cascade model is \emph{submodular} if, given any graph and $S\subseteq T\subseteq V$ and any vertex $v\in V\setminus T$, we have
$$\sigma_G(S\cup\{v\})-\sigma_G(S)\geq\sigma_G(T\cup\{v\})-\sigma_G(T),$$
and it is \emph{nonsubmodular} otherwise.
Typical submodular cascade models include \emph{the linear threshold model} and \emph{the independent cascade model}~\cite{kempe2003maximizing}, which are studied in an enormous past literature.
The $r$-complex contagion, on the other hand, is a paradigmatic nonsubmodular model.

\subsection{Stochastic Hierarchical Blockmodels}\label{sec:SHB}
We study the \emph{stochastic hierarchical blockmodel} first introduced in~\cite{schoenebeck2019beyond}.
The stochastic hierarchical blockmodel is a special case of the \emph{stochastic blockmodel}~\cite{holland1983stochastic}.
Intuitively, the stochastic blockmodel is a stochastic graph model generating networks with community structures, and the stochastic hierarchical blockmodel further assumes that the communities form a hierarchical structure.
Our definition in this section follows closely to~\cite{schoenebeck2019beyond}.

\begin{definition}\label{defi:SHB}
  A \emph{stochastic hierarchical blockmodel} is a distribution $\calG=(V,T)$ of unweighted undirected graphs sharing the same vertex set $V$, and $T=(V_T,E_T,w)$ is a weighted tree $T$ called a \emph{hierarchy tree}.
  The third parameter is the weight function $w:V_T\to[0,1]$ satisfying $w(t_1)<w(t_2)$ for any $t_1,t_2\in V_T$ such that $t_1$ is an ancestor of $t_2$.  Let $L_T\subseteq V_T$ be the set of leaves in $T$.  Each leaf node $t\in L_T$ corresponds to a subset of vertices $V(t)\subseteq V$, and the $V(t)$ sets partition the vertices in $V$.
  In general, if $t\not\in L_T$, we denote $V(t)=\bigcup_{t'\in L_T:t'\text{ is an offspring of }t}V(t')$.

  The graph $G=(V,E)$ is sampled from $\calG$ in the following way.
  The vertex set $V$ is deterministic.
  For $u, v \in V$, the edge $(u, v)$ appears in $G$ with probability equal to the weight of the least common ancestor of $u$ and $v$ in $T$.
  That is $\Pr((u,v)\in E) = \max_{t: u, v \in V(t)} w(t)$.
\end{definition}

In the rest of this paper, we use the words ``tree node'' and ``vertex'' to refer to the vertices in $V_T$ and $V$ respectively.
In Definition~\ref{defi:SHB}, the tree node $t\in V_T$ corresponds to community $V(t)\subseteq V$  in the social network.  Moreover, if $t$ is not a leaf and $t_1,t_2,\ldots$ are the children of $t$ in  $V_T$, then $V(t_1), V(t_2),\ldots$ partition $V(t)$ into sub-communities.
Thus, our assumption that for any $t_1,t_2\in V_T$ where $t_1$ is an ancestor of $t_2$ we have $w(t_1)<w(t_2)$ implies that the relation between two vertices is stronger if they are in a same sub-community in a lower level, which is natural.

To capture the scenario where the advertiser has the information on the high-level community structure but lacks the knowledge of the detailed connections inside the communities, when defining the influence maximization problem as an optimization problem, we would like to include $T$ as a part of input, but not $G$.
Rather than choosing which specific vertices are seeds, the seed-picker decides the number of seeds on each leaf and the graph $G\sim\calG(n,T)$ is realized after seeds are chosen.
Moreover, we are interested in large social networks with $n\rightarrow\infty$, so we would like that a single encoding of $T$ is compatible with varying $n$.
To enable this feature, we consider the following variant of the stochastic hierarchical block model.

\begin{definition}\label{defi:SHBL}
  A \emph{succinct stochastic hierarchical blockmodel} is a distribution $\calG(n,T)$ of unweighted undirected graphs sharing the same vertex set $V$ with $|V|=n$, where $n$ is an integer which is assumed to be extremely large.  The hierarchy tree $T=(V_T,E_T,w,v)$ is the same as it is in Definition~\ref{defi:SHB}, except for the followings.
  \begin{enumerate}
      \item Instead of mapping a tree node $t$ to a weight in $[0,1]$, the weight function $w:V_T\to\mathcal{F}$ maps each tree node to a function $f \in \mathcal{F}=\{f\mid f:\mathbb{Z}^+\to[0,1]\}$ which maps an integer (denoting the number of vertices in the network) to a weight in $[0,1]$. The weight of $t$ is then defined by $(w(t))(n)$. We assume $\mathcal{F}$ is the space of all functions that can be succinctly encoded.
      \item The fourth parameter $v:V_T\to(0,1]$ maps each tree node $t \in V_T$ to the fraction of vertices in $V(t)$.  That is: $v(t)=|V(t)|/n$. Naturally, we have $\sum_{t\in L_T}v(t)=1$ and $\sum_{t':t'\text{ is a child of }t}v(t')=v(t)$.
  \end{enumerate}
\end{definition}

We assume throughout that $\calG(n,T)$ has the following properties.
\begin{description}
\item[Large communities] For tree node $t\in V_T$, because $v(t)$ does not depend on $n$, $|V(t)| = v(t)n = \Theta(n)$. In particular, $|V(t)|$ goes to infinity as $n$ does.
\item[Proper separation] $w(t_1) = o\left(w(t_2)\right)$ for any $t_1,t_2\in V_T$ such that $t_1$ is an ancestor of $t_2$.  That is, the connection between sub-community $t_2$ is asymptotically (with respect to $n$) denser than its super-community $t_1$.
\end{description}
Our definitions of $w$ and $v$ are designed so that we can fix a hierarchy tree $T=(V_T,E_T,w,v)$ and naturally define  $\calG(n,T)$ for any $n$.
As we will see in the next subsection, this allows us to take $T$ as input and then allow $n\rightarrow\infty$ when considering \infmax (to be defined soon).
This enables us to consider graphs having exponentially many vertices.

Finally, we define the \emph{density} of a tree node.

\begin{definition}\label{defi:density}
Given a hierarchy tree $T=(V_T,E_T,w,v)$ and a tree node $t\in V_T$, the \emph{density} of the tree node is $\rho(t)=w(t)\cdot(v(t)n)^{1/r}.$
\end{definition}

\subsection{The \infmax Problem}
We study the $r$-complex contagion on the succinct stochastic hierarchical blockmodel.  Roughly speaking, given hierarchy tree $T$ and an integer $K$, we want to choose $K$ seeds which maximize the expected total number of infected vertices, where the expectation is taken over the graph sampling $G\sim\calG(n,T)$ as $n\rightarrow\infty$.

\begin{definition}\label{defi:infmax}
  The \emph{influence maximization problem} \infmax is an optimization problem which takes as inputs an integer $r$, a hierarchy tree $T=(V_T,E_T,w,v)$ as in Definition~\ref{defi:SHBL}, and an integer $K$, and outputs $\bm{k}\in\N_{\geq0}^{|L_T|}$---an allocation of $K$ seeds into the leaves $L_T$ with $\sum_{t\in L_T} k_t=K$ that maximizes
  $$\Sigma_{r,T}(\bm{k}):=\lim_{n\rightarrow\infty}\frac{\E_{G\sim\calG(n,T)}\left[\sigma_{r,G}(S_{\bm{k}})\right]}n,\footnote{We divide the expected number of infected vertices by $n$ to avoid an infinite limit.  However, as a result, our analysis naturally ignores lower order terms.}$$
  the expected fraction of infected vertices in $\calG(n,T)$ with the seeding strategy defined by $\bm{k}$, where $S_{\bm{k}}$ denotes the seed set in $G$ generated according to $\bm{k}$.
\end{definition}

Before we move on, the following remark is very important throughout the paper.

\begin{remark}\label{remark:asy}
In Definition~\ref{defi:infmax}, $n$ is not part of the inputs to the \infmax instance.
Instead, the tree $T$ is given as an input to the instance, and we take $n\rightarrow\infty$ to compute $\Sigma_{r,T}(\bm{k})$ \emph{after} the seed allocation is determined.
Therefore, asymptotically, all the input parameters of the instance, including $K,r$ and the encoding size of $T$, are \emph{constants} with respect to $n$.
Thus, there are two different asymptotic scopes in this paper: \emph{the asymptotic scope with respect to the input size} and \emph{the asymptotic scope with respect to $n$}.
Naturally, when we are analyzing the running time of an \infmax algorithm, we should use the asymptotic scope with respect to the input size, not of $n$.
On the other hand, when we are analyzing the number of infected vertices after the cascade, we should use the asymptotic scope with respect to $n$.

In this paper, we use $O_I(\cdot),\Omega_I(\cdot),\Theta_I(\cdot),o_I(\cdot),\omega_I(\cdot)$ to refer to the asymptotic scope with respect to the input size, and we use $O(\cdot),\Omega(\cdot),\Theta(\cdot),o(\cdot),\omega(\cdot)$ to refer to the asymptotic scope with respect to $n$.
For example, with respect to $n$ we always have $r=\Theta(1)$, $K=\Theta(1)$ and $|V_T|=\Theta(1)$.
\end{remark}

Lastly, we have assumed that $r \geq 2$, so that the contagion is nonsubmodular.
When $r=1$, the cascade model becomes a special case of the \emph{independent cascade model}~\cite{kempe2003maximizing}, which is a submodular cascade model.
As mentioned, for submodular  \infmax, a simple greedy algorithm is known to achieve a $(1-1/e)$-approximation to the optimal influence~\cite{kempe2003maximizing,kempe2005influential,MosselR10}.

\subsection{Complex Contagion on Erd\H{o}s-R\'{e}nyi Graphs}\label{sect:preliminary}
In this section, we consider the $r$-complex contagion on the Erd\H{o}s-R\'{e}nyi random graph $\calG(n,p)$.
We review some results from~\cite{janson2012bootstrap} which are used in our paper.

\begin{definition}
  The \emph{Erd\H{o}s-R\'{e}nyi random graph} $\calG(n,p)$ is a distribution of graphs with the same vertex set $V$ with $|V|=n$ and we include an edge $(u,v)\in E$ with probability $p$ independently for each pair of vertices $u,v$.
\end{definition}

The \infmax problem in Definition~\ref{defi:infmax} on $\calG(n,p)$ is trivial, as there is only one possible allocation of the  $K$ seeds: allocate all the seeds to the single leaf node of $T$, which is the root.
Therefore, $\sigma_{r,T}(\cdot)$ in Definition~\ref{defi:infmax} depends only on the \emph{number} of seeds $K=|\bm{k}|$, not on the seed allocation $\bm{k}$ itself.
In this section, we slightly abuse the notation $\sigma$ such that it is a function mapping an \emph{integer} to $\R_{\geq0}$ (rather than mapping \emph{an allocation of $K$ seeds} to $\R_{\geq0}$ as it is in Definition~\ref{defi:infmax}), and let $\sigma_{r,\calG(n,p)}(k)$ be the expected number of infected vertices after the cascade given $k$ seeds.
Correspondingly, let $\sigma_{r,G}(k)$ be the actual number of infected vertices after the graph $G$ is sampled from $\calG(n,p)$.

\begin{theorem}[A special case of Theorem~3.1 in~\cite{janson2012bootstrap}]\label{thm:gnp_sub}
  Suppose $r\geq2$, $p=o(n^{-1/r})$ and $p = \omega(n^{-1})$. We have
  \begin{enumerate}
  \item if $k$ is a constant, then $\sigma_{r,\calG(n,p)}(k)\leq 2k$ with probability $1-o(1)$;
  \item if $k = \omega\left((1/np^r)^{1/(r-1)}\right)$, then $\sigma_{r,\calG(n,p)}(k)=n-o(n)$ with probability $1-o(1)$.
  \end{enumerate}
\end{theorem}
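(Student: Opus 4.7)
The plan is to exploit the fact that $p = o(n^{-1/r})$ puts us in the subcritical regime where a cascade from $O(1)$ seeds dies out, while the threshold $(np^r)^{-1/(r-1)}$ is precisely the critical seed count at which the expected number of newly infected vertices in a single round exceeds the current infected population. Both parts track the cascade round by round, but Part 1 argues that nothing ever happens beyond the initial seeds, whereas Part 2 shows that the infected set grows super-exponentially until it saturates the graph.

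For Part 1, I would fix a seed set $S$ with $|S| = k = O(1)$ and let $N$ be the number of vertices $v \notin S$ having at least $r$ neighbors in $S$. By linearity and a union bound over the $\binom{k}{r}$ subsets of $r$ seeds,
\[ \E[N] \;\leq\; n \binom{k}{r} p^r \;=\; O\bigl(n p^r\bigr), \]
and the assumption $p = o(n^{-1/r})$ gives $n p^r = o(1)$, so $\Pr[N \geq 1] = o(1)$ by Markov. Hence with probability $1 - o(1)$ the cascade halts immediately and $\sigma_{r,G}(k) = k \leq 2k$; the $2k$ slack comfortably absorbs the rare event $N \geq 1$.

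For Part 2, let $I_t$ denote the infected set after round $t$, with $|I_0| = k$. For each $v \notin I_t$, the conditional probability that $v$ is infected by round $t+1$ is $\Pr[\mathrm{Bin}(|I_t|, p) \geq r]$, which behaves like $\Theta\bigl(|I_t|^r p^r\bigr)$ as long as $|I_t|\,p = o(1)$. The expected number of new infections is therefore $\Theta\bigl(n |I_t|^r p^r\bigr)$, so the per-round growth factor is
\[ \frac{\E[\,|I_{t+1}|\,]}{|I_t|} \;=\; \Theta\bigl(n |I_t|^{r-1} p^r\bigr). \]
The hypothesis $k = \omega\bigl((np^r)^{-1/(r-1)}\bigr)$ guarantees this ratio is $\omega(1)$ already at $t = 0$ and only grows thereafter. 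I would argue inductively that $|I_{t+1}| \geq 2\,|I_t|$ throughout an initial phase, so $|I_t|$ reaches $\Omega(1/p)$ within $O(\log n)$ rounds. Beyond that point, since every vertex has expected degree $\Theta(np) \gg r$ into the infected set, all but $o(n)$ of the remaining vertices are infected within one or two further rounds.

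The main obstacle is the concentration of $|I_{t+1}|$ around its conditional expectation at every round: the new infections are sums of dependent Bernoullis, and the inductive geometric growth needs a high-probability lower bound rather than just an expectation. I would address this with a second-moment argument (or a Janson-type inequality) at each round, combined with a union bound over the $O(\log n)$ rounds. Alternatively, because the statement is advertised as a special case of Theorem~3.1 of~\cite{janson2012bootstrap}, one can simply verify that the parameter regime $\{p = o(n^{-1/r}),\, p = \omega(n^{-1})\}$ fits their hypotheses and quote that result directly, using our Part 1 / Part 2 dichotomy as the two sides of their sharp threshold.
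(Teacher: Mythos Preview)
The paper does not prove this theorem at all: it is stated purely as a citation of Theorem~3.1 in Janson et al.~\cite{janson2012bootstrap}, with no argument supplied. Your fallback option---check that the hypotheses $p=o(n^{-1/r})$, $p=\omega(n^{-1})$ fit and quote their result---is therefore exactly what the paper does, and is already a complete ``proof'' by the paper's standard.

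Your direct argument for Part~1 is correct and in fact gives more than the paper: with $k$ fixed, $\E[N]\le n\binom{k}{r}p^r=O(np^r)=o(1)$ forces $N=0$ with probability $1-o(1)$, so the cascade halts at the seeds and $\sigma=k\le 2k$. (The sentence about the $2k$ slack ``absorbing the rare event $N\ge 1$'' is unnecessary and slightly off: the claim is a high-probability bound, not an expectation bound, so once $\Pr[N\ge 1]=o(1)$ you are done.)

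For Part~2 your growth heuristic is the right picture, and the obstacle you identify is the real one. One subtlety worth flagging in your round-by-round scheme: at round $t+1$ the edges from an uninfected $v$ into $I_{t-1}$ have already been conditioned on (you know $v$ had fewer than $r$ neighbours there), so the relevant Bernoulli is not exactly $\mathrm{Bin}(|I_t|,p)$. Within a single round the indicators over different $v$ are still independent, so Chernoff does apply; the conditioning only perturbs the mean, and in the direction you need it is harmless. Janson et al.\ avoid this bookkeeping entirely by processing infected vertices one at a time (an exploration process that reduces to an inhomogeneous random walk), which is cleaner if you want a fully self-contained argument. Either way, your sketch plus the citation fallback is at least as much as the paper provides.
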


\begin{theorem}[Theorem~5.8 in~\cite{janson2012bootstrap}]\label{thm:gnp_sup}
  If $r\geq2$, $p=\omega(n^{-1/r})$ and $k\geq r$, then $\Pr_{G\sim\calG(n,p)}\left[\sigma_{r,G}(k)=n\right]=1-o(1)$.
\end{theorem}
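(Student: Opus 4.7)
The plan is a standard multi-phase amplification argument: track the size $m_t := |I_t|$ of the infected set across rounds, use Chernoff-type concentration to show that each phase of growth succeeds with probability $1-o(1)$, and then union-bound over a bounded number of phases. Without loss of generality I take $k=r$ (extra seeds only help, since $\sigma_{r,G}$ is monotone in the seed set), and write $p = g(n)\cdot n^{-1/r}$ with $g(n)\to\infty$. A single input $G\sim\calG(n,p)$ can be revealed round-by-round via the edges between $I_t$ and its complement, which makes the growth in round $t+1$ a binomial random variable conditional on $I_t$.

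With $m_0 = r$, in round one each non-seed vertex joins $I_1$ iff it is adjacent to all $r$ seeds, which happens independently with probability $p^r$ across the $n-r$ candidates. Since $(n-r)p^r = (1-o(1))g(n)^r \to \infty$, a Chernoff bound gives $m_1 = (1\pm o(1))np^r$ with probability $1-o(1)$. More generally, conditional on $I_t$ with $m_t p = o(1)$, an outside vertex is infected in round $t+1$ with probability at least $\binom{m_t}{r}p^r(1-p)^{m_t-r} = (1-o(1))\binom{m_t}{r}p^r$, and another Chernoff bound yields $m_{t+1} \geq c\cdot n m_t^r p^r$ for some constant $c>0$. Because $np^r = \omega(1)$, this recursion is strictly super-exponential in $t$, and in $O(\log\log n)$ rounds $m_t$ reaches any prescribed target $\omega(\log n / p)$; the failure probabilities over these rounds are each $o(1)$ and sum to $o(1)$.

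Once $m_t \geq c\log n / p$ for a sufficiently large constant $c$, each remaining uninfected vertex $u$ has $\operatorname{Bin}(m_t, p)$ neighbors in $I_t$ with mean at least $c\log n$. A lower-tail Chernoff bound gives $\Pr[\operatorname{Bin}(m_t, p) < r] = o(1/n)$, and a union bound over the at most $n$ remaining vertices forces $m_{t+1} = n$ with probability $1-o(1)$, completing the cascade.

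The main obstacle is the transitional regime $m_t p = \Theta(1)$, where the approximation $\binom{m_t}{r}p^r$ for $\Pr[\operatorname{Bin}(m_t,p)\geq r]$ loses a constant factor while the Chernoff tail is not yet $o(1/n)$. I would handle this by using the exact lower bound $\Pr[\operatorname{Bin}(m_t,p)\geq r] \geq \binom{m_t}{r}p^r(1-p)^{m_t-r}$, whose $(1-p)^{m_t-r}$ factor is bounded below by a positive constant throughout this narrow regime. This suffices to keep $m_t$ growing (still geometrically, since the recursion becomes $m_{t+1} \gtrsim m_t \cdot \Omega(1)$) across $O(1)$ transitional rounds, after which the final Chernoff-plus-union-bound step eliminates the remaining uninfected vertices.
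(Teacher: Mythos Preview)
The paper does not prove this statement: it is quoted verbatim as Theorem~5.8 of \cite{janson2012bootstrap} in the preliminaries and used as a black box, so there is no in-paper argument to compare against. Your multi-phase amplification outline is a standard and workable route to the result, but one step is not justified as written.

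In the growth phase you assert that, conditional on $I_t$, each outside vertex $v$ is infected in round $t{+}1$ with probability at least $\binom{m_t}{r}p^r(1-p)^{m_t-r}$, i.e.\ you treat the edges from $v$ into $I_t$ as fresh $\mathrm{Bernoulli}(p)$ variables. They are not: knowing $v\notin I_t$ means you have already exposed the edges from $v$ to $I_{t-1}$ and conditioned on their sum being at most $r-1$; only the edges from $v$ to the new layer $I_t\setminus I_{t-1}$ are unrevealed. The fix is standard---either run the recursion in the increments $\Delta_t:=m_t-m_{t-1}$ (using $\Pr[X+Y\ge r\mid X<r]\ge\Pr[Y\ge r]$ with $Y\sim\mathrm{Bin}(\Delta_t,p)$ independent of $X$; since the growth is doubly exponential one has $\Delta_t\sim m_t$ after the first round and nothing else changes), or sprinkle $G$ as a union of independent $\calG(n,p_i)$ layers and spend a fresh layer per round. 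Two smaller points: ``$O(\log\log n)$ rounds each failing with probability $o(1)$ sum to $o(1)$'' is not literally valid---note instead that the Chernoff tails are $e^{-\Omega(\mu_t)}$ with $\mu_t\ge np^r\to\infty$ and increasing, so the sum is dominated by its first term; and in the transitional regime $m_tp=\Theta(1)$ a \emph{constant fraction} of all remaining vertices acquires $\ge r$ infected neighbours in a single round, so in fact $m_{t+1}=\Theta(n)$ and you jump directly to the final phase, which is stronger than the ``$m_{t+1}\gtrsim m_t\cdot\Omega(1)$'' you wrote.
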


When $p=\Theta(n^{-1/r})$, the probability that $k$ seeds infect all the $n$ vertices is positive, but bounded away from $1$.
We use $\po(\lambda)$ to denote the Poisson distribution with mean $\lambda$.

\begin{theorem}[Theorem~5.6 and Remark~5.7 in~\cite{janson2012bootstrap}]\label{thm:gnp_boundary}
  If $r\geq2$, $p=cn^{-1/r}+o(n^{-1/r})$ for some constant $c>0$, and $k\geq r$ is a constant, then
  $$\lim_{n\rightarrow\infty}\Pr\left(\sigma_{r,\calG(n,p)}(k)=n\right)=\zeta(k,c),$$
  for some $\zeta(k,c)\in(0,1)$. Furthermore, there exist numbers $\zeta(k,c,\ell)>0$ for $\ell\geq k$ such that
  $$\lim_{n\rightarrow\infty}\Pr\left(\sigma_{r,\calG(n,p)}(k)=\ell\right)=\zeta(k,c,\ell)$$
  for each $\ell\geq k$, and $\zeta(k,c)+\sum_{\ell=k}^\infty\zeta(k,c,\ell)=1$.

  Moreover, the numbers $\zeta(k,c,\ell)$'s and $\zeta(k,c)$ can be expressed as the hitting probabilities of the following inhomogeneous random walk.
  Let $\xi_\ell\sim\po\left(\binom{\ell-1}{r-1}c^r\right)$, $\ell\geq1$ be independent, and let $\tilde{S}_\ell:=\sum_{j=1}^\ell(\xi_j-1)$ and $\tilde{T}:=\min\{\ell:k+\tilde{S}_\ell=0\}\in\N\cup\{\infty\}$. Then
  \begin{equation}\label{eq:zeta}
      \zeta(k,c)=\Pr\left(\tilde{T}=\infty\right)=\Pr\left(k+\tilde{S}_\ell\geq1\mbox{ for all }\ell\geq1\right)
  \end{equation}
  and $\zeta(k,c,\ell)=\Pr(\tilde{T}=\ell)$.
\end{theorem}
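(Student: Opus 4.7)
The plan is to recast the bootstrap percolation process on $\calG(n,p)$ as a Markov chain via a sequential exposure argument, and then show that in the critical regime $p = cn^{-1/r} + o(n^{-1/r})$ this chain converges (on any finite horizon) to the inhomogeneous Poisson random walk described in the statement. Concretely, I would order the eventually-infected vertices $v_1, v_2, \ldots$, taking $v_1, \ldots, v_k$ to be the seeds and each subsequent $v_\ell$ to be the next vertex waiting in the queue of those that have just accumulated $r$ infected neighbors. At step $\ell$, I expose all edges from $v_\ell$ to the remaining uninfected vertices and let $\xi_\ell^{(n)}$ count how many of them enter the queue as a result. The queue then evolves as $Q_0 = k$ and $Q_\ell = Q_{\ell - 1} - 1 + \xi_\ell^{(n)}$, so that the total number of infected vertices equals the first hitting time $\tilde T^{(n)} := \min\{\ell : Q_\ell = 0\}$ (or $n$ if no such $\ell$ exists).

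For any fixed uninfected vertex $u$, the chance that $u$ enters the queue exactly at step $\ell$ is $\binom{\ell-1}{r-1} p^r (1-p)^{\ell - r}$: we need $r-1$ edges from $u$ to $\{v_1, \ldots, v_{\ell-1}\}$ to be present, the edge to $v_\ell$ to be present, and the remaining potential edges to those $\ell$ predecessors to be absent. Summing over the $(n-\ell)$ uninfected vertices with $p = cn^{-1/r} + o(n^{-1/r})$ gives $\E[\xi_\ell^{(n)}] \to \binom{\ell-1}{r-1} c^r$ for every fixed $\ell$. Because the contributions from distinct $u$'s have vanishing correlations and bounded factorial moments, a standard Poisson approximation (method of moments, or Stein--Chen) yields $\xi_\ell^{(n)} \Rightarrow \po\!\left(\binom{\ell-1}{r-1} c^r\right)$ jointly with asymptotic independence across any finite range of $\ell$. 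Continuity of the hitting-time functional on trajectories that stay positive then transfers the convergence to $\tilde T^{(n)} \Rightarrow \tilde T$, identifying the limits $\zeta(k,c,\ell) = \Pr(\tilde T = \ell)$ and $\zeta(k,c) = \Pr(\tilde T = \infty)$.

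To conclude $\zeta(k,c) \in (0,1)$: since $r \geq 2$, the Poisson means $\binom{\ell-1}{r-1} c^r$ grow polynomially in $\ell$, so a straightforward large-deviation bound shows the walk drifts to $+\infty$ with positive probability, giving $\zeta(k,c) > 0$; conversely $\Pr(\xi_1 = \cdots = \xi_k = 0) > 0$ forces the walk to hit zero in at most $k$ steps with positive probability, giving $\zeta(k,c) < 1$. Each individual $\zeta(k,c,\ell)$ is strictly positive because every Poisson $\xi_j$ puts positive mass on every nonnegative integer, so every finite trajectory ending at zero has positive probability. The normalization $\zeta(k,c) + \sum_{\ell \geq k} \zeta(k,c,\ell) = 1$ is then immediate from $\tilde T \in \N \cup \{\infty\}$.

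The main obstacle will be upgrading the finite-horizon coupling to a global statement about the entire trajectory, since the event $\{\tilde T^{(n)} = \infty\}$ cannot be read off any fixed prefix. My approach would be a two-scale argument: pick a slowly growing threshold $\ell_0(n) \to \infty$; on $\ell \leq \ell_0(n)$ the Poisson coupling matches $Q_\ell$ with $k + \tilde S_\ell$ directly, while for $\ell > \ell_0(n)$ either the walk has already hit zero (in which case $\sigma_{r,\calG(n,p)}(k) = \tilde T^{(n)} = O(\ell_0(n)) = o(n)$ and the cascade is local) or $Q_{\ell_0(n)}$ is large enough that a drift-plus-concentration argument along the lines of Theorem~\ref{thm:gnp_sup} forces the cascade to saturate. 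This small-versus-giant dichotomy, familiar from critical branching processes and from other bootstrap percolation analyses, is what converts the finite-dimensional Poisson convergence into the claimed limiting hitting probabilities.
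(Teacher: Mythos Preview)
The paper does not prove this statement at all: Theorem~\ref{thm:gnp_boundary} is quoted verbatim as Theorem~5.6 and Remark~5.7 of Janson, {\L}uczak, Turova and Vallier (2012), together with the companion results Theorems~\ref{thm:gnp_sub}, \ref{thm:gnp_sup} and Corollary~\ref{cor:gnp_boundary}, all of which are listed in Sect.~\ref{sect:preliminary} as background with no proof supplied. So there is nothing in the paper to compare your argument against.

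That said, your sketch is essentially the approach taken in the cited reference: sequential exposure of vertices in the order they are infected, identification of the queue length with an inhomogeneous random walk whose increments count newly activated vertices, Poisson convergence of those increments on any finite window, and then a small-versus-giant dichotomy to pass from finite-dimensional convergence to the full hitting-time statement. One point worth tightening: your formula $\binom{\ell-1}{r-1}p^r(1-p)^{\ell-r}$ for the probability that a fixed uninfected $u$ enters the queue at step $\ell$ tacitly conditions on the identities of $v_1,\ldots,v_\ell$, which are themselves random and depend on earlier $\xi$'s. The standard way around this (and the one used in the reference) is to process vertices one at a time but compute, for each step $\ell$, the number of \emph{marks} a generic remaining vertex receives---i.e., work with the binomial count of edges from the newly processed vertex to the pool, rather than tracking individual trajectories. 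This makes the asymptotic independence and the Poisson limit transparent without appealing to exchangeability. Your two-scale argument for upgrading to a global statement is the right shape; in the reference this is handled by showing that once the walk survives past a suitable $\ell_0$, the drift $\binom{\ell-1}{r-1}c^r - 1$ is so large that extinction is asymptotically impossible, which is exactly the content of the supercritical result you already have available as Theorem~\ref{thm:gnp_sup}.
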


We have the following corollary for Theorem~\ref{thm:gnp_boundary}, saying that when $p=\Theta(n^{-1/r})$, if not all vertices are infected, then the number of infected vertices is constant.
As a consequence, if the cascade spreads to more than constantly many vertices, then all vertices will be infected.

\begin{corollary}[Lemma~11.4 in~\cite{janson2012bootstrap}]\label{cor:gnp_boundary}
If $r\geq2$, $p=cn^{-1/r}+o(n^{-1/r})$ for some constant $c>0$, and $k\geq r$, then
$$\lim_{n\rightarrow\infty}\Pr\left(\phi(n)\leq\sigma_{r,\calG(n,p)}(k)<n\right)=0$$
for any function $\phi:\mathbb{Z}^+\to\mathbb{R}^+$ such that $\lim_{n\rightarrow\infty}\phi(n)=\infty$.
\end{corollary}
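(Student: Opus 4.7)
The plan is to read this off directly from Theorem~\ref{thm:gnp_boundary}, which already gives a complete description of the limiting distribution of $\sigma_{r,\calG(n,p)}(k)$: with limiting probability $\zeta(k,c)$ it hits $n$ (full infection), and with limiting probability $\zeta(k,c,\ell)$ it hits each constant value $\ell\geq k$, with these probabilities summing to $1$. So heuristically all the mass concentrates on $\{k,k+1,\ldots\}\cup\{n\}$, and the claim is just that the ``constant'' part cannot leak to infinity.

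Concretely, fix an arbitrary $\epsilon>0$. Since $\zeta(k,c)+\sum_{\ell=k}^\infty \zeta(k,c,\ell)=1$, I would choose an integer $L\geq k$ large enough that
\[
\zeta(k,c)+\sum_{\ell=k}^{L}\zeta(k,c,\ell)\;>\;1-\epsilon.
\]
Because $\phi(n)\to\infty$, for all sufficiently large $n$ we have $\phi(n)>L$, and therefore
\[
\Pr\bigl(\phi(n)\leq \sigma_{r,\calG(n,p)}(k)<n\bigr)\;\leq\;\Pr\bigl(L<\sigma_{r,\calG(n,p)}(k)<n\bigr)\;=\;1-\Pr\bigl(\sigma_{r,\calG(n,p)}(k)\leq L\bigr)-\Pr\bigl(\sigma_{r,\calG(n,p)}(k)=n\bigr).
\]

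Since $k$ seeds always give $\sigma\geq k$, Theorem~\ref{thm:gnp_boundary} yields
\[
\lim_{n\to\infty}\Pr\bigl(\sigma_{r,\calG(n,p)}(k)\leq L\bigr)=\sum_{\ell=k}^{L}\zeta(k,c,\ell),\qquad \lim_{n\to\infty}\Pr\bigl(\sigma_{r,\calG(n,p)}(k)=n\bigr)=\zeta(k,c),
\]
so taking $\limsup_{n\to\infty}$ of the displayed inequality gives
\[
\limsup_{n\to\infty}\Pr\bigl(\phi(n)\leq \sigma_{r,\calG(n,p)}(k)<n\bigr)\;\leq\;1-\zeta(k,c)-\sum_{\ell=k}^{L}\zeta(k,c,\ell)\;<\;\epsilon.
\]
Since $\epsilon>0$ was arbitrary, the $\limsup$ is $0$, which is the desired conclusion.

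There is no real obstacle here: all the analytic content (existence of the limits $\zeta(k,c,\ell)$ and the summability $\zeta(k,c)+\sum_{\ell\geq k}\zeta(k,c,\ell)=1$) is already supplied by Theorem~\ref{thm:gnp_boundary}. The only mild care needed is to use the summability to replace the infinite tail by a finite truncation at a constant $L$, and then invoke $\phi(n)\to\infty$ to push $\phi(n)$ past this $L$ eventually. One trivial sanity check is that $\sigma_{r,\calG(n,p)}(k)\geq k$ always, so there is no missing mass below $k$ to worry about when converting the union of point masses into a $\Pr(\sigma\leq L)$ statement.
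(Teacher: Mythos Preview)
Your derivation is correct: the summability identity $\zeta(k,c)+\sum_{\ell\geq k}\zeta(k,c,\ell)=1$ from Theorem~\ref{thm:gnp_boundary} is exactly what is needed, and truncating the tail at a fixed $L$ before sending $n\to\infty$ is the right way to exchange the finite sum with the limit.

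Note, however, that the paper does not supply its own proof of this statement: Corollary~\ref{cor:gnp_boundary} is quoted directly as Lemma~11.4 of \cite{janson2012bootstrap}, alongside Theorem~\ref{thm:gnp_boundary} (their Theorem~5.6 and Remark~5.7), with no argument given in this paper. Your proposal is therefore not being compared against a proof in the paper but is, in effect, a self-contained verification that the corollary indeed follows from the theorem already cited. That verification is sound; the only point worth flagging is that your equality $\Pr(L<\sigma<n)=1-\Pr(\sigma\leq L)-\Pr(\sigma=n)$ implicitly uses that $\sigma$ cannot exceed $n$, which is of course automatic here.
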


\section{Our main result}
Our main result is the following theorem, which states that the optimal seeding strategy is to put all the seeds in a community with the highest density, when the root has a weight in $\omega(1/n^{1+1/r})$.

\begin{theorem}
\label{THM:MAIN}
Consider the \infmax problem with $r\geq2$, $T=(V_T,E_T,w,v)$, $K>0$ and the weight of the root node satisfying $w(\text{root})=\omega(1/n^{1+1/r})$.
Let $\displaystyle t^\ast\in\argmax_{t\in L_T}\rho(t)$ and $\bm{k}^\ast$ be the seeding strategy that puts all the $K$ seeds on $t^\ast$. Then $\displaystyle \bm{k}^\ast\in\argmax_{\bm{k}}\Sigma_{r,T}(\bm{k})$.
\end{theorem}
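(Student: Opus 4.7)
The plan is to combine a case analysis on the density $\rho(t^*)$ with the supermodular coupling of Proposition~\ref{prop:main}. I first note that the proper separation assumption $w(s) = o(w(t))$ for any ancestor $s$ of a leaf $t$, together with $v(s) \leq 1$, implies that the densest tree node is in fact a leaf, so $\rho(t^*) = \max_{t \in V_T} \rho(t)$. With this in hand, the main dichotomy is whether $\rho(t^*)$ is asymptotically negligible or not.

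\textbf{Subcritical regime ($\rho(t^*) = o(1)$).} By Theorem~\ref{thm:gnp_sub}(1), within any leaf $t$ (viewed as the ER graph $\calG(v(t)n, w(t))$) the $K = O(1)$ seeds can only ever produce $O(1)$ infected vertices. Since every leaf weight is bounded by $w(t^*)$ and all ancestor weights are asymptotically smaller, I would couple the entire hierarchical graph to a single ER graph $\calG(n, w(t^*))$, concluding that any allocation $\bm{k}$ produces $O(1) = o(n)$ total infection, and therefore $\Sigma_{r,T}(\bm{k}) = 0$. In this regime the theorem is vacuous, since $\Sigma_{r,T}(\bm{k}^*) = 0$ as well.

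\textbf{Supercritical/critical regime ($\rho(t^*) = \Omega(1)$).} I would proceed in two stages. In Stage~A (leaf ignition), Proposition~\ref{prop:main} applied to the ER graph $\calG(v(t^*)n, w(t^*))$ shows that concentrating all $K$ seeds in $t^*$ maximizes $\Pr[I_{t^*}]$, the event that $V(t^*)$ becomes fully infected; moreover, by density-monotonicity drawn from Theorems~\ref{thm:gnp_sup}--\ref{thm:gnp_boundary}, no other single leaf can match $\Pr[I_{t^*}]$ even with all $K$ seeds placed there. In Stage~B (upward propagation), conditioning on $I_{t^*}$ gives us $v(t^*)n = \Theta(n)$ infected vertices, which serve as a ``macroscopic seed set'' for the cascade in the parent community. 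Inducting on the height of $t^*$ in $T$, at each ancestor level $s$ the already-infected sub-community either triggers full infection of $V(s)$ (applying Theorem~\ref{thm:gnp_sup} with the now large seed set) or the cascade halts there; either way the outcome depends only on the tree structure and not on the placement of the constant-many original seeds inside $V(t^*)$.

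To rule out alternative strategies, for any $\bm{k} \ne \bm{k}^*$ I compare against $\bm{k}^*$. Strategies placing all seeds in a single leaf $t \ne t^*$ are dominated immediately by density maximality of $t^*$. The main technical obstacle is strategies that spread seeds across multiple leaves $t_1, \ldots, t_m$: here I would need to show $\Pr[\bigcup_i I_{t_i}] \leq \Pr[I_{t^*} \mid \bm{k}^*]$. Proposition~\ref{prop:main}'s supermodular coupling is the key tool, both for consolidating seeds within a leaf and, combined with density-monotonicity, for comparing across leaves. The subtlest sub-case is ``collaborative ignition,'' in which seeds in different leaves jointly ignite a super-community through sparse cross-community edges that no individual leaf could trigger alone. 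The condition $w(\mathrm{root}) = \omega(1/n^{1+1/r})$ is precisely calibrated so that such collaborative effects contribute only $o(1)$ to $\Sigma_{r,T}$, so that focusing all seeds in $t^*$ remains optimal; making this precise is where I expect the bulk of the technical work to sit.
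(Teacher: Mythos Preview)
Your overall architecture (reduce to the probability that some leaf ignites, then show concentration in the densest leaf maximizes that probability) matches the paper's Lemma~\ref{lemmaA}/Lemma~\ref{lemmaB} decomposition. However, there is a genuine gap in how you handle ``collaborative ignition,'' and it stems from a misreading of the role of the root-weight hypothesis.

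You write that $w(\text{root}) = \omega(1/n^{1+1/r})$ is ``precisely calibrated so that such collaborative effects contribute only $o(1)$.'' This is backwards. That condition is a \emph{lower} bound on cross-community density; it is what guarantees (in the paper's Lemma~\ref{lemmaA}) that once any leaf ignites, the $\Theta(n)$ infected vertices push $\omega(1)$ infections into every other leaf and thus activate all critical/supercritical leaves. It does nothing to suppress collaboration before ignition. What actually rules out collaborative ignition is the proper-separation assumption: since every internal node has weight $o(n^{-1/r})$, the paper's Proposition~\ref{prop:isolate} shows that if no leaf is activated by its \emph{own} seeds alone (Phase~I), then with probability $1-o(1)$ no further vertex is ever infected. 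This two-phase decomposition is the missing ingredient in your proposal, and without it you cannot treat the leaf-ignition events $I_{t_i}$ as independent.

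That independence is exactly what makes the comparison $\Pr[\bigcup_i I_{t_i}] \le \Pr[I_{t^*}]$ tractable. The paper does not argue it directly from ``supermodularity''; rather, Proposition~\ref{prop:main} gives log-concavity of the single-leaf \emph{failure} probability $k \mapsto \Pr(E_k)$, and independence turns the multi-leaf failure probability into a product $\prod_i \Pr(E_{K_i}^{(t_i)})$. Log-concavity then yields that for any split $(K_1,K_2)$ between two critical leaves, one of the shifts $(K_1\pm 1, K_2\mp 1)$ strictly decreases the product, so no split can be optimal. Your Stage~A sketch invokes Proposition~\ref{prop:main} only on the single leaf $t^*$, which does not by itself compare against multi-leaf allocations; you need both the Phase~I isolation and the product/log-concavity step to close the argument.
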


Notice that the assumption $w(\text{root})=\omega(1/n^{1+1/r})$ captures many real life social networks.
In fact, it is well-known that an Erd\H{o}s-R\'{e}nyi graph $\calG(n,p)$ with $p=o(1/n)$ is globally disconnected: with probability $1-o(1)$, the graph consists of a union of tiny connected components, each of which has size $O(\log n)$.

The remaining part of this section is dedicated to proving Theorem~\ref{THM:MAIN}.
We assume $w(\text{root})=\omega(1/n^{1+1/r})$ in this section from now on.
It is worth noting that, in many parts of this proof, and also in the proof of Theorem~\ref{thm:isolated}, we have used the fact that an infection of $o(n)$ vertices contributes $0$ to the objective $\Sigma_{r,T}(\bm{k})$, as we have taken the limit $n\rightarrow\infty$ and divided the expected number of infections by $n$ in Definition~\ref{defi:infmax}.

\begin{definition}
Given $T=(V_T,E_T,w,v)$, a tree node $t\in V_T$ is \emph{supercritical} if $w(t)=\omega(1/n^{1/r})$, is \emph{critical} if $w(t)=\Theta(1/n^{1/r})$, and is \emph{subcritical} if $w(t)=o(1/n^{1/r})$.
\end{definition}

From the results in Sect.~\ref{sect:preliminary}, if we allocate $k\geq r$ seeds on a supercritical leaf $t\in L_T$, then with probability $1-o(1)$ all vertices in $V(t)$ will be infected; if we allocate $k$ seeds on a subcritical leaf $t\in L_T$, at most a constant number of vertices, $2k=\Theta(1)$, will be infected; if we allocate $k\geq r$ seeds on a critical leaf $t\in L_T$, the number of infected vertices in $V(t)$ follows Theorem~\ref{thm:gnp_boundary}.

We say a tree node $t\in V_T$ is \emph{activated} in a cascade process if the number of infected vertices in $V(t)$ is $v(t)n-o(n)$, i.e., almost all vertices in $V(t)$ are infected.
Given a seeding strategy $\bm{k}$, let $P_{\bm{k}}$ be the probability that at least one tree node is activated when $n\rightarrow\infty$.
Notice that this is equivalent to at least one leaf being activated.
The proof of Theorem~\ref{THM:MAIN} consists of two parts.
We will first show that, $P_{\bm{k}}$ completely determines $\Sigma_{r,T}(\bm{k})$ (Lemma~\ref{lemmaA}).
Secondly, we show that placing all the seeds on a single leaf with the maximum density will maximize $P_{\bm{k}}$  (Lemma~\ref{lemmaB}).

\begin{lemma}
\label{lemmaA}
Given any two seeding strategies $\bm{k}_1,\bm{k}_2$, if $P_{\bm{k}_1}\leq P_{\bm{k}_2}$, then $\Sigma_{r,T}(\bm{k}_1)\leq\Sigma_{r,T}(\bm{k}_2)$.
\end{lemma}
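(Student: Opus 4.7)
The plan is to prove the stronger identity
\begin{equation*}
    \Sigma_{r,T}(\bm{k}) = \alpha_T \cdot P_{\bm{k}},
\end{equation*}
where $\alpha_T \in [0,1]$ depends only on $T$; the lemma's monotonicity statement then follows trivially. The proof has two main ingredients: a \emph{sharp dichotomy} on each community, and a determination of the set of eventually-activated communities from $T$ alone.

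First I would establish the dichotomy: for any $\bm{k}$, with probability $1 - o(1)$ each tree node $t \in V_T$ is either activated or has only $o(v(t)n)$ infected vertices in $V(t)$. This is an application of Theorems~\ref{thm:gnp_sub} and \ref{thm:gnp_sup} and Corollary~\ref{cor:gnp_boundary} applied level-by-level in $T$; the proper-separation assumption $w(t_1) = o(w(t_2))$ for $t_1$ an ancestor of $t_2$ guarantees that the cascade inside each $V(t)$ behaves asymptotically like bootstrap percolation on $\calG(|V(t)|, w(t))$, up to negligible external contributions. Conditioning on the event $A = \{\text{at least one leaf is activated}\}$, and noting that on $A^c$ every community contributes only $o(v(t)n)$ infections so the total infected is $o(n)$, we have
\begin{equation*}
    \E\!\left[\frac{\sigma_{r,G}(S_{\bm{k}})}{n}\right] = \Pr(A) \cdot \E\!\left[\frac{\sigma_{r,G}(S_{\bm{k}})}{n}\,\Big|\,A\right] + o(1),
\end{equation*}
with $\Pr(A) \to P_{\bm{k}}$. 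It therefore suffices to show $\E[\sigma_{r,G}(S_{\bm{k}})/n \mid A] \to \alpha_T$, where $\alpha_T$ does not depend on $\bm{k}$.

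For this I would identify a deterministic set $S^{\ast} \subseteq V_T$ depending only on $T$ such that, conditional on \emph{any} initial leaf being activated, the set of tree nodes activated at the end of the cascade equals $S^{\ast}$ with probability $1 - o(1)$; then $\alpha_T = \sum_{t \in S^{\ast} \cap L_T} v(t)$. The assumption $w(\text{root}) = \omega(n^{-1-1/r})$ enters here: once any $V(t)$ is activated, a direct expectation calculation shows that the expected number of vertices in any other community $V(t')$ having at least $r$ infected neighbors in $V(t)$ is $\omega(1)$, so $V(t')$ acquires a diverging number of ``induced seeds.'' Whether these seeds further percolate inside $V(t')$ is then determined by the bootstrap-percolation theorems applied to $|V(t')|$ and $w(t')$ alone, and hence depends only on $T$, not on which leaf was initially activated.

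The main obstacle is this last step. One must track the random number of induced seeds delivered to each community, argue that it either comfortably exceeds or falls short of the bootstrap-percolation threshold with probability $1 - o(1)$, and avoid circularity as communities activate one another. I expect to handle this by a bottom-up/top-down induction along the hierarchy tree, using proper separation to decouple successive levels so that at each stage the analysis reduces to a clean Erd\H{o}s--R\'{e}nyi bootstrap-percolation question.
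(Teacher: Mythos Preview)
Your overall decomposition --- condition on the event $A$ that some leaf is activated, show that on $A^c$ the total infection is $o(n)$, and show that the conditional expectation on $A$ is independent of $\bm{k}$ --- matches the paper exactly. The first two steps are fine.

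The gap is in your third step. You assert that, conditional on $A$, the set of activated tree nodes equals a deterministic $S^\ast$ with probability $1-o(1)$, and you set $\alpha_T=\sum_{t\in S^\ast\cap L_T}v(t)$. For critical and supercritical leaves this is correct (and the paper proves exactly that). But for \emph{subcritical} leaves it can fail. Once some leaf is activated, each subcritical leaf $t'$ receives a random number of ``induced seeds'' of order roughly $n\cdot (np)^r$ where $p$ is the relevant crossing weight; the bootstrap threshold inside $V(t')$ is of order $(|V(t')|\,w(t')^r)^{-1/(r-1)}$. Nothing in the hypotheses forces these two quantities to be asymptotically separated --- one can choose $w(t')$ and the crossing weight so that they are of the same order, in which case the probability that $t'$ is fully activated is bounded away from $0$ and $1$. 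The paper explicitly notes this: ``As for subcritical leaves, the number of infected vertices may vary.'' So you cannot in general reduce to a deterministic $S^\ast$, and the formula $\alpha_T=\sum_{t\in S^\ast\cap L_T}v(t)$ is not available.

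The paper circumvents this as follows. It first shows that all critical, supercritical, and ``vulnerable'' subcritical leaves are activated whenever $A$ occurs. For the remaining (``not-vulnerable'') subcritical leaves $t_1,\dots,t_M$, it does \emph{not} try to decide whether they activate. Instead it encodes the state at the moment all the easy leaves are full by vectors $\bm{\chi}^m$ counting, for each $t_m$, how many vertices have exactly $0,1,\dots,r$ infected outside neighbours, and proves that adding the $\Theta(1)$ original seeds to these counts shifts the joint distribution of $(\bm{\chi}^1,\dots,\bm{\chi}^M)$ by only $o(1)$ in total variation (via Poisson approximation and a KL/Pinsker bound). Since the remainder of the cascade is a deterministic function of this state and the yet-unrevealed internal edges, a $o(1)$ TV shift translates to an $o(n)$ change in expected infections. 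This is the argument you would need to replace your ``deterministic $S^\ast$'' step.
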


\begin{lemma}\label{lemmaB}
Let $\bm{k}$ be the seeding strategy that allocates all the $K$ seeds on a leaf $\displaystyle t^\ast\in\argmax_{t\in L_T}(\rho(t))$. Then $\bm{k}$ maximizes $P_{\bm{k}}$.
\end{lemma}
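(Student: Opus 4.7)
The plan is to decouple $P_{\bm{k}}$ into independent per-leaf contributions, and then use the supermodularity of the critical-regime survival probability $\zeta$ to show that concentrating all seeds on the densest leaf beats any splitting. For each leaf $t\in L_T$, let $q_t(k_t)$ denote the probability that $t$ is activated by the internal cascade run on $G[V(t)]\sim\calG(v(t)n,w(t))$ alone. The first step is to establish
\[
P_{\bm{k}} \;=\; 1-\prod_{t\in L_T}\bigl(1-q_t(k_t)\bigr)+o(1).
\]
The justification: before any leaf activates, Theorems~\ref{thm:gnp_sub} and~\ref{thm:gnp_boundary} (together with Corollary~\ref{cor:gnp_boundary}) bound the infected population inside each leaf by $O(1)$, and by Proper Separation the cross-leaf edge probabilities are $o(w(t))$, so the expected number of cross-leaf edges touching the $O(1)$-sized infected sets is $o(1)$. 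Hence cross-leaf edges cannot ignite an activation that the internal cascade alone would have missed, and leaves behave independently up to lower-order terms.

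Next I would classify each leaf $t$ as \emph{supercritical} ($w(t)=\omega(n^{-1/r})$), \emph{critical} ($w(t)=c_t n_t^{-1/r}(1+o(1))$), or \emph{subcritical} ($w(t)=o(n^{-1/r})$), which corresponds to $\rho(t)$ being $\omega(1)$, $c_t+o(1)$, or $o(1)$ respectively. By Theorems~\ref{thm:gnp_sup}, \ref{thm:gnp_boundary}, and~\ref{thm:gnp_sub}, $q_t(k_t)$ equals $1-o(1)$, $\zeta(k_t,c_t)+o(1)$, or $o(1)$ in these three cases, with $q_t\equiv 0$ whenever $k_t<r$. If $t^*$ is subcritical then every leaf is, so $P_{\bm{k}}=o(1)$ universally and the lemma is vacuous; similarly, if $K<r$ no vertex beyond the seed set can ever be infected. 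If $t^*$ is supercritical, placing $K\ge r$ seeds on $t^*$ already gives $P_{\bm{k}^*}=1-o(1)$, immediately optimal. So the interesting case is $t^*$ critical with $c^*:=\rho(t^*)\ge c_t$ for every critical leaf $t$.

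In that case, the random-walk representation in Theorem~\ref{thm:gnp_boundary} makes $\zeta(k,c)$ monotone increasing in $c$ (larger $c$ means stochastically larger Poisson increments $\xi_\ell$, hence a higher survival probability of $k+\tilde S_\ell$), so $\zeta(k_t,c_t)\le \zeta(k_t,c^*)$ for every critical leaf. It then suffices to bound the product. Applying Proposition~\ref{prop:main} inductively (its supermodular marginal-rate conclusion is exactly the statement that adding one more seed to the same Erd\H{o}s--R\'enyi graph boosts the full-infection probability by at least as much as the previous seed did) yields
\[
\prod_{t}\bigl(1-\zeta(k_t,c^*)\bigr)\;\ge\; 1-\zeta\!\Bigl(\textstyle\sum_t k_t,\;c^*\Bigr)\;\ge\; 1-\zeta(K,c^*),
\]
with the last step by monotonicity of $\zeta$ in $k$. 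Combining with the previous displays, $P_{\bm{k}}\le \zeta(K,c^*)+o(1)=P_{\bm{k}^*}+o(1)$, which is the claim.

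The main obstacle is making the decoupling step fully rigorous: cross-leaf contributions must be controlled \emph{throughout} the cascade, not only at time zero, because critical leaves may keep spawning fresh $O(1)$-sized clumps of infected vertices that fail to percolate (Corollary~\ref{cor:gnp_boundary}). A union bound over cascade time-steps and over pairs of leaves, combined with Proper Separation to dominate cross-leaf edge probabilities uniformly by $o(w(t))$, should suffice, but one must also rule out any synergistic effect from many leaves each holding fewer than $r$ seeds: their combined cross-leaf influence has to be shown to remain $o(1)$, which follows from the fixed total budget $K=\Theta(1)$ together with the $o(w(t))$ bound on every cross-leaf edge probability.
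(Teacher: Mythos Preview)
Your proposal follows essentially the same route as the paper. The decoupling step is exactly the content of the paper's Proposition~\ref{prop:isolate} (Phase~I/Phase~II split), and the classification of leaves, the corner cases, the monotonicity of $\zeta(k,c)$ in $c$, and the reliance on Proposition~\ref{prop:main} all match. One cosmetic difference: where the paper finishes with a local exchange argument (showing that any split $(K_1,K_2)$ across two critical leaves is beaten by either $(K_1+1,K_2-1)$ or $(K_1-1,K_2+1)$), you go directly to the product bound $\prod_t\bigl(1-\zeta(k_t,c^*)\bigr)\ge 1-\zeta(K,c^*)$. Both are equivalent consequences of the log-concavity in Proposition~\ref{prop:main} together with $\zeta(k,c^*)=0$ for $k<r$; your formulation is arguably cleaner.

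One imprecision worth flagging in your decoupling sketch: the sentence ``the expected number of cross-leaf edges touching the $O(1)$-sized infected sets is $o(1)$'' is not literally true. With $O(1)$ infected vertices, $\Theta(n)$ potential cross-leaf neighbours, and edge probability $o(n^{-1/r})$, the expected count of such edges is $o(n^{1-1/r})$, which is $\omega(1)$ for $r\ge 2$. What is actually $o(1)$ is the expected number of \emph{uninfected vertices that acquire $r$ infected neighbours once cross-leaf edges are revealed}: for such a vertex $v$ one needs at least one cross-leaf infected neighbour (since Phase~I already failed to give $v$ its $r$ in-leaf neighbours), and combining $a<r$ in-leaf infected neighbours at cost $O(n^{-a/r})$ with $r-a\ge 1$ cross-leaf infected neighbours at cost $o(n^{-(r-a)/r})$ gives a per-vertex probability of $o(1/n)$. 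This is precisely how the paper proves Proposition~\ref{prop:isolate}, and it is what your final paragraph is gesturing at; just be aware that the quick ``edge-count $o(1)$'' heuristic does not carry the argument.
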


Lemma~\ref{lemmaA} and Lemma~\ref{lemmaB} imply Theorem~\ref{THM:MAIN}.

\subsection{Proof Sketch of Lemma~\ref{lemmaA}}
We sketch the proof.  The full proof is in the appendix.
\begin{proof}[Proof (sketch)]
Let $E$ be the event that at least one leaf (or tree node) is activated at the end of the cascade.

In the case that $E$ does not happen, we show there are only $o(n)$ infected vertices in $V$, regardless of the seeding strategy.
First, Theorem~\ref{thm:gnp_sup} and Corollary~\ref{cor:gnp_boundary} imply that the number of infected vertices in a critical or supercritical leaf $t$, with high probability, can only be either a constant or $v(t)n$.  Because $E$ does not happen, it must be the former with high probability.
Second, Theorem~\ref{thm:gnp_sub} indicates that a subcritical leaf with a constant number of seeds will not have $\omega(1)$ infected vertices with high probability.
As there are only a constant number of infections in each of the critical or supercritical leaves, and we have only a constant number $K=\Theta(1)$ of seeds, this implies that there are also only a constant number of infections in subcritical leaves.

If $E$ happens, we can show that the expected total number of infected vertices does not vary significantly for different seeding strategies.
Consider two leaves $t_1,t_2$ with their least common ancestor $t$.
If the leaf $t_1$ is activated, we find a lower bound of the probability that a vertex $v\in V(t_2)$ is infected due to the influence of $V(t_1)$.
We assume without loss of generality that $w(t)=o(1/n)$, which can only further reduce $v$'s infection probability from the case when $w(t)$ is in $\Omega(1/n)$.
With this assumption, the probability that $v\in V(t_2)$ is infected by the vertices in $V(t_1)$ is
$$\binom{v(t_1)n}{r}w(t)^r(1-w(t))^{v(t_1)n-r}=\omega\left(n^r\left(\frac1{n^{1+\frac1r}}\right)^r\cdot1\right)=\omega\left(\frac1n\right),$$
where the first equality uses the assumption $w(t)=o(1/n)$ so that $(1-w(t))^{v(t_1)n-r}=\Omega(1)$.
Thus, with high probability, there are $\omega(1/n)\cdot\Theta(n)=\omega(1)$ infected vertices in $V(t_2)$.
Theorem~\ref{thm:gnp_sup} and Corollary~\ref{cor:gnp_boundary} show that $t_2$ will be activated with high probability if $t_2$ is critical or supercritical.
Therefore, when $E$ happens, all the critical and supercritical leaves will be activated.
As for subcritical leaves, the number of infected vertices may vary, but Theorem~\ref{thm:gnp_sub} intuitively suggests that adding a constant number of seeds is insignificant (we handle this rigorously in the full proof).
Therefore, the expected total number of infections equals to the number of vertices in all critical and supercritical leaves, plus the expected number of infected vertices in subcritical leaves which does not significantly depend on the seeding strategy $\bm{k}$.

In conclusion, the number of infected vertices only significantly depends on whether or not $E$ happens.
In particular, we have a fixed fraction of infected vertices whose size does not depend on $\bm{k}$ if $E$ happens, and a negligible number of infected vertices if $E$ does not happen.
Therefore, $P_{\bm{k}}$ characterizes $\Sigma_{r,T}(\bm{k})$, and a larger $P_{\bm{k}}$ implies a larger $\Sigma_{r,T}(\bm{k})$.
\end{proof}

\subsection{Proof of Lemma~\ref{lemmaB}}
We first handle some corner cases.
If $K<r$, then the cascade will not even start, and any seeding strategy is considered optimal.
If $T$ contains a supercritical leaf, the leaf with the highest density is also supercritical.
Putting all the $K\geq r$ seeds in this leaf, by Theorem~\ref{thm:gnp_sup}, will activate the leaf with probability $1-o(1)$.
Therefore, this strategy makes $P_{\bm{k}}=1$, which is clearly optimal.
In the remaining part of this subsection, we shall only consider $K\geq r$ and all the leaves are either critical or subcritical.
Notice that, by the proper separation assumption, all internal tree nodes of $T$ are subcritical.

We split the cascade process into two phases.
In Phase I, we restrict the cascade within the leaf blocks ($V(t)$ where $t\in L_T$), and temporarily assume there are no edges between two different leaf blocks (similar to if $w(t) = 0$ for all $t \not\in L_T$).  After Phase I, Phase II consists of the remaining cascade process.

Proposition~\ref{prop:isolate} shows that maximizing $P_{\bm{k}}$ is equivalent to maximizing the probability that a leaf is activated in Phase I.
Therefore, we can treat $T$ such that all the leaves, each of which corresponds to a $\calG(n,p)$ random graph, are isolated.

\begin{proposition}\label{prop:isolate}
If no leaf is activated after Phase I, then with probability $1-o(1)$ no vertex will be infected in Phase II, i.e., the cascade will end after Phase I.
\end{proposition}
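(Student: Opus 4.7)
The plan is to show that, conditioned on no leaf being activated after Phase~I, the Phase-I infected set $S_0$ has size $O(1)$ with probability $1-o(1)$, and then to union-bound the chance that any vertex acquires $r$ infected neighbors once cross-block edges are revealed in Phase~II. For the first piece, the running assumption that all leaves are critical or subcritical (no supercritical leaves exist in this regime) lets Theorem~\ref{thm:gnp_sub} bound the number of Phase-I infections in each subcritical leaf by a constant and Corollary~\ref{cor:gnp_boundary} force each critical leaf to either activate or retain only $O(1)$ infections. Since $|L_T| = O(1)$, conditioning on no leaf being activated gives $|S_0| = O(1)$ with probability $1-o(1)$.

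The rest of the argument exploits the fact that within-block and cross-block edges are independent, so $S_0$ is independent of the cross-block edges. By proper separation and the absence of supercritical leaves, every internal tree-node weight is $o(n^{-1/r})$; let $w_{\max}$ denote the maximum such weight. Fix an uninfected $v \in V(t)$ and let $d_v$ denote the number of within-block neighbors of $v$ that lie in $S_0$; because $v$ survived Phase~I we must have $d_v < r$. For $v$ to be newly infected at the first step of Phase~II it must acquire at least $r - d_v$ cross-block neighbors in $S_0$, which by a Binomial tail with $|S_0| = O(1)$ trials of per-trial probability at most $w_{\max}$ has probability at most $O\bigl(w_{\max}^{r-d_v}\bigr) = o\bigl(n^{-(r-d_v)/r}\bigr)$.

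I would then estimate $|\{v : d_v \ge d\}|$ for each $d \in \{0, 1, \ldots, r-1\}$ via a common-neighborhood concentration argument: for any fixed $d$-subset of $S_0 \cap V(t)$, the number of common within-block neighbors is Binomial with mean $\Theta(n\, w(t)^d) = O(n^{1-d/r})$, which diverges as $\Omega(n^{1/r})$ in critical leaves and is even smaller in subcritical leaves, so Chebyshev gives $|\{v : d_v \ge d\}| = O(n^{1-d/r})$ with probability $1-o(1)$. Combining these bounds, the expected number of new infections at the first step of Phase~II is at most
\[
  \sum_{d=0}^{r-1} O\bigl(n^{1-d/r}\bigr) \cdot o\bigl(n^{-(r-d)/r}\bigr) \;=\; \sum_{d=0}^{r-1} o\bigl(n^{1 - d/r - (r-d)/r}\bigr) \;=\; \sum_{d=0}^{r-1} o(1) \;=\; o(1),
\]
so by Markov's inequality, with probability $1-o(1)$ no vertex is infected at the first step of Phase~II; but then no vertex has $\ge r$ infected neighbors and the cascade halts immediately at $S_0$, which is the claim of the proposition.

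The step I expect to be the main obstacle is the common-neighborhood concentration: one must cleanly separate the randomness of within-block edges (which jointly determine $S_0$ and $d_v$) from that of cross-block edges (which determine the number of cross-block neighbors of $v$ in $S_0$) and verify that the common-neighborhood counts concentrate tightly enough to yield a high-probability bound rather than only an in-expectation one. I would handle this by revealing all within-block edges first (so that Phase~II reduces to a computation over independent cross-block edges) and then running a standard second-moment argument on the common-neighborhood counts.
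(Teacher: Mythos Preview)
Your overall architecture matches the paper's: reduce to $|S_0|=O(1)$ via Theorem~\ref{thm:gnp_sub} and Corollary~\ref{cor:gnp_boundary}, split each uninfected vertex's potential new neighbours into a within-block part and a cross-block part, bound the cross-block contribution by $o(n^{-(r-d)/r})$ using proper separation, combine, and finish with Markov. The paper does exactly this, phrasing the within-block piece as a per-vertex probability bound $\Pr(X_v\ge a)=O(n^{-a/r})$ rather than as your global count $|\{v:d_v\ge d\}|=O(n^{1-d/r})$; summed over $v$, the two are the same statement.

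The gap is in your within-block step, and it is precisely the obstacle you flag. Your sentence ``for any fixed $d$-subset of $S_0\cap V(t)$, the number of common within-block neighbours is Binomial'' is not justified: $S_0$ is itself a function of the within-block edges, so once you condition on a particular $d$-subset lying in $S_0$, the remaining edges to that subset are no longer fresh Bernoullis. Your proposed fix---reveal all within-block edges first, then run a second-moment argument on the common-neighbourhood counts---cannot work as stated, because after revealing the within-block edges those counts are deterministic; there is no randomness left for Chebyshev to act on. Nor can you run Chebyshev \emph{before} revealing, since a union bound over the $\binom{n}{d}$ candidate $d$-subsets is far too loose for second-moment tails.

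The paper resolves this by abandoning concentration of the count and instead proving the pointwise bound $\Pr(X_v\ge a\mid v\notin S_0)\le\Pr(\bar X_v\ge a)$, where $\bar X_v\sim\operatorname{Bin}(K_{\mathrm{in}},w(t))$. The key observation is that if $v\notin S_0$ then $S_0$ coincides with the infected set in the graph with $v$ deleted, so conditional on $G-v$ the edges from $v$ into $S_0$ really are i.i.d.\ Bernoulli$(w(t))$, and the only remaining conditioning is $X_v<r$; one then checks that $\Pr(\bar X_v\ge a\mid \bar X_v\le r-1)\le\Pr(\bar X_v\ge a)$. This stochastic-domination step is short and replaces your common-neighbourhood concentration entirely; with it in hand, linearity of expectation gives $\mathbb{E}|\{v:d_v\ge d\}|=O(n^{1-d/r})$ directly, and your final displayed sum goes through without needing any high-probability bound on the count.
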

We sketch the proof here, and the full proof is available in the appendix.
\begin{proof}[Proof (sketch)]
Consider any critical leaf $t$ and an arbitrary vertex $v\in V(t)$ that is not infected after Phase I.
Let $K_{in}$ be the number of infected vertices in $V(t)$ after Phase I, and $K_{out}$ be the number of infected vertices in $V\setminus V(t)$.
If no leaf is activated after Phase I, Theorem~\ref{thm:gnp_sub} and Corollary~\ref{cor:gnp_boundary} show that $K_{in}=O(1)$ and $K_{out}=O(1)$ with high probability.
The probability that $v$ is connected to any of the $K_{in}$ infected vertices in $V(t)$ can only be less than $w(t)=\Theta(n^{-1/r})$ conditioning on that the cascade inside $V(t)$ does not carry to $v$, so the probability that $v$ has $a<r$  infected neighbors in $V(t)$ is $O(n^{-a/r})$.
On the other hand, the probability that $v$ has $r-a$ neighbors among the $K_{out}$ outside infected vertices is $o(n^{-(r-a)/r})$. Therefore, the probability that $v$ is infected in the next iteration is $\sum_{a=0}^{r-1}O(n^{-a/r})\cdot o(n^{-(r-a)/r})=o(1/n)$, and the expected total number of vertices infected in the next iteration after Phase I is $o(1)$.
The proposition follows from the Markov's inequality.
\end{proof}

Since Theorem~\ref{thm:gnp_sub} shows that any constant number of seeds will not activate a subcritical leaf with high probability, we should only consider putting seeds in critical leaves.
In Proposition~\ref{prop:main}, we show that in a critical leaf $t$, the probability that the $(i+1)$-th seed will activate $t$ conditioning on the first $i$ seeds failing to do so is increasing as $i$ increases.
Intuitively, Proposition~\ref{prop:main} reveals a super-modular nature of the $r$-complex contagion on a critical leaf, making it beneficial to put all seeds together so that the synergy effect is maximized, which intuitively implies Lemma~\ref{lemmaB}.
The proof of Proposition~\ref{prop:main} is the most technical result of this paper, we will present it in Sect.~\ref{sec:coupling}.

\begin{proposition}[log-concavity of $\lim\limits_{n\rightarrow\infty}\Pr(E_k^n)$]\label{prop:main}
Consider an Erd\H{o}s-R\'{e}nyi random graph $\calG(n,p)$ with $p=cn^{-1/r}+o(n^{-1/r})$, and assume an arbitrary order on the $n$ vertices.
Let $E_k^n$ be the event that seeding the first $k$ vertices does not make all the $n$ vertices infected.
We have
$\lim\limits_{n\rightarrow\infty}\Pr(E_{k+2}^n\mid E_{k+1}^n)<\lim\limits_{n\rightarrow\infty}\Pr(E_{k+1}^n\mid E_k^n)$ for any $k\geq r-1$.
\end{proposition}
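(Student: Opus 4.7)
The plan is to translate Proposition~\ref{prop:main} into a log-concavity statement for the inhomogeneous random walk of Theorem~\ref{thm:gnp_boundary}, and then to establish that log-concavity by coupling four copies of the walk in a time-asynchronous way. For the reduction, because the final infected set of the $r$-complex contagion is monotone in the seed set on any fixed graph, $E_{k+1}^n\subseteq E_k^n$, so $\Pr(E_{k+1}^n\mid E_k^n)=\Pr(E_{k+1}^n)/\Pr(E_k^n)$. By Theorem~\ref{thm:gnp_boundary}, $h(j):=\lim_{n\to\infty}\Pr(E_j^n)=1-\zeta(j,c)$ equals $\Pr(\sigma_j<\infty)$, where $\sigma_j:=\min\{\ell:\tilde S_\ell=-j\}$ is the first hitting time of $-j$ by the walk $\tilde S_\ell=\sum_{i=1}^\ell(\xi_i-1)$. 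Proposition~\ref{prop:main} is then equivalent to the strict log-concavity $h(k+1)^2>h(k)\,h(k+2)$ for $k\ge r-1$.

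For the coupling, I would take two independent copies $\{\xi^{(A)}_i\}$ and $\{\xi^{(B)}_i\}$ of the Poisson sequence with partial sums $S^{(A)},S^{(B)}$, and form four walks $W^{(A)}_j:=j+S^{(A)}$ for $j\in\{k,k+1\}$ and $W^{(B)}_j:=j+S^{(B)}$ for $j\in\{k+1,k+2\}$. The two walks in each pair share noise, so they differ by a constant vertical shift; the two pairs are mutually independent. These are the four cascade processes. By independence of the pairs,
\[
h(k+1)^2=\Pr\!\bigl(\sigma^{(A)}_{k+1}<\infty,\ \sigma^{(B)}_{k+1}<\infty\bigr),\qquad
h(k)\,h(k+2)=\Pr\!\bigl(\sigma^{(A)}_{k}<\infty,\ \sigma^{(B)}_{k+2}<\infty\bigr).
\]
Setting $g(t):=\Pr(\text{walk at height }0\text{ at time }t\text{ ever hits }{-1})$ and applying the strong Markov property at $\sigma^{(\cdot)}_{k-1}$ gives $h(k)/h(k-1)=r(k):=\mathbb{E}[g(\sigma_{k-1})\mid\sigma_{k-1}<\infty]$; since $h(k+1)^2/[h(k)h(k+2)]=r(k+1)/r(k+2)$, the log-concavity reduces to the strict monotonicity $r(k+2)<r(k+1)$, i.e.\ $r(\cdot)$ strictly decreasing.

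To establish that strict monotonicity, two facts drive the argument: $g$ is \emph{strictly} decreasing in $t$, because the Poisson drift $\binom{t}{r-1}c^r$ is strictly increasing in $t$ for $t\ge r-1$; and on the coupled sample space $\sigma_k\ge\sigma_{k-1}+1$ almost surely whenever $\sigma_k<\infty$, so the walk reaches depth $k$ strictly later than it reaches depth $k-1$. The time-asynchronous swap grafts, across the two independent noise sequences, the $A$-walk's descent segment from $-(k-1)$ down to $-k$ with the $B$-walk's descent segment from $-k$ down to $-(k+1)$; executed at the appropriate random first-passage times, it maps a sample in $\{\sigma^{(A)}_{k-1}<\infty,\sigma^{(B)}_{k+1}<\infty\}$ into one in $\{\sigma^{(A)}_{k}<\infty,\sigma^{(B)}_{k}<\infty\}$, and the strict decrease of $g$ supplies a strict probability gain on a positive-measure subset of the joint sample space.

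The main obstacle will be making the swap measure-theoretically rigorous: the first-passage times $\sigma^{(A)}_{k-1}$ and $\sigma^{(B)}_k$ are random and in general distinct, and because the increment law at step $\ell$ differs from the law at step $\ell'\ne\ell$, a naive splice is \emph{not} measure preserving. The entire subtlety of the ``time-asynchronized'' coupling is to perform the splice at these two different random times while tracking the resulting Radon--Nikodym correction, and to combine that correction with the strictly-decreasing $g$ factor to extract the strict inequality $r(k+2)<r(k+1)$; a pointwise bound $g(\sigma_k)<g(\sigma_{k-1})$ is \emph{not} enough by itself, precisely because the conditioning events $\{\sigma_{k-1}<\infty\}$ and $\{\sigma_k<\infty\}$ differ by a likelihood-ratio factor proportional to $g(\sigma_{k-1})$.
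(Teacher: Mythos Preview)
Your reduction to strict log-concavity of $h(j)=\Pr(\sigma_j<\infty)$ via the random-walk representation of Theorem~\ref{thm:gnp_boundary} is correct and matches the paper. The reformulation $r(k)=h(k)/h(k-1)=\E[g(\sigma_{k-1})\mid\sigma_{k-1}<\infty]$ is also valid, and your diagnosis of the obstacle---that the pointwise bound $g(\sigma_k)<g(\sigma_{k-1})$ does not suffice because the conditioning events differ by a $g(\sigma_{k-1})$ likelihood factor---is exactly right.

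But the proposal stops at this obstacle: you leave the ``swap'' and its Radon--Nikodym bookkeeping as the open step, so the proof is incomplete. The paper resolves the difficulty by a different mechanism that avoids any change of measure. Rather than swapping descent segments, it views the four walks as two independent walks $A,B$ in $\Z^2$, starting at $(k+2,k)$ and $(k+1,k+1)$, each coordinate an independent copy of the 1D walk, with the walk sticking to an axis once it reaches it at the end of an iteration. The target becomes $\Pr(A\text{ hits }(0,0))<\Pr(B\text{ hits }(0,0))$. In Phase~I the two 2D walks are coupled identically; this ends either when their positions become symmetric about $y=x$ (then a mirror coupling makes the hitting events coincide) or when $A$ sticks to the $x$-axis at some iteration $\mathcal{T}_{\mathsf{skew}}$ while $B$ sits at $(x^A-1,1)$. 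In the second case, $A$'s $x$-coordinate continues to track $B$'s until $B$ reaches the $y$-axis at some $\mathcal{T}^*>\mathcal{T}_{\mathsf{skew}}$, at which moment $A$ is at $(1,0)$.

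The crucial step is \emph{stochastic domination}, not a swap: for $t\ge1$ one couples $\alpha^A(\mathcal{T}^*+t)\sim\po(\lambda(\mathcal{T}^*+t))$ with $\beta^B(\mathcal{T}_{\mathsf{skew}}+t)\sim\po(\lambda(\mathcal{T}_{\mathsf{skew}}+t))$ so that $\alpha^A(\mathcal{T}^*+t)\ge\beta^B(\mathcal{T}_{\mathsf{skew}}+t)$ almost surely, which is possible precisely because $\lambda(\cdot)$ is increasing. This forces $x^A(\mathcal{T}^*+t)\ge y^B(\mathcal{T}_{\mathsf{skew}}+t)$, so $A$ reaches the origin only if $B$ already did. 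Because $B$'s $x$- and $y$-increments are independent, $A$'s post-$\mathcal{T}^*$ increments (coupled to $B$'s earlier $y$-increments) are independent of $A$'s pre-$\mathcal{T}^*$ increments (coupled to $B$'s $x$-increments), so the marginal law of $A$ is preserved with no Radon--Nikodym correction. Strictness is then obtained from an explicit positive-probability scenario in which $B$ hits $(0,0)$ but $A$ does not.

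In short, the paper replaces your proposed segment swap by a one-sided, time-shifted Poisson domination; this is exactly what makes the ``time-asynchronous'' coupling go through cleanly, and it is the missing idea in your plan.
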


Equipped with Proposition~\ref{prop:main}, to show Lemma~\ref{lemmaB}, we show that the seeding strategy that allocates $K_1>0$ seeds on a critical leaf $t_1$ and $K_2>0$ seeds on a critical leaf $t_2$ cannot be optimal.
Firstly, it is obvious that both $K_1$ and $K_2$ should be at least $r$, for otherwise those $K_1$ ($K_2$) seeds on $t_1$ ($t_2$) are simply wasted.

Let $E_k^n$ be the event that the first $k$ seeds on $t_1$ fail to activate $t_1$ and $F_k^n$ be the event that the first $k$ seeds on $t_2$ fail to activate $t_2$.
By Proposition~\ref{prop:main}, we have
$\lim\limits_{n\rightarrow\infty}\Pr(E_{K_1+1}^n\mid E_{K_1}^n)<\lim\limits_{n\rightarrow\infty}\Pr(E_{K_1}^n\mid E_{K_1-1}^n)$
and
$\lim\limits_{n\rightarrow\infty}\Pr(F_{K_2+1}^n\mid F_{K_2}^n)<\lim\limits_{n\rightarrow\infty}\Pr(F_{K_2}^n\mid F_{K_2-1}^n)$,
which implies
\begin{align*}
    &\lim\limits_{n\rightarrow\infty}\frac{\Pr(E_{K_1+1}^n)\Pr(F_{K_2-1}^n)}{\Pr(E_{K_1}^n)\Pr(F_{K_2}^n)}\cdot\frac{\Pr(E_{K_1-1}^n)\Pr(F_{K_2+1}^n)}{\Pr(E_{K_1}^n)\Pr(F_{K_2}^n)}\\
    =&\lim\limits_{n\rightarrow\infty}\frac{\Pr(E_{K_1+1}^n\mid E_{K_1}^n)\Pr(F_{K_2+1}^n\mid F_{K_2}^n)}{\Pr(E_{K_1}^n\mid E_{K_1-1}^n)\Pr(F_{K_2}^n\mid F_{K_2-1}^n)}<1.
\end{align*}
Therefore, we have either $\lim\limits_{n\rightarrow\infty}\frac{\Pr(E_{K_1+1}^n)\Pr(F_{K_2-1}^n)}{\Pr(E_{K_1}^n)\Pr(F_{K_2}^n)}$ or $\lim\limits_{n\rightarrow\infty}\frac{\Pr(E_{K_1-1}^n)\Pr(F_{K_2+1}^n)}{\Pr(E_{K_1}^n)\Pr(F_{K_2}^n)}$ is less than $1$.
This means either the strategy putting $K_1+1$ seeds on $t_1$ and $K_2-1$ seeds on $t_2$, or the strategy putting $K_1-1$ seeds on $t_1$ and $K_2+1$ seeds on $t_2$ makes it more likely that at least one of $t_1$ and $t_2$ is activated.
Therefore, the strategy putting $K_1$ and $K_2$ seeds on $t_1$ and $t_2$ respectively cannot be optimal.  This implies an optimal strategy should not allocate seeds on more than one leaf.

Finally, a critical leaf $t$ with $v(t)n$ vertices and weight $w(t)$ can be viewed as an Erd\H{o}s-R\'{e}nyi random graph $\calG(m,p)$ with $m=v(t)n$ and
$p=w(t)=\rho(t)\cdot (v(t)n)^{-1/r}=\rho(t)m^{-1/r}$, where $\rho(t)=\Theta(1)$ when $t$ is critical.
Taking $c=\rho(t)$ in Theorem~\ref{thm:gnp_boundary}, we can see that $\xi_\ell$ has a larger Poisson mean if $c$ is larger, making it more likely that the $\calG(m,p)$ is fully infected (to see this more naturally, larger $c$ means larger $p$ if we fix $m$).
Thus, given that we should put all the $K$ seeds in a single leaf, we should put them on a leaf with the highest density.
This concludes Lemma~\ref{lemmaB}.

\section{Proof for Proposition~\ref{prop:main}}\label{sec:coupling}

Since the event $E_{k+1}^n$ implies $E_k^n$, we have
$\Pr(E_{k+1}^n|E_k^n)=\Pr(E_{k+1}^n)/\Pr(E_k^n)$.  Therefore, the inequality we are proving is equivalent to
$
\lim\limits_{n\rightarrow\infty}\Pr(E_{k+2}^n)/\Pr(E_{k+1}^n)<\lim\limits_{n\rightarrow\infty}\Pr(E_{k+1}^n)/\Pr(E_{k}^n)$,
and it suffices to show that
\begin{equation}\label{eqn:prop1}
\lim\limits_{n\rightarrow\infty}\Pr(E_{k+2}^n)\lim\limits_{n\rightarrow\infty}\Pr(E_k^n)<\lim\limits_{n\rightarrow\infty}\Pr(E_{k+1}^n)\lim\limits_{n\rightarrow\infty}\Pr(E_{k+1}^n).
\end{equation}
Proposition~\ref{prop:main} shows that the failure probability, $\lim\limits_{n\rightarrow\infty}\Pr(E_k^n)$, is logarithmically concave with respect to $k$.

The remaining part of the proof is split into four parts:  In Sect.~\ref{sect:prop11}, we begin by translating Eqn.~\eqref{eqn:prop1} in the language of inhomogeneous random walks.  In Sect.~\ref{sect:prop12}, we present a coupling of two inhomogeneous random walks to prove Eqn.~\eqref{eqn:prop1}.   In Sect.~\ref{sect:prop13}, we prove the validity of the coupling. in Sect.~\ref{sect:prop14}, we finally show the coupling implies Eqn.~\eqref{eqn:prop1} .

\subsection{Inhomogeneous random walk interpretation}\label{sect:prop11}
We adopt the inhomogeneous random walk interpretation from Theorem~\ref{thm:gnp_boundary}, and view the event $E_k^n$ as the following:
The random walk starts at $x=k$; in the $i$-th iteration, $x$ moves to the left by $1$ unit, and moves to the right by $\alpha(i)\sim\po\left(\binom{i-1}{r-1}c^r\right)$ units; let $\mathcal{E}_k$ be the event that the random walk reaches $x = 0$.  By Theorem~\ref{thm:gnp_boundary}, $\Pr(\mathcal{E}_k) = \lim\limits_{n\rightarrow\infty}\Pr(E_k^n)$.  Thus,  $\lim\limits_{n\rightarrow\infty}\Pr(E_{k+2}^n)\lim\limits_{n\rightarrow\infty}\Pr(E_k^n) = \Pr(\mathcal{E}_{k+2})\Pr(\mathcal{E}_{k})$.  In this proof, we let $\lambda(i)=\binom{i-1}{r-1}c^r$, and in particular, $\lambda(0)=\lambda(1)=\cdots=\lambda(r-1)=0$.  Note that as $i$ increases, the expected movement of the walk increases, and make it harder to reach $0$.  This observation is important for our proof.

To compute $\Pr(\mathcal{E}_{k+2})\Pr(\mathcal{E}_{k})$, we consider the following process. A random walk in $\Z^2$ starts at $(k+2, k)$.  In each iteration $i$, the random walk moves from $(x,y)$ to $(x-1+\alpha(i),y-1+\beta(i))$ where $\alpha(i)$ and $\beta(i)$ are sampled from $\po(\lambda(i))$ independently.
If the random walk hits the axis $y = 0$ after a certain iteration $\mathcal{T}$, then it is stuck to the axis, i.e., for any $i>\mathcal{T}$, the update in the $i$-th iteration is from $(x,0)$ to $(x-1+\alpha(i),0)$; similarly, after reaching the axis $x = 0$, the random walk is stuck to the axis $x = 0$ and updates to $(0,y-1+\beta(i))$.
Then, $\Pr(\mathcal{E}_{k+2})\Pr(\mathcal{E}_{k})$ is the probability that the random walk starting from $(k+2,k)$ reaches $(0,0)$.

To prove \eqref{eqn:prop1}, we consider two random walks in $\Z^2$ defined above.  Let $A$ be the random walk starting from $(k+2,k)$, and let $B$ be the random walk starting from $(k+1,k+1)$.  Let $H_A$ and $H_B$ be the event that $A$ and $B$ reaches $(0,0)$ respectively.  To prove (\ref{eqn:prop1}), it is sufficient to show:
$$\Pr(H_A)<\Pr(H_B).$$
To formalize this idea, we define a coupling between $A$ and $B$ such that:  1) whenever $A$ reaches $(0,0)$, $B$ also reaches $(0,0)$, and 2) with a positive probability, $B$ reaches $(0,0)$ but $A$ never does.

In defining the coupling, we use the properties of splitting and merging of Poisson processes~\cite{bertsekas2002introduction}. We reinterpret the random walk by breaking down each \emph{iteration} $i$ into $J(i)$ \emph{steps} such that it is symmetric in the $x$- and $y$-directions (with respect to the line $y = x$) and the movement in each step is ``small''.

\vspace{4pt}

\noindent If at the beginning of iteration $i$ the process is at $(x,y)$ with $x>0$ and $y>0$:
\begin{itemize}
    \item At step~$0$ of iteration $i$, we sample $J(i)\sim \po(2\lambda(i))$, set $(\alpha(i,0), \beta(i,0)) = (-1,-1)$, and update $(x,y)\mapsto (x+\alpha(i,0),y+\beta(i,0))$;
    \item At each step~$j$ for $j=1,\ldots,J(i)$, $(\alpha(i,j), \beta(i,j)) = (1,0)$ with probability $0.5$, and $(\alpha(i,j), \beta(i,j)) = (0,1)$ otherwise.  Update $(x,y)\mapsto (x+\alpha(i,j),y+\beta(i,j))$.\footnote{Standard results from Poisson process indicate that, $\sum_{j =1}^{J(i)} \alpha(i,j)\sim\po(\lambda(i))$, and $\sum_{j =1}^{J(i)} \beta(i,j)\sim\po(\lambda(i))$ which are two independent Poisson random variables.}
\end{itemize}
On the other hand, if $x = 0$ (or $y = 0$) at the beginning of iteration:
\begin{itemize}
    \item At step~$0$ of iteration $i$, we sample $J(i)\sim \po(2\lambda(i))$, set $\big(\alpha(i,0), \beta(i,0)\big) = (0,-1)$ (or $(-1,0)$ if $y=0$), and update $(x,y)\mapsto \big(x+\alpha(i,0),y+\beta(i,0)\big)$;
    \item At each step~$j$ for $j=1,\ldots,J(i)$, with probability $0.5$ $\big(\alpha(i,j), \beta(i,j)\big) = (1,0)$, (or $\big(\alpha(i,j), \beta(i,j)\big) = (0,1)$) and $(\alpha(i,j), \beta(i,j)) = (0,0)$, otherwise.  Update $(x,y)\mapsto \big(x+\alpha(i,j),y+\beta(i,j)\big)$.
\end{itemize}
If at the end of iteration $i$, $(x,y) = (0,0)$ we stop the process.

Notice that we only switch from one type of iteration to the other  if $x=0$ (or $y=0$) at the \emph{end} of an iteration $i$.  Here way say the random walk is stuck to the axis $x = 0$ (or the axis $y= 0$).  If this happens, it will be stuck to this axis forever. Also, notice that in each step we have at most $1$ unit movement.  Also, in steps $j=1,\ldots,J(i)$ the walk can only move further away from both axes $y= 0$ and $x = 0$.

Let $\big(x(i,j),y(i,j)\big)$ be the position of the random walk after iteration $i$ step $j$, and $\big(x(i), y(i)\big)$ be its position at the end of iteration $i$.
Moreover, let $\alpha(i) = \sum_{j = 1}^{J(i)}\alpha(i,j)$ be the net movement in $x$~direction during iteration~$i$ excluding the movement in Step~$0$.
Let $\bar{\alpha}(i) = \alpha(i)+\alpha(i,0)$ be the net movement including movement in iteration $i$. Similarly define $y$-directional movements $\beta(i)=\sum_{j = 1}^{J(i)}\beta(i,j)$ and $\bar{\beta}(i)$.

\subsection{The coupling}\label{sect:prop12}
We want to show that the probability of $A$ reaching the origin is less that of $B$.  To this end, we create a coupling between the two walks, which we outline here.  Fig.~\ref{fig:coupling_a} and Fig.~\ref{fig:coupling_b}  illustrate most aspects of this coupling.
In the description of the coupling, we will let $B$ move ``freely'', and define how $A$ is ``coupled with'' $B$.

Recall that $A$ starts at $(k+2,k)$ and $B$ starts at $(k+1,k+1)$.  At the beginning, we set $A$'s movement to be identical to $B$'s.  Before one of them hits the origin, either of the following two events must happen: $A$ and $B$ become symmetric to the line $x=y$ at some step, $\mathcal{E}_{\textsf{symm}}$,  or $A$ reaches the axis $y = 0$ at the end of some iteration, $\mathcal{E}_{\textsf{skew}}$.  This is called \textsf{Phase I} and is further discussed in Sect.~\ref{sect:phase1}.

In the first case $\mathcal{E}_{\textsf{symm}}$, the positions of $A$ and $B$ are symmetric.  We set $A$'s movement to mirror $B$'s movement.  Therefore, in this case, $A$ and $B$ will both hit the origin, or neither of them will.  This is called \textsf{Phase II Symm} and is further discussed in Sect.~\ref{sect:symm}.

For the latter case $\mathcal{E}_{\textsf{skew}}$, $A$ reaches the axis $y=0$ at iteration $\mathcal{T}_{\textsf{skew}}$.  We call the process is in \textsf{Phase II Skew} and further discussed in Sect.~\ref{sect:skew}. Because $B$ starts one unit above $A$ and one unit to the left of $A$, at iteration $\mathcal{T}_{\textsf{skew}}$, $B$ is at the axis $y = 1$ and one unit to the left of $A$.
Next we couple $A$'s movement in the $x$-direction to be identical to $B$'s, so that $B$ is always one unit to the left of $A$.   This coupling continues unless $B$ hits the axis $x = 0$.  Denote this iteration $\mathcal{T}^*$.  At time $\mathcal{T}^*$, $A$ is one unit to the right of the axis $x = 0$.  Recall that at iteration $\mathcal{T}_{\textsf{skew}}$ when $\mathcal{E}_{\textsf{skew}}$ happens, $B$ is one unit above the axis so that $y = 1$.  Therefore, we can couple the movement of $A$ in the $x$-direction after iteration $\mathcal{T}^*$ with $B$'s movement in the $y$-direction after iteration $\mathcal{T}_{\textsf{skew}}$.  Because $\lambda(i)$ increases with $i$, we can couple the walks in such a way as to ensure that $A$ moves toward the origin at a strictly slower rate than $B$ does.  Therefore, $A$ only reaches the y-axis $x = 0$ if $B$ reaches the x-axis $y = 0$, and we have shown that $A$ is less likely to reach the origin than $B$ does.

Let $\big(x^A(i,j),y^A(i,j)\big)$, and $\big(x^B(i,j),y^B(i,j)\big)$ be the coordinates for $A$ and $B$ respectively after iteration $i$ step $j$.
Similarly, let $J^A(i)$ and $J^B(i)$ be the number of steps for $A$ and $B$ in iteration $i$.
Let $\alpha^A(i,j)$ and $\alpha^B(i,j)$ be the $x$-direction movements of both walks in iteration $i$ step $j$, and $\beta^A(i,j)$ and $\beta^B(i,j)$ be the corresponding $y$-direction movements.

\subsubsection{Phase I}\label{sect:phase1}
Starting with $\big(x^A(0), y^A(0)\big) = (k+2, k)$ and $\big(x^B(0), y^B(0)\big) = (k+1, k+1)$, $A$ moves in exactly the same way as $B$, i.e., $J^A(i) = J^B(i)$, $\alpha^A(i,j)=\alpha^B(i,j)$ and $\beta^A(i,j)=\beta^B(i,j)$, until one of the following two events happens.
    \begin{description}
        \item[Event $\mathcal{E}_{\textsf{symm}}$] The current positions of $A$ and $B$ are symmetric with respect to the line $y=x$, i.e., $x^A(i,j)-x^B(i,j)=y^B(i,j)-y^A(i,j)$ and $x^A(i,j)+x^B(i,j)=y^A(i,j)+y^B(i,j)$. Notice that $\mathcal{E}_{\textsf{symm}}$ may happen in some middle step $j$ of an iteration $i$. When $\mathcal{E}_{\textsf{symm}}$ happens, we move on to \textsf{Phase II Symm}.
        \item[Event $\mathcal{E}_{\textsf{skew}}$] $A$ hits the axis $y= 0$ \emph{at the end of an iteration}. Notice that this means $A$ is then stuck to the axis $y= 0$ forever. When $\mathcal{E}_{\textsf{skew}}$ happens, we move on to \textsf{Phase II Skew}.  Note that $B$ is one unit away from the axis $y= 0$, $y^B = 1$.
        We remark that the in the third part we show, if event $\mathcal{E}_{\textsf{skew}}$ happens, $B$ has a higher chance to reach $(0,0)$ than $A$.
    \end{description}

The following three claims will be useful.
\begin{claim}
$A$ is always below the line $y=x$ before $\mathcal{E}_{\textsf{symm}}$ happens, so $A$ will never hit the axis $x = 0$ in \textsf{Phase I}.
\end{claim}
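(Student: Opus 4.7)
The plan is to reduce everything to tracking a single integer quantity $D := x^A - y^A$ and showing it stays at $\ge 2$ throughout \textsf{Phase I} until the moment $\mathcal{E}_{\textsf{symm}}$ occurs. Because in \textsf{Phase I} the coupling forces $A$ to take exactly the same step as $B$, I first note the invariants $x^A - x^B = 1$ and $y^A - y^B = -1$, which hold at each sub-step from initialization provided neither walk is prematurely stuck to an axis (I verify this at the end). Using these invariants, the defining condition of $\mathcal{E}_{\textsf{symm}}$, namely $x^A + x^B = y^A + y^B$, is equivalent to $(x^A - y^A) + (x^B - y^B) = 0$, which simplifies to $D = 1$; the other required equation $x^A - x^B = y^B - y^A$ is automatic under the invariants.

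Next I would track $D$ within a single iteration. At the start of \textsf{Phase I}, $D = (k+2) - k = 2$. Assuming $x^A, y^A > 0$ at the start of iteration $i$, Step~$0$ moves $A$ by $(-1,-1)$ and leaves $D$ unchanged; each subsequent step $j = 1,\ldots,J(i)$ moves $A$ by $(1,0)$ or $(0,1)$, changing $D$ by exactly $\pm 1$. Thus $D$ evolves on the integers with increments in $\{-1,0,+1\}$, and the first sub-step at which it drops to $1$ is precisely the first sub-step at which $\mathcal{E}_{\textsf{symm}}$ fires. Before that moment, $D \ge 2$, which is exactly the statement that $A$ is strictly below the line $y = x$, and in particular $x^A \ge y^A + 2 \ge 2 > 0$, giving the second half of the claim.

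To close the argument I need to check that the identical-movement coupling is never broken by an axis-stuck transition inside \textsf{Phase I}. Using $D \ge 2$ and $y^A \ge 0$, I get $x^A \ge 2$ and hence $x^B = x^A - 1 \ge 1$, while $y^B = y^A + 1 \ge 1$. At the \emph{end} of an iteration, either the iteration still has $y^A \ge 1$ (so \textsf{Phase I} continues with both walks in the $(x,y>0)$ regime at the start of the next iteration), or $y^A = 0$, which by definition triggers $\mathcal{E}_{\textsf{skew}}$ and ends \textsf{Phase I}. In particular $B$ cannot get stuck at either axis before $\mathcal{E}_{\textsf{symm}}$ or $\mathcal{E}_{\textsf{skew}}$, so the $(-1,-1)$ Step~$0$ rule and the lockstep coupling remain valid. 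Combining all of this yields both conclusions: $A$ is strictly below $y = x$ and $A$ never touches the axis $x = 0$ while in \textsf{Phase I}.

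I do not expect any serious obstacle; the proof is essentially a one-invariant bookkeeping argument. The only subtle point is verifying that the coupling is internally consistent — i.e.\ that the identical-movement rule is not silently violated by $B$ reaching an axis before one of the two named events — which is exactly the role played by the bound $D \ge 2$ together with the check $x^B, y^B \ge 1$.
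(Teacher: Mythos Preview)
Your proposal is correct and is essentially the paper's argument recast in terms of the scalar invariant $D = x^A - y^A$: both proofs use that the only allowed sub-step movements are $(-1,-1)$, $(1,0)$, $(0,1)$, so $D$ changes by at most $1$ per sub-step and must pass through the value $1$ (which you correctly identify as equivalent to $\mathcal{E}_{\textsf{symm}}$ under the Phase~I invariants) before reaching the diagonal. Your third paragraph, verifying that neither walk gets stuck to an axis and hence the lockstep coupling remains valid, is a welcome extra check that the paper defers to its two subsequent claims.
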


\begin{proof}
To see this, $A$ can only have four types of movements in each step: lower-left $(x,y)\mapsto(x-1,y-1)$, up $(x,y)\mapsto(x,y+1)$, and right $(x,y)\mapsto(x+1,y)$. It is easy to see that, 1) $A$ will never step across the line $y=x$ in one step, and 2) if $A$ ever reaches the line $y=x$ at $(w,w)$ for some $w$, then $A$ must be at $(w,w-1)$ in the previous step. However, when $A$ is at $(w,w-1)$, $B$ should be at $(w-1,w)$ according to the relative position of $A,B$. In this case event $\mathcal{E}_{\textsf{symm}}$ already happens.
\end{proof}

\begin{claim}
$\mathcal{E}_{\textsf{symm}}$ and $\mathcal{E}_{\textsf{skew}}$ cannot happen simultaneously.
\end{claim}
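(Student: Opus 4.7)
The plan is to reduce $\mathcal{E}_{\textsf{symm}}$ to a one-dimensional condition on $A$ alone by exploiting the fact that throughout Phase~I the walks are moved identically, so the offset $(x^B - x^A, y^B - y^A)$ stays constantly equal to $(-1, 1)$, as inherited from the initial positions $(k+2, k)$ and $(k+1, k+1)$. Substituting this invariant into the defining equations $x^A - x^B = y^B - y^A$ and $x^A + x^B = y^A + y^B$ collapses the symmetry condition to $y^A = x^A - 1$, i.e., $A$ lies on the subdiagonal. Since $\mathcal{E}_{\textsf{skew}}$ is by definition $y^A = 0$ at the end of some iteration $i$, the two events can co-occur only if $A = (1, 0)$ (and hence $B = (0, 1)$) at that moment.

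I would then track the single scalar $D := y^A - x^A$ within iteration $i$. Step~$0$ of an iteration is a $(-1,-1)$ move so $D$ is unchanged; every subsequent step is either a right move $(1,0)$ or an up move $(0,1)$, so $D$ changes by exactly $\pm 1$. Two monotonicity facts will be used: first, by the preceding Claim $A$ stays strictly below $y = x$ throughout Phase~I, so $D \le -1$; second, because $\mathcal{E}_{\textsf{symm}}$ has not yet been triggered when iteration $i$ begins (otherwise Phase~I would already be over), we must actually have $D \le -2$ at the start of iteration $i$, and the same bound still holds immediately after step~$0$ (which leaves $D$ fixed).

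The key tension is then immediate. For $\mathcal{E}_{\textsf{skew}}$ to hold at the end of iteration $i$, we need $y^A = 0$ at the end; since $y^A$ is decreased by $1$ at step~$0$ and is non-decreasing across steps $1,\ldots,J(i)$, this forces $y^A = 1$ at the start of iteration $i$ and, crucially, every substep $j \ge 1$ to be a right move. For $\mathcal{E}_{\textsf{symm}}$ to hold simultaneously at the end, $D$ must rise from a value $\le -2$ (after step~$0$) up to $-1$, which demands at least one $+1$ change, i.e.\ at least one up move, among the substeps, contradicting the previous requirement. The edge case $J(i) = 0$ is handled by the same scalar: then $D$ never changes during iteration $i$, so $D = -1$ at the end forces $D = -1$ at the start, again contradicting that Phase~I was still ongoing entering iteration~$i$.

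I do not anticipate any real obstacle; the whole argument rests on the offset invariance and the sign-structure of the per-step moves. The most error-prone part is tidying up the boundary cases cleanly, namely separating step~$0$ from later substeps when locating the step at which $\mathcal{E}_{\textsf{symm}}$ could first fire, and dispatching the $J(i) = 0$ iteration, so that the first-hitting-time nature of $\mathcal{E}_{\textsf{symm}}$ is correctly used.
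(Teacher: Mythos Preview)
Your proposal is correct and follows essentially the same approach as the paper: both arguments use the Phase~I offset invariance to reduce simultaneous occurrence to $A=(1,0)$, $B=(0,1)$ at the end of some iteration~$i$, and then derive a contradiction by examining the state at the start of that iteration. The paper phrases this by explicitly pinning down $A=(2,1)$ at the end of iteration $i-1$ and observing that $\mathcal{E}_{\textsf{symm}}$ already fired there, whereas you package the same contradiction via the scalar $D=y^A-x^A$ and the incompatible substep requirements; this is just a cleaner bookkeeping of the identical idea.
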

\begin{proof}
Suppose $\mathcal{E}_{\textsf{symm}}$ and $\mathcal{E}_{\textsf{skew}}$ happen at the same time, then it must be that $A$ is at $(1,0)$ and $B$ is at $(0,1)$, as the relative position of $A$ and $B$ is unchanged in \textsf{Phase I}, and this must be at the end of a certain \emph{iteration}.
In the previous iteration, $A$ must be at $(2,1)$, since $\mathcal{E}_{\textsf{skew}}$ did not happen yet and $A$ is below the line $y=x$.
However, $B$ is at $(1,2)$ when $A$ is at $(2,1)$, implying that case $\mathcal{E}_{\textsf{symm}}$ has already happened in the previous iteration, which is a contradiction.
\end{proof}

\begin{claim}
$B$ cannot reach the axis $x = 0$ before either $\mathcal{E}_{\textsf{symm}}$ or $\mathcal{E}_{\textsf{skew}}$ happen.
\end{claim}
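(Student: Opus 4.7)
The plan is to exploit two facts already in hand: in \textsf{Phase I} the two walks are coupled to move identically, which preserves the offset $(x^A,y^A)-(x^B,y^B)=(1,-1)$ throughout; and by the first of the two preceding claims, before $\mathcal{E}_{\textsf{symm}}$ the walk $A$ stays strictly below the diagonal $y=x$. I will argue by contradiction: suppose $B$ reaches the axis $x=0$ strictly earlier than both events.

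First I would observe that within the decomposed iteration only Step~$0$ ever decreases the $x$-coordinate, since Steps $1,\ldots,J(i)$ contribute moves in $\{(1,0),(0,1)\}$ only. Hence $B$'s first hit of $x=0$ must occur at the end of Step~$0$ of some iteration $i$, with $x^B(i-1)=1$, and consequently $x^A(i-1)=2$ at the end of iteration $i-1$. Next I would pin down $y^A(i-1)$: the constraint $y^A<x^A=2$ from the first claim forces $y^A(i-1)\in\{0,1\}$, while the assumed non-occurrence of $\mathcal{E}_{\textsf{skew}}$ rules out $y^A(i-1)=0$ (otherwise $A$ would already have landed on the axis $y=0$ at the end of iteration $i-1$). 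Therefore $y^A(i-1)=1$, locating $A$ at $(2,1)$ and $B$ at $(1,2)$ at the end of iteration $i-1$.

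Finally I would apply the Step~$0$ displacement $(-1,-1)$ to both walks, sending $A$ to $(1,0)$ and $B$ to $(0,1)$. These two positions are mirror images across the line $y=x$, so $\mathcal{E}_{\textsf{symm}}$ fires at exactly the same substep at which $B$ first reaches $x=0$. This contradicts the assumption that $B$ reached the axis strictly before either event, completing the claim.

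I do not anticipate a real obstacle; the argument is a short case analysis driven entirely by the fixed offset of the two coupled walks and by the fact that the only $x$-decreasing move in an iteration is the scheduled $(-1,-1)$ at Step~$0$. The two items requiring small care are verifying that $B$'s $x$-coordinate cannot decrease during Steps $1,\ldots,J(i)$, and interpreting ``before'' so that a \emph{simultaneous} firing of $\mathcal{E}_{\textsf{symm}}$ already suffices to block the ``strictly before'' scenario that the claim forbids.
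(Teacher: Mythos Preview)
Your argument is correct and rests on exactly the same two ingredients the paper uses---the preserved offset $(x^A,y^A)=(x^B+1,y^B-1)$ and the fact that $A$ stays strictly below the line $y=x$---though the paper packages them as a two-case split on which of $\mathcal{E}_{\textsf{symm}}$ or $\mathcal{E}_{\textsf{skew}}$ occurs first rather than as a direct contradiction. One small simplification: you need not carry the walks forward into Step~0 of iteration~$i$; the configuration $A=(2,1)$, $B=(1,2)$ you reach at the end of iteration $i-1$ is already symmetric about $y=x$, so $\mathcal{E}_{\textsf{symm}}$ in fact fires strictly before $B$ first touches $x=0$, and your worry about handling a ``simultaneous'' firing is moot.
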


\begin{proof}
If $\mathcal{E}_{\textsf{symm}}$ happens before $\mathcal{E}_{\textsf{skew}}$, $B$ cannot reach the axis $x = 0$ before $\mathcal{E}_{\textsf{symm}}$ as $A$ is always below the line $y=x$ and $B$ is always on the upper-left diagonal of $A$.
If $\mathcal{E}_{\textsf{skew}}$ happens before $\mathcal{E}_{\textsf{symm}}$, $B$ cannot reach the axis $x = 0$ before $\mathcal{E}_{\textsf{skew}}$, or even by the time $\mathcal{E}_{\textsf{skew}}$ happens: by the time $\mathcal{E}_{\textsf{skew}}$ happens, $A$ can only at one of $(2,0),(3,0),(4,0),\ldots$ ($A$ cannot be at $(1,0)$, for otherwise $\mathcal{E}_{\textsf{symm}}$ and $\mathcal{E}_{\textsf{skew}}$ happen simultaneously, which is impossible as shown just now), in which case $B$ will not be at the axis $x = 0$.
\end{proof}

\begin{figure}
\centerline{\includegraphics[height=4.5cm]{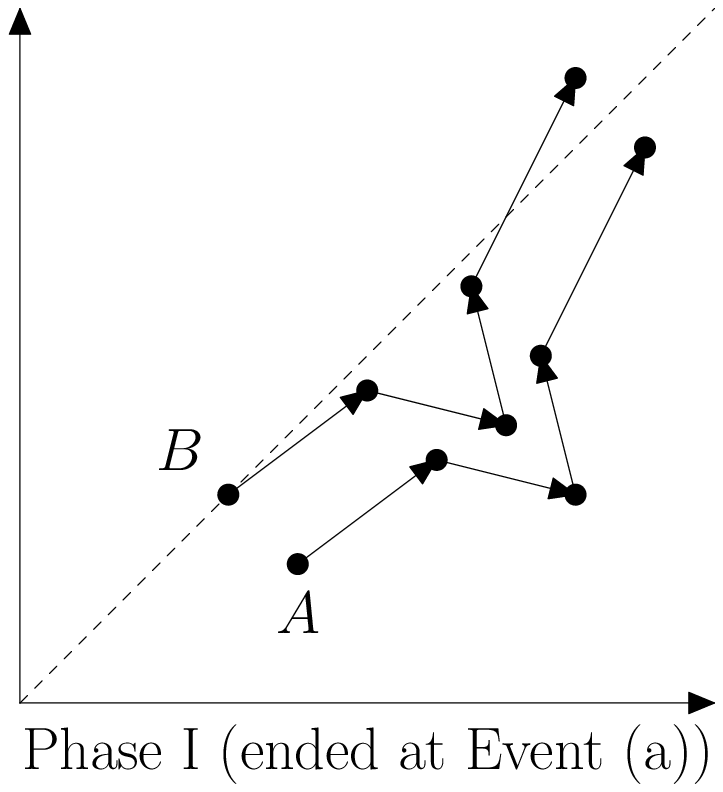}\hspace{1cm}\includegraphics[height=4.5cm]{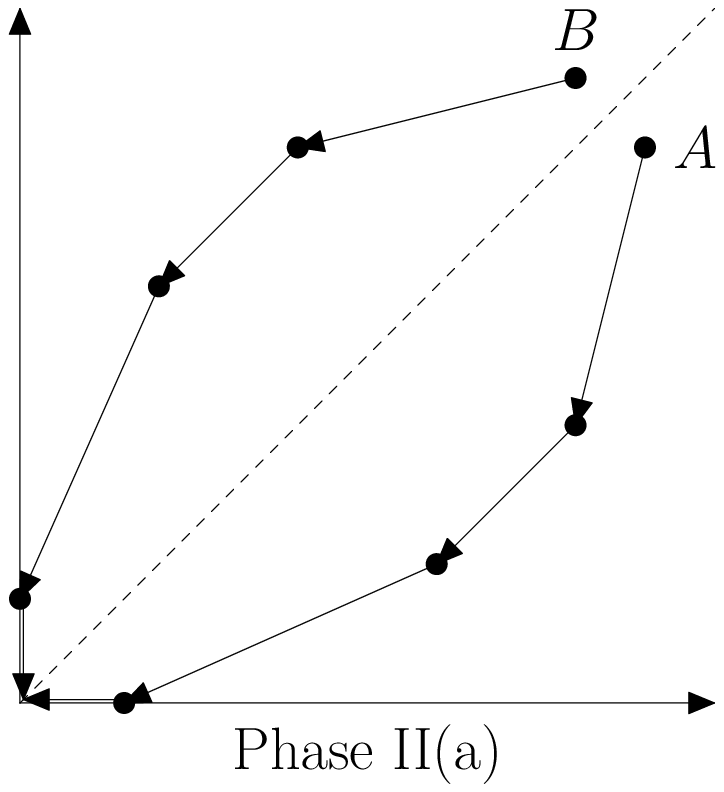}}
\caption{The coupling with \textsf{Phase I} ended at Event $\mathcal{E}_{\textsf{symm}}$}
\label{fig:coupling_a}
\end{figure}

\subsubsection{Phase II Symm}\label{sect:symm}
Let $A$ move in a way that is symmetric to $B$ with respect to the line $y=x$: $J^A(i) = J^B(j)$, $\alpha^A(i,j)=\beta^B(i,j)$ and $\beta^A(i,j)=\alpha^B(i,j)$.
Notice that, in \textsf{Phase II Symm}, $A$ may cross the line $y=x$, after which $A$ is above the line $y=x$ while $B$ is below.
\subsubsection{Phase II Skew}\label{sect:skew}
If event $\mathcal{E}_{\textsf{skew}}$ happens, we need a more complicated coupling.
Suppose \textsf{Phase II Skew} starts after iteration $\mathcal{T}_{\textsf{skew}}$.
Here we use $\mathcal{T}_{S}^A$ ( and $\mathcal{T}_{S}^B$) to denote the hitting time of $A$ (and $B$) to a set of states $S$ which is the first iteration of the process into the set $S\subseteq \Z^2$.
For example $i = \mathcal{T}^B_{y = 1}$ is the hitting time of $B$ such that $y^B(i) = 1$.  Here we list six relevant hitting times and their relationship.  $$\mathcal{T}_{\textsf{skew}} =  \mathcal{T}^B_{y = 1} = \mathcal{T}^A_{y = 0} <\mathcal{T}^B_{y = 0}\text{, and }\mathcal{T}_{\textsf{skew}} <\mathcal{T}^B_{x = 0} = \mathcal{T}^A_{x = 1}<\mathcal{T}^A_{x = 0}.$$

Recall that we have defined the coupling such that $B$ moves freely.
For $A$, we first let the $x$-direction movement of $A$ be the same with that of $B$. To be specific,  in each iteration $\mathcal{T}_{\textsf{skew}}<i\le \mathcal{T}^B_{x = 0}$, set $J^A(i) = J^B(i)$.  At step $j$, we set  $\alpha^A(i,j)=\alpha^B(i,j)$ and $\beta^A(i,j) = 0$ ($\beta^A(i,j)$ is always $0$ now, as $A$ is stuck to the axis $y= 0$).
Till now, the relative position of $A$ and $B$ in $x$-coordinate is preserved $x^A(i,j)=x^B(i,j)+1$.
Let $\mathcal{E}^*$ be the event that $B$ reaches the axis $x = 0$, and let $\mathcal{E}^*$ happens at the end of iteration $\mathcal{T}^* = \mathcal{T}_{x = 0}^B$.  We further define $\Delta = \mathcal{T}^* - \mathcal{T}_{\textsf{skew}}$ to be the additional time before $x^B = 0$  (if both stopping times exist), and $L = \mathcal{T}^B_{y = 0} - \mathcal{T}_{\textsf{skew}}$  to be the additional time before $y^B = 0$ (if both stopping times exist).

At the end of iteration $\mathcal{T}^*$, the positions for $A$ is one unit to the right of the origin.  That is $x^A(\mathcal{T}^*) = 1$ while $y^A(\mathcal{T}^*)\big)= 0$.
Informally, we want to couple the movement of $A$ from $(1,0)$ at $\mathcal{T}^*$ to the movement of $B$ in the $y$-direction at $\mathcal{T}_{\textsf{skew}}$ which is one unit above the axis at $y = 1$.
Formally, starting at $(1,0)$, $A$ is a $1$-dimensional random walk on the axis $y= 0$, and we couple it to $B$ in the following way.
\begin{itemize}
    \item For each $t=1,\ldots,L$, we couple $A$'s movement in the $x$~direction at iteration $\mathcal{T}^*+t$ with $B$'s movement $\Delta$ steps earlier in the $y$~direction at iteration $\mathcal{T}^*+t - \Delta = \mathcal{T}_{\textsf{skew}}+t$ such that $\alpha^A(\mathcal{T}^*+t)\sim\po(\lambda(\mathcal{T}^*+t))$ and  $\alpha^A(\mathcal{T}^*+t)\ge \beta^B(\mathcal{T}_{\textsf{skew}}+t)$.
    \footnote{Here is an example of such a coupling. Consider iteration $i = \mathcal{T}^*+t$ for $A$, and we want to couple it with $B$'s movement at iteration $\iota = \mathcal{T}_{\textsf{skew}}+t$.  Let $J^B(\iota)$ be the number of steps of $B$ in the iteration $\iota$ which is not necessary equal to the number of steps of $A$ after iteration $\mathcal{T}^*$.  At step $0$, we sample a non-negative integer $d(i)\sim \po(2(\lambda(i)-\lambda_{\iota}))$ independent to $J^B(\iota)$, and set the number of steps of $A$ to be $J^A(i) = J^B(\iota)+d(i)$.  Then set $\alpha^A(i,0) = -1$ and $\beta(i,0)^A = 0$.  At each step $j = 1,\ldots, J^B(\iota)$, we set $(\alpha^A(i,j), \beta^A(i,j)) = (\beta_{\iota j}^B, 0)$. At the later steps $j = J^B(\iota)+1, \ldots, J^A(i)$, we set $(\alpha^A(i,j), \beta^A(i,j)) = (1,0)$ with probability $0.5$, or $(0,0)$ otherwise.}
    \item We do not couple $A$ to $B$ for future iterations after $\mathcal{T}^*+L$.
\end{itemize}
A key property of this coupling is that the $x$-coordinate of $A$ at $\mathcal{T}^*+t$ is always greater or equal to the $y$-coordinate of $B$ at iteration $\mathcal{T}_{\textsf{skew}}+t$.
\begin{claim}
For all $t = 1, \ldots, L$,
$x^A(\mathcal{T}^*+t)\ge y^B(\mathcal{T}_{\textsf{skew}}+t)$.
\end{claim}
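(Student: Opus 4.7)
The plan is to proceed by induction on $t$. The base case $t = 0$ follows from the paragraph immediately preceding the claim: $x^A(\mathcal{T}^{*}) = 1$, and because $B$ is one unit above $A$ throughout Phase~I, $y^B(\mathcal{T}_{\textsf{skew}}) = 1$ as well, so equality holds at $t = 0$. For the inductive step, I would compare the one-iteration evolution of $x^A$ during iteration $\mathcal{T}^{*} + t + 1$ with the one-iteration evolution of $y^B$ during iteration $\mathcal{T}_{\textsf{skew}} + t + 1$, reducing the claim to an inequality about the per-iteration net displacements.

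Since $A$ has been stuck to the axis $y = 0$ since $\mathcal{T}_{\textsf{skew}}$, its step-$0$ move is $(-1, 0)$, so $x^A$ loses exactly $1$ at step $0$ (provided $x^A \geq 1$ at the start of the iteration). For $B$, the step-$0$ move is $(-1, -1)$ when $B$ is still in the interior, and $(0, -1)$ when $B$ is already stuck to the axis $x = 0$; in either case the $y$-coordinate loses exactly $1$ at step $0$ (provided $y^B \geq 1$). The remaining steps contribute exactly $\alpha^A(\mathcal{T}^{*} + t + 1)$ to $x^A$ and $\beta^B(\mathcal{T}_{\textsf{skew}} + t + 1)$ to $y^B$, both non-negative. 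The coupling described in the preceding footnote is constructed so that $\alpha^A(\mathcal{T}^{*} + t + 1) \geq \beta^B(\mathcal{T}_{\textsf{skew}} + t + 1)$ almost surely, which is feasible precisely because $\mathcal{T}^{*} > \mathcal{T}_{\textsf{skew}}$ (as $\Delta > 0$) and $\lambda(\cdot)$ is non-decreasing, so the Poisson rate driving $\alpha^A$ dominates that of $\beta^B$. Adding the two increments $-1 + \alpha^A(\mathcal{T}^{*} + t + 1)$ and $-1 + \beta^B(\mathcal{T}_{\textsf{skew}} + t + 1)$ to the inductive hypothesis gives $x^A(\mathcal{T}^{*} + t + 1) \geq y^B(\mathcal{T}_{\textsf{skew}} + t + 1)$.

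The main obstacle is the bookkeeping needed to justify the hidden preconditions $x^A \geq 1$ and $y^B \geq 1$ at the start of each iteration in $\{1, \ldots, L\}$, since otherwise the step-$0$ formulas above must be modified (if $x^A = 0$ or $y^B = 0$ the walk is already stuck to both axes, i.e.\ at the origin). The condition $y^B \geq 1$ follows from the definition of $L = \mathcal{T}^B_{y = 0} - \mathcal{T}_{\textsf{skew}}$ as the first iteration by whose end $y^B$ reaches $0$, so $y^B \geq 1$ strictly before that. The condition $x^A \geq 1$ threads through the induction itself: by the inductive hypothesis $x^A(\mathcal{T}^{*} + t) \geq y^B(\mathcal{T}_{\textsf{skew}} + t) \geq 1$. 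In the corner case where equality $x^A = y^B = 1$ is attained at the terminal iteration $t + 1 = L$, the same argument still applies because only the start-of-iteration values are needed in the update formulas; beyond $t = L$ the coupling contract expires and no further claim must be asserted. This closes the induction for all $t \in \{1, \ldots, L\}$.
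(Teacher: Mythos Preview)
Your proposal is correct and follows essentially the same approach as the paper: induction on $t$, with base case $x^A(\mathcal{T}^*) = y^B(\mathcal{T}_{\textsf{skew}}) = 1$ and inductive step driven by the coupling inequality $\alpha^A(\mathcal{T}^*+t)\ge \beta^B(\mathcal{T}_{\textsf{skew}}+t)$. The paper's proof is just two sentences and omits the bookkeeping you spell out (the equal $-1$ decrements at step~$0$ and the preconditions $x^A\ge 1$, $y^B\ge 1$), so your version is simply a more detailed rendering of the same argument.
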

\begin{proof}
We use induction.  For the base case, we have $1 = x^A(\mathcal{T}^*) = y^B(\mathcal{T}_{\textsf{skew}})$ from the definitions of $\mathcal{T}_{\textsf{skew}}$ and $\mathcal{T}^*$.  For the inductive case, $\alpha^A(\mathcal{T}^*+t)\ge \beta^B(\mathcal{T}_{\textsf{skew}}+t)$ due to our coupling.
\end{proof}

\begin{figure}
\centerline{\includegraphics[height=4cm]{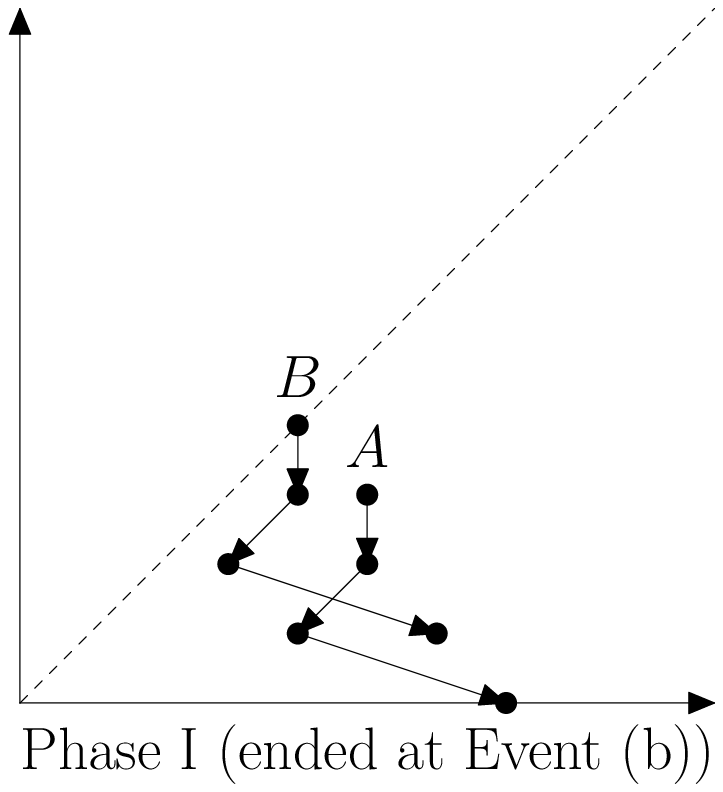}\hspace{1cm}\includegraphics[height=4cm]{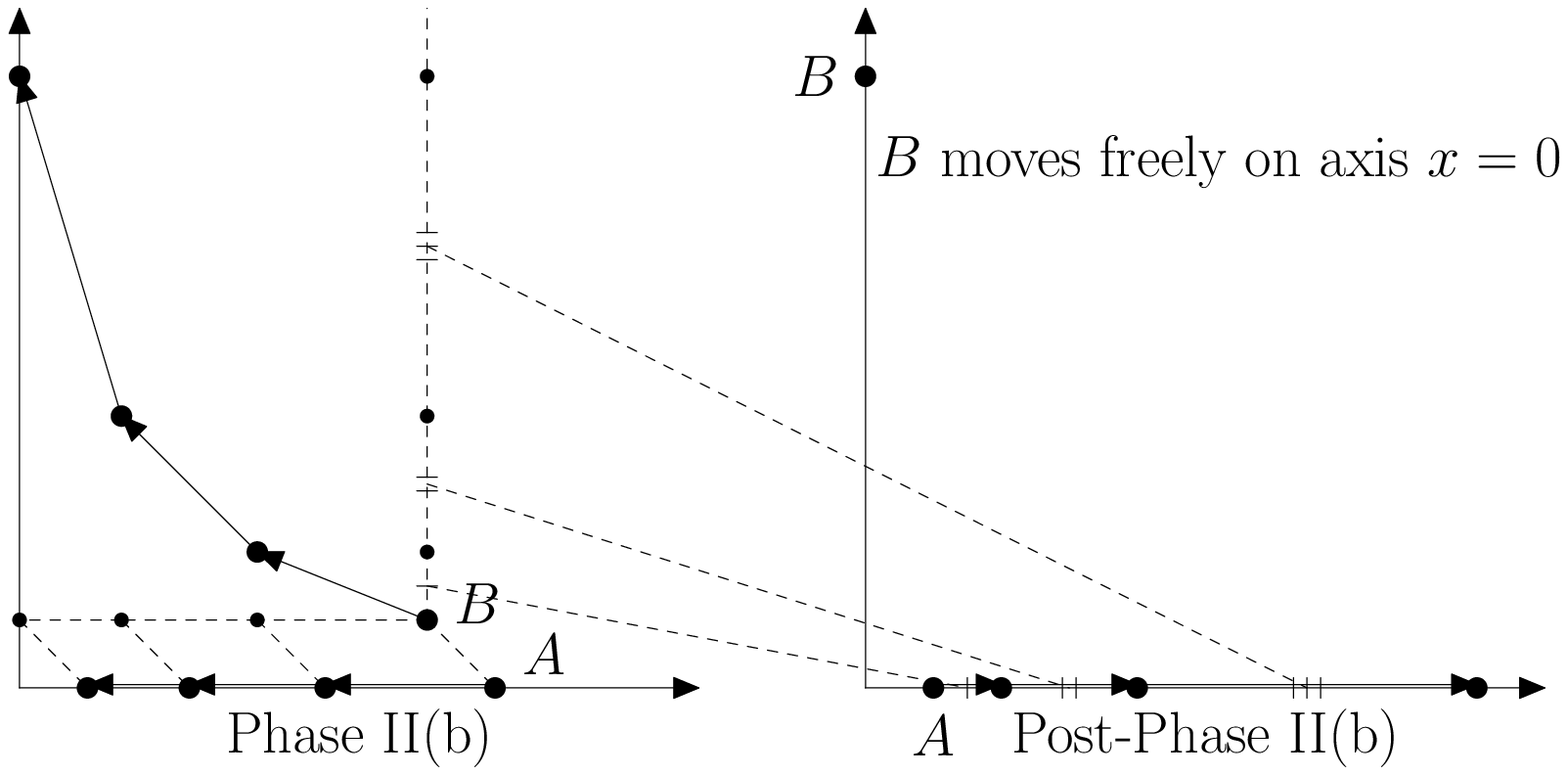}}
\caption{The coupling with \textsf{Phase I} ended at Event $\mathcal{E}_{\textsf{skew}}$, if $\mathcal{E}^*$ happens}
\label{fig:coupling_b}
\end{figure}

\subsection{Validity of the coupling}\label{sect:prop13}
The coupling induces the correct marginal random walk process for $B$, as we have defined the coupling in a way that $B$ is moving ``freely'' and $A$ is being ``coupled'' with $B$.
The only non-trivial part is to show that the coupling induces the correct marginal random walk process for $A$.  It is straightforward to check that the marginal probabilities are correct, before the event $\mathcal{E^*}$ occurs, or if the event $\mathcal{E^*}$ does not occur.
If $\mathcal{E}^*$ happens (which implies that the process enters \textsf{Phase II Skew} and $B$ reaches the axis $x = 0$), the movement of $A$ in the $x$~direction is coupled with $B$'s movement in $y$~direction $\Delta = \mathcal{T}^*-\mathcal{T}_{\textsf{skew}}$ iterations ago.
We note that $B$'s movements in the $x$~direction and the $y$~direction are independent and $A$ does not contain two iterations that are coupled to a same iteration of $B$.
Therefore, the movements of $A$ in $x$~direction after $\mathcal{T}^*$ are independent to its previous movement, so the marginal distribution is correct.
Fig.~\ref{fig:coupling_validity} illustrates the coupling time line.

\begin{figure}[ht]
\centerline{\includegraphics{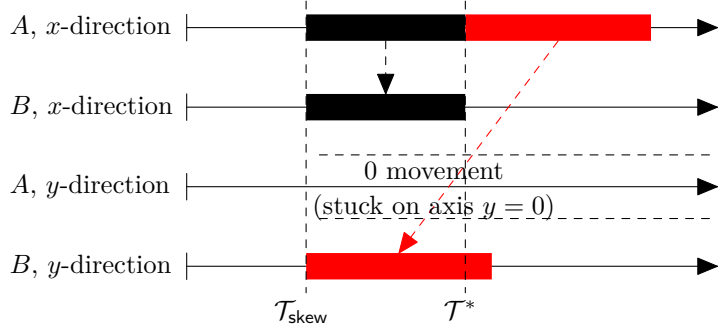}}
\caption{The time line for the coupling after event $\mathcal{E}_{\textsf{skew}}$ happens.}
\label{fig:coupling_validity}
\end{figure}

\begin{remark}
The coupling of the two random walks $A$ and $B$ in $\mathbb{Z}^2$ in the proof above can be alternatively viewed as a coupling of four independent random walks in $\mathbb{Z}$ (this is why we have said that ``we simultaneously couple four cascade processes'' in the introduction), as the $x$-directional and $y$-directional movements for both $A$ and $B$ correspond to the four terms in inequality~\eqref{eqn:prop1}, which are intrinsically independent.
\end{remark}

\subsection{Proof of Inequality~(\ref{eqn:prop1})}\label{sect:prop14}
It suffices to show that in our coupling $H_A\subseteq H_B$ and $H_B\setminus H_A$ has a positive probability, because this implies inequality~\eqref{eqn:prop1}:
$\Pr(H_A) = \Pr(H_B\cap H_A) < \Pr(H_B\cap H_A) +\Pr(H_B\setminus H_A) = \Pr(H_B).$
We aim to show the following:
\begin{enumerate}
    \item \label{item:0} if the coupling never moves to \textsf{Phase II}, neither $A$ nor $B$  reaches $(0,0)$;
    \item \label{item:1} if the coupling moves to \textsf{Phase II Symm}, $A$ reaches $(0,0)$ if and only if $B$ reaches $(0,0)$;
    \item \label{item:2} if the coupling moves to \textsf{Phase II Skew}, $A$ reaches $(0,0)$ implies that $B$ also reaches $(0,0)$;
    \item \label{item:3} there is an event with a positive probability such that $B$ reaches $(0,0)$ but $A$ does not.
\end{enumerate}
The first, second, and third show $H_A\subseteq H_B$.  The last one shows $H_B\setminus H_A$ has a positive probability.

\ref{item:0} is trivial. \ref{item:1} follows from symmetry.

To see \ref{item:2}, first notice that in  \textsf{Phase II Skew}, $\mathcal{E}^*$ must happens if $A$ ever reaches $(0,0)$: because $A$ can move to the left by at most $1$ unit in each iteration, $A$ must first reach $(1,0)$, but at this point $x^B = 0$ and event $\mathcal{E}^*$ happens.  Now consider the case that $B$ never reaches the origin after event $\mathcal{E}^*$.  Then the $x$ movement of $A$ remains coupled to the $y$-movement of $B$ in such a way that $\bar{\alpha}^A(\mathcal{T}^* + t) \geq \bar{\beta}^B(\mathcal{T}_{\textsf{skew}} + t)$.  Walk $A$ starts at $x^A = 1$, and walk $B$ starts at $y^B = 1$.  Therefore,  $A$ cannot reach the origin if $B$ does not.   In the case walk $B$ meets the origin, the statement is vacuously true.

For \ref{item:3}, to show $\Pr(H_B\setminus H_A)>0$, we define the following event which consists of four parts. i) For all $i = 1, \ldots, k$, it happens that $\alpha^A(i)=\beta^A(i)=0$, in which case the event $\mathcal{E}_{\textsf{skew}}$ happens at $\mathcal{T}_{\textsf{skew}}=k$ and $A$ reaches $(2,0)$.
ii) For $i=k+1$, it happens that $\alpha^A(i) = 0$ and $\beta^B(i) = 1$, in which case $A$ reaches $(1,0)$ and $B$ reaches $(0,1)$, and the process $B$ reaches the axis $x = 0$  at iteration $\mathcal{T}^* = k+1$. iii) In iteration $i = \mathcal{T}^*+1$, it happens that $\beta^B(i) = 0$, so $B$ reaches $(0,0)$.  On the other hand, by the coupling $\alpha^A(\mathcal{T}^*+1) \ge \beta^B(\mathcal{T}_{\textsf{skew}}+1) = 1$, so $A$ does not reach $(0,0)$ at iteration $\mathcal{T}^*+1 = k+2$.  iv) Finally, it happens that $\alpha^A(i) \ge 1$ for all $i>k+2$.
It is straightforward the i), ii), and iii) happen with positive probabilities.  By direct computations, iv) happens with a positive probability as well.\footnote{The event that  $\alpha^A(i) \ge 1$ for all $i>k+2$ happens with probability $\prod_{i> k+2} \Pr(\po(\lambda(i))\ge 1) = \prod_{i> k+2} (1-\exp(-\lambda(i)))\ge \prod_{i\ge r+1} (1-\exp(-\binom{i-1}{r-1}c^r))$ which is a positive constant depending on $r$ and $c$.}
Since the above event consisting of i), ii), iii) and iv) belongs to $H_B\setminus H_A$ and each of the four sub-events happens with a positive probability, \ref{item:3} is implied.

From \ref{item:1}, \ref{item:2}, and \ref{item:3}, we learn that the probability that $B$ reaches $(0,0)$ is strictly larger than that of $A$, which implies inequality (\ref{eqn:prop1}) and concludes the proof.

\section{Optimal Seeds in Submodular \infmax}
\label{sect:submodular}
We have seen that putting all the $K$ seeds in a single leaf is optimal for $r$-complex contagion, when the root node has weight $\omega(1/n^{1+1/r})$.
To demonstrate the sharp difference between $r$-complex contagion and a submodular cascade model, we present a submodular \infmax example where the optimal seeding strategy is to put no more than one seed in each leaf.
The hierarchy tree $T$ in our example meets all the assumptions we have made in the previous sections, including large communities, proper separation, and $w(\mbox{root})=\omega(1/n^{1+1/r})$, where $r$ is now an arbitrarily fixed integer with $r\geq2$.

We consider a well-known submodular cascade model, \emph{the independent cascade model}~\cite{kempe2003maximizing}, where, after seeds are placed, each edge $(u,v)$ in the graph appears with probability $p_{uv}$ and vertices in all the connected components of the resultant graph that contain seeds are infected.
In our example, the probability $p_{uv}$ is the same for all edges, and it is $p=1/n^{1-\frac1{4r}}$.
The hierarchy tree $T$ contains only two levels: a root and $K$ leaves.
The root has weight $1/n^{1+\frac1{2r}}$, and each leaf has weight $1$.
After $G\sim\calG(n,T)$ is sampled and each edge in $G$ is sampled with probability $p$, the probability that an edge appears between two vertices from different leaves is $(1/n^{1-\frac1{4r}})\cdot (1/n^{1+\frac1{2r}})=o(1/n^2)$, and the probability that an edge appears between two vertices from a same leaf is $1\cdot(1/n^{1-\frac1{4r}})=\omega(\log n/n)$.
Therefore, with probability $1-o(1)$, the resultant graph is a union of $K$ connected components, each of which corresponds to a leaf of $T$.
It is then straightforward to see that the optimal seeding strategy is to put a single seed in each leaf.

\section{A Dynamic Programming Algorithm}
In this section, we present an algorithm which finds an optimal seeding strategy when all $w(t)$'s fall into two regimes: $w(t)=\omega(1/n^{1+1/r})$ and $w(t)=o(1/n^2)$.
We will assume this for $w(t)$'s throughout this section.
Since a parent tree node always has less weight than its children (see Definition~\ref{defi:SHB}), we can decompose $T$ into \emph{the upper part} and \emph{the lower part}, where the lower part consists of many subtrees whose roots have weights in $\omega(1/n^{1+1/r})$, and the upper part is a single tree containing only tree nodes with weights in $o(1/n^2)$ and whose leaves are the parents of those roots of the subtrees in the lower part.
We call each subtree in the lower part a \emph{maximal dense subtree} defined formally below.

\begin{definition}\label{defi:densesubtree}
Given a hierarchy tree $T=(V_T,E_T,w,v)$, a subtree rooted at $t\in V_T$ is a \emph{maximal dense subtree} if  $w(t)=\omega(1/n^{1+1/r})$, and either $t$ is the root, or $w(t')=O(1/n^{1+1/r})$ where  $t'$ is the parent of $t$.
\end{definition}

Since we have assumed either $w(t)=\omega(1/n^{1+1/r})$ or $w(t)=o(1/n^2)$, $w(t')=O(1/n^{1+1/r})$ in the definition above implies $w(t')=o(1/n^2)$.

The idea of our algorithm is the following: firstly, after the decomposition of $T$ into the upper and lower parts, we will show that the weights of the tree nodes in the upper part, falling into $w(t)=o(1/n^2)$, are negligible so that we can treat the whole tree $T$ as a forest with only those maximal dense subtrees in the lower part (that is, we can remove the entire upper part from $T$); secondly, Theorem~\ref{THM:MAIN} shows that after we have decide the number of seeds to be allocated to each maximal dense subtree, the optimal seeding strategy is to put all the seeds together in a single leaf that has the highest density  defined in Definition~\ref{defi:density}; finally, we use a dynamic programming approach to allocate the $K$ seeds among those maximal dense subtrees.

Now, we are ready to describe our algorithm, presented in Algorithm~\ref{alg}.

\begin{algorithm}
  \caption{The \infmax algorithm}\label{alg}
  \begin{algorithmic}[1]
    \STATE \textbf{Input: } $r\in\mathbb{Z}$ with $r\geq2$, $T=(V_T,E_T,w,v)$, and $K\in\mathbb{Z}^+$
    \STATE Find all maximal dense subtrees $T_1,\ldots,T_m$, and let $r_1,\ldots,r_m$ be their roots (Definition~\ref{defi:densesubtree}).
    \STATE For each $T_i$ and each $k=0,1,\ldots,K$, let $\bm{s}_i^\ast(k)$ be the seeding strategy that puts $k$ seeds in the leaf $t\in L_{T_i}$ with the highest density, and let
    $$h(T_i,k)=\lim_{n\rightarrow\infty}\frac{\E_{G\sim\calG(v(r_i)\cdot n, T_i)}[\sigma_{r,G}(\bm{s}_i^\ast(k))]}{n}$$
    be the expected number of infected vertices in the subgraph defined by $T_i$, divided by the total number of vertices in the whole graph.
    \STATE Let $S[i,k]$ store a seeding strategy that allocates $k$ seeds in the first $i$ subtrees $T_1,\ldots,T_i$, and let $H[i,k]$ be the expected total number of infected vertices corresponding to $S[i,k]$, divided by $n$.
    \FOR{$k=0,1,\ldots,K$}
    \STATE set $S[1,k]=\bm{s}_1^\ast(k)$ and $H[1,k]=h(T_1,k)$.
    \ENDFOR
    \FOR{each $i=2,\ldots,m$}
    \FOR{$k=0,1,\ldots,K$}
        \STATE $\displaystyle k_i = \argmax_{k_i \in \{0, 1, \ldots, k\}} H[i-1,k-k_i]+h(T_i,k_i)$;
        \STATE set $S[i,k]$ be the strategy that allocates $k-k_i$ seeds among $T_1,\ldots,T_{i-1}$ according to $S[i-1,k-k_i]$ and puts the remaining $k_i$ seeds in the leaf of $T_i$ with the highest density;
        \STATE set $H[i,k]=H[i-1,k-k_i]+h(T_i,k_i)$;
    \ENDFOR
    \ENDFOR
    \STATE\textbf{Output:} the seeding strategy $S[m,K]$.
  \end{algorithmic}
\end{algorithm}

The correctness of Algorithm~\ref{alg} follows immediately from Theorem~\ref{thm:isolated} (below) and Theorem~\ref{THM:MAIN}.  Theorem~\ref{thm:isolated} shows that we can ignore the upper part of $T$ and treat $T$ as the forest consisting of all the maximal dense subtrees of $T$ when considering the \infmax problem.
Recall Theorem~\ref{THM:MAIN} shows that for each subtree $T_i$ and given the number of seeds, the optimal seeding strategy is to put all the seeds on the leaf with the highest density.

\begin{theorem}\label{thm:isolated}
Given $T=(V_T,E_T,w,v)$, let $\{T_1,\ldots,T_m\}$ be the set of all $T$'s maximal dense subtrees and let $T^-$ be the forest consisting of $T_1,\ldots,T_m$. For any seeding strategy $\bm{k}$ and any $r\geq2$, we have $\Sigma_{r,T}(\bm{k})=\Sigma_{r,T^-}(\bm{k})$.
\end{theorem}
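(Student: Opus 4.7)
The plan is to build a monotone coupling between the cascades on $T$ and $T^-$. Sample $G\sim\calG(n,T)$ and let $G^-$ be the graph obtained from $G$ by deleting every inter-subtree edge, i.e., every edge whose two endpoints lie in different maximal dense subtrees; then $G^-\sim\calG(n,T^-)$. Because $r$-complex contagion is monotone, the final infected set on $G$ can be computed by activating edges in any order: first run the cascade using only the edges of $G^-$ (producing exactly the cascade on $G^-$), and then add in the inter-subtree edges and continue propagation. It therefore suffices to show that this second stage infects only $o(n)$ additional vertices in expectation.

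The key estimate exploits the sparsity of inter-subtree edges. Every such edge has its least common ancestor in the upper part of $T$, and the standing dichotomy $w(t)\in\omega(1/n^{1+1/r})\cup o(1/n^2)$ combined with the definition of a maximal dense subtree forces $w(t)=o(1/n^2)$ for every upper-part node $t$. Hence each inter-subtree edge appears with probability $o(1/n^2)$. Let $I$ denote the random set of vertices infected after the cascade on $G^-$; conditioning on $I$, the expected number of inter-subtree neighbors of any fixed $v\in V\setminus I$ lying in $I$ is at most $|I|\cdot o(1/n^2)\le n\cdot o(1/n^2)=o(1/n)$. By Markov's inequality, each such $v$ has no inter-subtree neighbor in $I$ with probability $1-o(1/n)$, and a union bound over the $n$ choices of $v$ gives that, with probability $1-o(1)$, no vertex outside $I$ has its count of infected neighbors altered by the addition of inter-subtree edges.

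On this high-probability event, the threshold condition of every vertex outside $I$ is identical in $G$ and $G^-$, so the second stage produces no new infections and $\sigma_{r,G}(\bm{k})=\sigma_{r,G^-}(\bm{k})$. On the complementary event of probability $o(1)$, I would use the trivial bound $|\sigma_{r,G}(\bm{k})-\sigma_{r,G^-}(\bm{k})|\le n$. Combining yields $|\E[\sigma_{r,G}(\bm{k})]-\E[\sigma_{r,G^-}(\bm{k})]|\le o(1)\cdot n=o(n)$, which vanishes after dividing by $n$ and letting $n\to\infty$, giving $\Sigma_{r,T}(\bm{k})=\Sigma_{r,T^-}(\bm{k})$. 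I do not anticipate a serious obstacle here; the main bookkeeping is to confirm (via the dichotomy) that every upper-part edge probability really is $o(1/n^2)$ and to justify the monotone two-stage decomposition cleanly, both of which are straightforward.
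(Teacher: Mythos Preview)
Your proof is correct and follows essentially the same idea as the paper's: inter-subtree edges each appear with probability $o(1/n^2)$, so by a Markov/union-bound argument they have no effect with probability $1-o(1)$, and the trivial $o(1)\cdot n$ bound on the complement finishes. The paper's version is slightly more direct---it simply observes that the total expected number of inter-subtree edges is at most $n^2\cdot o(1/n^2)=o(1)$, so with probability $1-o(1)$ the graphs $G$ and $G^-$ are literally identical, making your two-stage coupling and the conditioning on $I$ unnecessary.
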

\begin{proof}
Since the total number of possible edges between $T^-$ and the rest of the tree is upper bounded by $n^2$ and each such edge appears with probability $o(1/n^2)$, the expected number of edges is $o(1)$.
By Markov's inequality the probability there exists edges between $T^-$ and the rest of the tree $o(1)$.
Therefore, we have
$$ \frac{\E\limits_{G\sim\calG(n,T)}\left[\sigma_{r,G}(\bm{k})\right]}n=\frac{o(1) O(n)+(1-o(1)) \E\limits_{G\sim\calG(n,T^-)}\left[\sigma_{r,G}(\bm{k})\right]}n.$$
Taking $n\rightarrow\infty$ we have concludes the proof.
\end{proof}

Finally, it is straightforward to see the time complexity of Algorithm~\ref{alg}, in terms of the number of evaluations of $\Sigma_{r,\calG(n,T)}(\cdot)$.

\begin{theorem}
Algorithm~\ref{alg} requires $O_I(|V_T|K^2)$ computations of $\Sigma_{r,\calG(n,T)}(\cdot)$.
\end{theorem}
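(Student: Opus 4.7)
The plan is to bound the total number of $\Sigma_{r,\calG(n,T)}(\cdot)$ computations by walking through the three phases of Algorithm~\ref{alg} and summing their contributions, using only the fact that the maximal dense subtrees partition a subset of $V_T$ (so $m \leq |V_T|$) together with the nested-loop structure of the dynamic program.

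First, I would argue that Line~2 of the algorithm is a single traversal of $T$ and performs no evaluations of $\Sigma$; moreover, the outputs $T_1, \ldots, T_m$ are induced on pairwise disjoint subsets of $V_T$, which immediately gives $m \leq |V_T|$. Then for Line~3, each entry $h(T_i, k)$ is by definition a single call to $\Sigma_{r, \calG(v(r_i) n, T_i)}$ on the strategy $\bm{s}_i^\ast(k)$; since the table ranges over $(i, k) \in \{1, \ldots, m\} \times \{0, 1, \ldots, K\}$, the total number of $\Sigma$-evaluations in this phase is at most $m(K+1) = O_I(|V_T|\, K)$.

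For the dynamic programming in Lines~4--14, I would observe that the outer loop over $i$ contributes at most $m - 1$ iterations, the middle loop over $k$ contributes $K+1$ iterations, and the argmax in Line~10 ranges over $k_i \in \{0, \ldots, k\}$ and thus contributes at most $K+1$ candidates. Each candidate only uses table lookups of $H[i-1, \cdot]$ and $h(T_i, \cdot)$ together with one addition and one comparison, so the DP adds at most $O_I(|V_T|\, K^2)$ further operations. Summing over the precomputation and the DP yields $O_I(|V_T|\,K + |V_T|\,K^2) = O_I(|V_T|\,K^2)$, matching the claim.

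The argument is essentially routine bookkeeping; the only point that requires care is the bound $m \leq |V_T|$, which, as noted, follows immediately from the fact that the maximal dense subtrees are vertex-disjoint induced subtrees of $T$. I do not expect any substantive obstacle beyond being explicit that the DP phase reuses the already-tabulated $h(T_i, k)$ values and therefore does not trigger additional $\Sigma$-computations.
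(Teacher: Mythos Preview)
Your proposal is correct and follows exactly the straightforward bookkeeping the paper alludes to (the paper states the theorem without proof, remarking only that the complexity is ``straightforward to see''). Your observation that $m\leq |V_T|$ because the maximal dense subtrees are vertex-disjoint, together with the count of $h(T_i,k)$ evaluations and the nested-loop DP cost, is precisely the intended argument.
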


\section{Conclusion and Future Work}

In this paper, we presented an influence maximization algorithm which finds optimal seeds for the stochastic hierarchical blockmodel, assuming the weights of tree nodes do not fall into a narrow regime between $\Omega(1/n^2)$ and $O(1/n^{1+1/r})$.
As a crucial observation behind the algorithm, when the root of the tree has weight $\omega(1/n^{1+1/r})$, our results show that the optimal seeding strategy is to put all the seeds together.
Our results provide a formal verification for the intuition that one should put the seeds close to each other to maximize the synergy effect in a nonsubmodular cascade model.

\paragraph{Removing Limitations}
One obvious future direction is to extend our algorithm such that it works for weights of tree nodes between $\Omega(1/n^2)$ and $O(1/n^{1+1/r})$ as well.
Related to this, \citet{schoenebeck2017beyond} shows that \infmax for the complex contagion on the stochastic hierarchical blockmodel is NP-hard to approximate to within factor $n^{1-\varepsilon}$ if vertices have non-homogeneous thresholds, i.e., each vertex $v$ has a individual threshold $r_v\in\mathbb{Z}^+$ such that $v$ is infected when it has at least $r_v$ infected neighbors.
It is unknown whether this inapproximability result carries over to the homogeneous case where all agents have the same threshold.

It is also interesting to see if our main result Theorem~\ref{THM:MAIN} still holds without the proper separation assumption.
We only use this assumption in the proof of Proposition~\ref{prop:isolate}.
To remove the proper separation assumption, more insight is needed on the behavior of the cascade in the critical leaves.
As a next step for this, one might consider the case when leaves $t_1$ and $t_2$ have weights $c_1n^{-1/r}$ and $c_2n^{-1/r}$ respectively, and their parent $t$ has weight $dn^{-1/r}$ with $d<c_1$ and $d<c_2$; it is an interesting open problem to see that if it is still optimal to either put all the seeds in $t_1$ or to put all the seeds in $t_2$. We conjecture this is true.

\paragraph{Extension}
One way to extend our results is to relax the assumption that the network is known.  For example, can the network be learned from observing previous cascades, or by experimenting with them?  Or, can they be elicited from agents with limited, local knowledge?  Another direction would be to leverage these results to create heuristics that work well on real-world networks.  A final direction would be more careful empirical studies (particularly experiments) about the nature of various cascades (e.g. submodular versus nonsubmodular).

\bibliographystyle{plainnat}
\bibliography{reference}

\newpage
\appendix
\section{Proof of Lemma~\ref{lemmaA}}
The proof will follow the structure of the proof sketch in the main body of this paper.

Let $E$ be the event that at least one leaf (or tree node) is activated at the end of the cascade.
By our definition, $P_{\bm{k}}=\lim_{n\rightarrow\infty}\Pr(E)$.
Given a seeding strategy $\bm{k}$, let $\sigma(\bm{k}):=\E_{G\sim\calG(n,T)}[\sigma_{r,G}(\bm{k})]$ be the expected number of infected vertices, $\sigma(\bm{k}\mid E):=\E_{G\sim\calG(n,T)}[\sigma_{r,G}(\bm{k})\mid E]$ be the expected number of infected vertices conditioning on event $E$, and $\sigma(\bm{k}\mid \neg E):=\E_{G\sim\calG(n,T)}[\sigma_{r,G}(\bm{k})\mid \neg E]$ be the expected number of infected vertices conditioning on that $E$ does not happen.
We have
$$\sigma(\bm{k})=\Pr(E)\cdot\sigma(\bm{k}\mid E)+\left(1-\Pr(E)\right)\cdot\sigma(\bm{k}\mid \neg E),$$
and
\begin{equation}\label{eqn:lemmaA1}
\Sigma_{r,T}(\bm{k})=\lim_{n\rightarrow\infty}\frac{\sigma(\bm{k})}{n}=P_{\bm{k}}\cdot\lim_{n\rightarrow\infty}\frac{\sigma(\bm{k}\mid E)}n+\left(1-P_{\bm{k}}\right)\cdot\lim_{n\rightarrow\infty}\frac{\sigma(\bm{k}\mid \neg E)}n.
\end{equation}
To prove Lemma~\ref{lemmaA}, it is sufficent to show the following two claims:
\begin{enumerate}
    \item First, we show that $1-P_{\bm{k}}>0$ implies $\sigma(\bm{k}\mid \neg E)=o(n)$, so the second term in (\ref{eqn:lemmaA1}) is always $0$ (Sect.~\ref{sect:lemmaAc1}).
    \item Second, to conclude the proof, it suffices to show that $\sigma(\bm{k}\mid E)=cn+o(n)$ for some constant $c$ which does not depend on $\bm{k}$, which implies that the first term in (\ref{eqn:lemmaA1}) is monotone in $P_{\bm{k}}$ (Sect.~\ref{sect:lemmaAc2}).
\end{enumerate}
These two claims correspond to the second and the third paragraphs in the sketch of the proof.

The following proposition is useful for proving both claims.
\begin{proposition}\label{prop:constantOutsideInfection}
Suppose the root of $T$ has weight $\omega(1/n^{1+1/r})$ and consider a leaf $t$. If there are $\Theta(n)$ infected vertices in $V\setminus V(t)$, then these infected vertices outside $V(t)$ will infect $\omega(1)$ vertices in $V(t)$ with probability $1-o(1)$.
\end{proposition}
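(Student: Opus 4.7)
The plan is to use that each edge from $V(t)$ to $V\setminus V(t)$ appears independently with probability at least $p := w(\text{root}) = \omega(n^{-(1+1/r)})$, since the least common ancestor of any $v \in V(t)$ and $u \in V\setminus V(t)$ is an ancestor of $t$ and thus has weight at least $w(\text{root})$ by the monotonicity of $w$. Moreover, by the principle of deferred decisions, these cross-edges are independent of whatever randomness produced the particular set $S \subseteq V\setminus V(t)$ of infected vertices, since every edge in the stochastic hierarchical blockmodel is drawn independently. I therefore treat $S$ as fixed with $m := |S| = \Theta(n)$ and only expose the edges from $V(t)$ to $S$ when needed.

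For each $v \in V(t)$, let $X_v$ be the indicator that $v$ has at least $r$ neighbors in $S$; conditional on $S$, the $X_v$'s are mutually independent Bernoulli variables since they depend on disjoint sets of edges. A direct binomial estimate gives $\Pr(X_v = 1) \geq \Pr(\mathrm{Bin}(m, p) \geq r)$. I would then split into two regimes. If $pm = \Omega(1)$, the binomial tail is bounded below by a positive constant. If $pm = o(1)$, then
\[
\Pr(\mathrm{Bin}(m, p) \geq r) \geq \binom{m}{r} p^r (1-p)^{m-r} = (1 - o(1))\,\frac{(pm)^r}{r!};
\]
since $pm = \omega(n^{-1/r})$ we have $(pm)^r = \omega(n^{-1})$, giving $\Pr(X_v = 1) = \omega(1/n)$. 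In either case, setting $Y := \sum_{v \in V(t)} X_v$, we get $\E[Y] = |V(t)| \cdot \Pr(X_v = 1) = \Theta(n) \cdot \omega(1/n) = \omega(1)$.

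The final step is concentration. Because the $X_v$'s are independent given $S$, a standard multiplicative Chernoff bound yields $\Pr(Y \leq \E[Y]/2) \leq \exp(-\E[Y]/8) = o(1)$, so $Y \geq \E[Y]/2 = \omega(1)$ with probability $1-o(1)$. Any $v$ with $X_v=1$ has at least $r$ infected neighbors in $S$ and is therefore infected by the cascade, so $\omega(1)$ vertices in $V(t)$ become infected with probability $1 - o(1)$.

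I do not anticipate a substantive obstacle: the argument reduces to a routine binomial tail estimate followed by Chernoff, and the two regimes for $pm$ both produce the same conclusion. The one point meriting care is the conditional independence claim, which relies on the edge-independence built into the stochastic hierarchical blockmodel together with the observation that the edges from $V(t)$ to $V\setminus V(t)$ are not needed to determine which outside vertices are infected, so they remain ``fresh'' when we expose them to study the infections inside $V(t)$.
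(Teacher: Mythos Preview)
Your proof is correct and follows essentially the same strategy as the paper's: lower-bound each cross-edge probability by $p=w(\text{root})=\omega(n^{-(1+1/r)})$, show that each $v\in V(t)$ acquires $r$ infected neighbors with probability $\omega(1/n)$, and then apply a concentration inequality to the sum of the resulting indicators. The only differences are cosmetic: the paper first reduces (by monotonicity) to the case $p=o(1/n)$ so that $(1-p)^{m-r}=\Omega(1)$ automatically, whereas you handle this via your two-regime split on $pm$; and the paper uses Chebyshev on a binomial while you use Chernoff on independent Bernoullis. Both routes yield the same conclusion with the same level of effort.
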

\begin{proof}
Let $X=\Theta(n)$ be the number of infected vertices in $V\setminus V(t)$.
For each $u\in V(t)$ and $v\in V\setminus V(t)$, we assume that the probability $p_{uv}$ that the edge $(u,v)$ appears satisfies $p_{uv}=\omega(1/n^{1+1/r})$ and $p_{uv}=o(1/n)$, where $p_{uv}=\omega(1/n^{1+1/r})$ holds since the root of $T$ has weight $\omega(1/n^{1+1/r})$, and assuming $p_{uv}=o(1/n)$ may only decrease the number of infected vertices in $V(t)$ if the least common ancestor of the two leaves containing $u$ and $v$ has weight $\Omega(1/n)$.
Let $p$ be the minimum probability among those $p_{uv}$'s, and we further assume that each edge $(u,v)$ appears with probability $p$, which again may only reduce the number of infected vertices in $V(t)$.

For each vertex $u\in V(t)$, by only accounting for the probability that it has exactly $r$ neighbors among those $X$ outside infected vertices, the probability that $u$ is infected is at least
$$\rho:=\binom{X}{r}p^r(1-p)^{X-r}=\omega\left( n^r\cdot\left(\frac1{n^{1+1/r}}\right)^r\left(1-\frac1n\right)^n\right)=\omega\left(\frac1n\right),$$
and the expected number of infected vertices in $V(t)$ is at least $v(t)n\cdot\rho=\omega(1)$.

Let $Y$ be the number of vertices in $V(t)$ that are infected due to the influence of $V\setminus V(t)$, so we have $\E[Y]=v(t)n\rho$.
Applying Chebyshev's inequality,
$$\Pr\left(Y\leq\frac12v(t)n\rho\right)\leq\Pr\left(|Y-\E[Y]|\geq\frac12v(t)n\rho\right)$$
$$\qquad\leq\frac{\mbox{Var}(Y)}{(\frac12v(t)n\rho)^2}=\frac{v(t)n\rho(1-\rho)}{\frac14v(t)^2n^2\rho^2}=o(1),$$
where we have used the fact that $n\rho=\omega(1)$ and the variance of the Binomial random variable with parameter $n,p$ is $np(1-p)$.
Therefore, with probability $1-o(1)$, the number of infected vertices in $V(t)$ is at least $\frac12v(t)n\rho=\omega(1)$.
\end{proof}

\subsection{Proof of the First Claim}\label{sect:lemmaAc1}
We consider two cases: 1) $T$ contains no critical or supercritical leaf; 2) $T$ contains at least one critical or supercritical leaf.

If there is no critical or supercritical leaf in $T$, given that the total number of seeds $K=\Theta(1)$ is a constant, Theorem~\ref{thm:gnp_sub} shows that, with high probability, there can be at most $2K=\Theta(1)$ infected vertices even without conditioning on that $E$ has not happened.
To be specific, we can take the maximum weight $w^\ast(t)$ over all the leaves, and assume the entire graph is the Erd\H{o}s-R\'{e}nyi graph $\calG(n,w^\ast(t))$.
This makes the graph denser, so the expected number of infected vertices increases.
We further assume that we have not conditioned on $\neg E$, this further increases the expected number of infected vertices.
However, even under these assumptions, Theorem~\ref{thm:gnp_sub} implies that the total number of infected vertices is less than $2K$ with high probability.
Thus, $\sigma(\bm{k}\mid\neg E)=o(n)$ even without assuming $1-P_{\bm{k}}>0$.

Suppose there is at least one critical or supercritical leaf, and $\Pr(\neg E)=\Theta(1)$ (equivalently, $1-P_{\bm{k}}>0$, as given in the statement of the first claim).
To show that $\sigma(\bm{k}\mid\neg E)=o(n)$, it suffices to show that, conditioning on  there being $\Theta(n)$ infected vertices, $E$ happens with probability $1-o(1)$.
This is because, if $\Pr(\neg E)=\Theta(1)$ and $\Pr(\neg E\mid\sigma(\bm{k})=\Theta(n))=o(1)$, then
$$\Pr\left(\sigma(\bm{k})=\Theta(n)\mid\neg E\right)=\frac{\Pr(\sigma(\bm{k})=\Theta(n))\cdot\Pr(\neg E\mid\sigma(\bm{k})=\Theta(n))}{\Pr(\neg E)}=o(1),$$
which implies $\sigma(\bm{k}\mid\neg E)=o(n)$.

Now, suppose there are $\Theta(n)$ infected vertices;
to conclude the claim, we will show that $E$ happens with probability $1-o(1)$.
Since the number of leaves is a constant, there exists $t'\in L_T$ such that the number of infected vertices in $V(t')$ is $\Theta(n)$.
Let $t$ be a critical or supercritical leaf (we have supposed there is at least one critical or supercritical leaf).
Theorem~\ref{thm:gnp_sup} and Corollary~\ref{cor:gnp_boundary} indicate that, with probability $1-o(1)$, the number of infected vertices in $V(t)$ is either a constant or $v(t)n$.
Therefore, if $t'=t$, with probability $1-o(1)$, those $\Theta(n)$ infected vertices in $V(t)$ will activate $t$, so $E$ happens with probability $1-o(1)$.
If $t'\neq t$, let $X=\Theta(n)$ be such that with probability $1-o(1)$ the number of infected vertices in $V(t')$ is more than $X$, then the total number of vertices in $V(t)$ that are infected by those $X$ vertices in $V(t')$ is $\omega(1)$ (with high probability) according to Proposition~\ref{prop:constantOutsideInfection}.
Theorem~\ref{thm:gnp_sup} and Corollary~\ref{cor:gnp_boundary} show that, with high probability, those $\omega(1)$ infected vertices in $V(t)$ will further spread and activate $t$, which again says that $E$ happens with probability $1-o(1)$.

\subsection{Proof of the Second Claim}\label{sect:lemmaAc2}
As an intuitive argument, Proposition~\ref{prop:constantOutsideInfection}, Theorem~\ref{thm:gnp_sup}, and Corollary~\ref{cor:gnp_boundary} show that, when $E$ happens, with high probability, a single activated leaf will activate all the critical and supercritical leaves, and the number of vertices corresponding to all the critical and supercritical leaves is fixed and independent of $\bm{k}$; based on the tree structure and the number of infected outside vertices, the number of infected vertices in a subcritical leaf may vary; however, we will see that the seeding strategy $\bm{k}$, adding only a constant number of infections, is too weak to significantly affect the number of infected vertices in a subcritical leaf.

To break it down, we first show that all critical and supercritical leaves will be activated with high probability if $E$ happens.
This is straightforward: Proposition~\ref{prop:constantOutsideInfection} shows that an activated leaf can cause $\omega(1)$ infected vertices in every other leaf with high probability, and Theorem~\ref{thm:gnp_sup} and Corollary~\ref{cor:gnp_boundary} indicate that those critical and supercritical leaves will be activated by those $\omega(1)$ infected vertices with high probability.

Lastly, assuming all critical and supercritical leaves are activated, we show that the number of infected vertices in any subcritical leaf does not significantly depend on $\bm{k}$.
We do not need to worry about those seeds that are put in the critical or supercritical leaves, as all vertices in those leaves will be infected later.
As a result, we only need to show that a constant number of seeds in subcritical leaves has negligible effect to the cascade.

We say a subcritical leaf $t$ is \emph{vulnerable} if there exists a criticial or supercritical leaf $t'$ such that the least common ancestor of $t$ and $t'$ has weight $\Omega(1/n)$, and we say $t$ is \emph{not-vulnerable} otherwise.
It is easy to see that a vulnerable leaf $t$ will be activated with high probability conditional on $E$, even if no seed is put into it.
Since each $v\in V(t)$ is connected to one of the $v(t')n$ vertices in $V(t')$ with probability $\Omega(1/n)$, the number of infected neighbors of $v$ follows a Binomial distribution with parameter $(v(t')n,p)$ where $p=\Omega(1/n)$.
We only consider $p=\Theta(1/n)$, as there can only be more infected vertices if $p=\omega(1/n)$.
If $p=\Theta(1/n)$, the Binomial distribution becomes a Poisson distribution with a constant mean $\lambda$ for $n\rightarrow\infty$.
In this case, with constant probability $e^{-\lambda}\frac{\lambda^r}{r!}$, $v$ has $r$ infected neighbors. Therefore, $v$ will be infected with constant probability, and $V(t)$ has $\Theta(n)$ vertices that are infected by $V(t')$ outside.
The second part of Theorem~\ref{thm:gnp_sub} shows that, these $\Theta(n)$ infected vertices will further spread and activate $t$ with high probability.
Therefore, the seeds on those vulnerable subcritical leaves have no effect, since vulnerable subcritical leaves will be activated with high probability regardless the seeding strategy.

Let $t_1,\ldots,t_M$ be all the not-vulnerable subcritical leaves.
Suppose we are at the stage of the cascade process where all those critical, supercritical and vulnerable subcritical leaves have already been activated (as they will with probability $1-o(1)$ since we assumed that $E$ has happened) and we are revealing the edges between $V\setminus\bigcup_{m=1}^MV(t_m)$ and $\bigcup_{m=1}^MV(t_m)$ to consider the cascade process in $\bigcup_{m=1}^MV(t_m)$.
For each $i=0,1,\ldots,r-1$ and each $m=1,\ldots,M$, let $\chi_i^m$ be the number of vertices in $V(t_m)$ that have \emph{exactly} $i$ infected neighbors among $V\setminus\bigcup_{m=1}^MV(t_m)$, which can be viewed as a random variable. For each $m=1,\ldots,M$, let $\chi_r^m$ be the number of vertices in $V(t_m)$ that have \emph{at least} $r$ infected neighbors.
If there are $K_m$ seeds in $V(t_m)$, we increase the value of $\chi_r^m$ by $K_m$.
Let $\bm{\chi}^m=(\chi_0^m,\chi_1^m,\ldots,\chi_r^m)$.
Since $(\bm{\chi}^1,\ldots,\bm{\chi}^M)$ completely characterizes the expected number of infected vertices in the subcritical leaves (the expectation is taken over the sampling of the edges within every $V(t_i)$ and between every pair $V(t_i),V(t_j)$), we let $\sigma(\bm{\chi}^1,\ldots,\bm{\chi}^M)$ be the total number of infected vertices in the subcritical leaves, given $(\bm{\chi}^1,\ldots,\bm{\chi}^M)$.
We aim to show that \emph{adding $K_1,\ldots,K_M$ seeds in $V(t_1),\ldots,V(t_M)$ only changes the expected number of infected vertices by $o(n)$.}

Let $(\bm{\chi}^1,\ldots,\bm{\chi}^M)$ correspond to the case where no seed is added, and $(\bar{\bm{\chi}}^1,\ldots,\bar{\bm{\chi}}^M)$ correspond to the case where $K_m$ seeds are added to $t_m$ for each $m=1,\ldots,M$.
The outline of the proof is that, we first show that a) the total variation distance of the two distributions $(\bm{\chi}^1,\ldots,\bm{\chi}^M)$ and $(\bar{\bm{\chi}}^1,\ldots,\bar{\bm{\chi}}^M)$ is $o(1)$; then b) we show that $\sigma(\bm{\chi}^1,\ldots,\bm{\chi}^M)$ and $\sigma(\bar{\bm{\chi}}^1,\ldots,\bar{\bm{\chi}}^M)$ can only differ by $o(n)$ in expectation.

We first note that claim a) can imply claim b) easily.
Notice that the range of the function $\sigma(\cdot)$ falls into the interval $[0,n]$.
The total variation distance of $(\bm{\chi}^1,\ldots,\bm{\chi}^M)$ and $(\bar{\bm{\chi}}^1,\ldots,\bar{\bm{\chi}}^M)$ being $o(1)$ implies that
$$\left|\E_{(\bm{\chi}^1,\ldots,\bm{\chi}^M)}[\sigma(\bm{\chi}^1,\ldots,\bm{\chi}^M)]-\E_{(\bar{\bm{\chi}}^1,\ldots,\bar{\bm{\chi}}^M)}[\sigma(\bar{\bm{\chi}}^1,\ldots,\bar{\bm{\chi}}^M)]\right|=o(n),$$
by a standard property of total variation distance (see, for example, Proposition~4.5 in~\cite{levin2017markov}).

To show the claim a), noticing that $M$ is a constant and $\bm{\chi}^{m_1}$ is independent of $\bm{\chi}^{m_2}$ for any $m_1$ and $m_2$ (the appearances of edges between $V(t_{m_1})$ and $V\setminus\bigcup_{m=1}^MV(t_m)$ are independent of the appearances of edges between $V(t_{m_2})$ and $V\setminus\bigcup_{m=1}^MV(t_m)$), it is sufficient to show that the total variation distance between $\bm{\chi}^m$ and $\bar{\bm{\chi}}^m$ is $o(1)$.
Each vertex $v\in V(t_m)$ is connected to an arbitrary vertex in a critical or supercritical leaf with probability between $\omega(1/n^{1+1/r})$ (since the root has weight $\omega(1/n^{1+1/r})$) and $o(1/n)$ (otherwise $t_m$ is vulnerable).
Since the number of infected vertices in $V\setminus\bigcup_{m=1}^MV(t_m)$ is $\Theta(n)$, the number of $v$'s infected neighbors follows a Binomial distribution, $\mbox{Bin}(n,\theta)$, with mean $n\theta$ between $\omega(1/n^{1/r})$ and $o(1)$, we can use Poisson distribution $\po(n\theta)$ to approximate it.  Formally, the total variation distance is $d_{TV}(\mbox{Bin}(n,\theta), \po(n\theta)) \le n \theta^2 = o(1/n)$.
Thus, this approximation only changes the total variation distance of $\bm{\chi}^m$ by $o(1)$.
Observing this, the proposition below shows the total variation distance between $\bm{\chi}^m$ and $\bar{\bm{\chi}}^m$ is $o(1)$.

\begin{proposition}
Let $\lambda$ be such that $\lambda=\omega(1/n^{1/r})$ and $\lambda = o(1)$.
Let $Y_1,\ldots,Y_n\in\mathbb{Z}$ be $n$ independently and identically distributed random variables where each $Y_i$ is sampled from a Poisson distribution with mean $\lambda$.
Let $Z_1,\ldots,Z_n\in\mathbb{Z}$ be $n$ random variables, where the first $K$ of them satisfy $Z_1=\cdots=Z_K=r$ with probability $1$, and the remaining random variables $Z_{K+1},\ldots,Z_n$ are independently sampled from a Poisson distribution with mean $\lambda$.
For $i=0,1,\ldots,r-1$, let $\chi_i$ be the number of random variables in $\{Y_1,\ldots,Y_n\}$ that have value $i$, and $\bar{\chi}_i$ be the number of random variables in $\{Z_1,\ldots,Z_n\}$ that have value $i$.
Let $\chi_r$ be the number of random variables in $\{Y_1,\ldots,Y_n\}$ that have values at least $r$, and $\bar{\chi}_r$ be the number of random variables in $\{Z_1,\ldots,Z_n\}$ that have values at least $r$.
The total variation distance between $\bm\chi=(\chi_0,\chi_1,\ldots,\chi_r)$ and $\bar{\bm{\chi}}=(\bar{\chi}_0,\bar{\chi}_1,\ldots,\bar{\chi}_r)$ is $d_{TV}(\bm{\chi}, \bar{\bm{\chi}}) = o(1)$ if $K=\Theta(1)$.
\end{proposition}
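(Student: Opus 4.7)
The plan is to realize both $\bm{\chi}$ and $\bar{\bm{\chi}}$ as translates of a common multinomial random vector and then reduce the total variation bound to a one-dimensional Binomial smoothness estimate. Let $\bm{\chi}'=(\chi_0',\ldots,\chi_r')$ denote the count vector obtained only from the shared variables $Y_{K+1},\ldots,Y_n$, and let $\bm{\zeta}$ denote the independent count vector from $Y_1,\ldots,Y_K$. Then in distribution $\bm{\chi}=\bm{\chi}'+\bm{\zeta}$ and $\bar{\bm{\chi}}=\bm{\chi}'+K\bm{e}_r$, where $\bm{e}_i\in\Z^{r+1}$ is the standard basis vector for coordinate~$i$. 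With $p_i=\Pr(\po(\lambda)=i)$ for $i<r$ and $p_r=\Pr(\po(\lambda)\ge r)$, the vector $\bm{\chi}'$ is multinomial $\mathrm{Mult}(n-K;p_0,\ldots,p_r)$.

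The first step is an ``average over $\bm{\zeta}$'' reduction. Since $\lambda=o(1)$ and $K=\Theta(1)$, $\Pr(\bm{\zeta}=K\bm{e}_0)=e^{-K\lambda}=1-o(1)$. Using subadditivity of $d_{TV}$ under convolution,
\[
d_{TV}(\bm{\chi},\bar{\bm{\chi}})\le \sum_v \Pr(\bm{\zeta}=v)\,d_{TV}\bigl(\bm{\chi}'+v,\,\bm{\chi}'+K\bm{e}_r\bigr),
\]
and bounding every term with $v\neq K\bm{e}_0$ trivially by $1$ yields
\[
d_{TV}(\bm{\chi},\bar{\bm{\chi}})\le d_{TV}\bigl(\bm{\chi}'+K\bm{e}_0,\,\bm{\chi}'+K\bm{e}_r\bigr)+o(1).
\]
So the task reduces to showing that $\bm{\chi}'$ is insensitive (in $d_{TV}$) to the shift $K(\bm{e}_r-\bm{e}_0)$, which touches only coordinates $0$ and $r$.

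Next, I would collapse this multinomial comparison to a Binomial one by conditioning on the middle coordinates $\bm{c}=(\chi_1',\ldots,\chi_{r-1}')$. Given $\bm{c}$, the pair $(\chi_0',\chi_r')$ is determined by $\chi_0'\sim\mathrm{Bin}(m,q)$ with $m:=n-K-\sum_i c_i$ and $q:=p_0/(p_0+p_r)$, while the shift $K(\bm{e}_r-\bm{e}_0)$ fixes $\bm{c}$ and translates $\chi_0'\mapsto \chi_0'-K$. A direct unpacking of $d_{TV}$ then gives the exact identity
\[
d_{TV}\bigl(\bm{\chi}'+K\bm{e}_0,\,\bm{\chi}'+K\bm{e}_r\bigr)=\E_{\bm{c}}\bigl[d_{TV}\bigl(\mathrm{Bin}(m,q),\,\mathrm{Bin}(m,q)-K\bigr)\bigr].
\]

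The main obstacle is the classical Binomial smoothness bound that controls the integrand. By the triangle inequality $d_{TV}(X,X+K)\le K\cdot d_{TV}(X,X+1)$, and for a unimodal integer-valued $X\sim\mathrm{Bin}(m,q)$ a telescoping computation shows $d_{TV}(X,X+1)=\max_k\Pr(X=k)$, which Stirling bounds by $O\bigl(1/\sqrt{mq(1-q)}\bigr)$. To apply this, I would use the approximations $p_0=1-O(\lambda)$ and $p_r=\lambda^r/r!+o(\lambda^r)$, yielding $q(1-q)=\Theta(p_r)=\Theta(\lambda^r)$; and a Chebyshev estimate on $\sum_{i=1}^{r-1}\chi_i'$ (whose mean is $O(n\lambda)=o(n)$) shows $m=\Theta(n)$ with probability $1-o(1)$. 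Combined with the hypothesis $\lambda=\omega(n^{-1/r})$, these give $mq(1-q)=\omega(1)$ on the typical event, so the conditional Binomial TV distance is $O(K/\sqrt{n\lambda^r})=o(1)$; the atypical event (small $m$) contributes at most $o(1)$ via the trivial bound $d_{TV}\le 1$. Assembling all the pieces yields $d_{TV}(\bm{\chi},\bar{\bm{\chi}})=o(1)$, as required.
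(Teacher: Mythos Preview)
Your argument is correct and takes a genuinely different route from the paper's proof. The paper introduces an intermediate i.i.d.\ process $\bar{\bm{\chi}}'$ in which each of the $n$ balls independently lands in bin $r$ with probability $(1-K/n)p_r+K/n$, then bounds $d_{TV}(\bar{\bm{\chi}},\bar{\bm{\chi}}')$ by reducing to the marginal in bin~$r$ and using Poisson approximation of Binomials, and bounds $d_{TV}(\bm{\chi},\bar{\bm{\chi}}')$ by Poissonizing both multinomials, computing the KL divergence exactly, and applying Pinsker's inequality. Your approach instead writes both $\bm{\chi}$ and $\bar{\bm{\chi}}$ as deterministic translates of the \emph{same} $\mathrm{Mult}(n-K;p_0,\ldots,p_r)$, uses $\lambda=o(1)$ to replace the random translate $\bm{\zeta}$ by $K\bm{e}_0$ at cost $o(1)$, and then conditions on the middle coordinates to collapse the remaining comparison to the one-dimensional shift $d_{TV}\bigl(\mathrm{Bin}(m,q),\mathrm{Bin}(m,q)+K\bigr)$. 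Both proofs ultimately hinge on a mode bound for a large-mean discrete law --- Poisson in the paper, Binomial in yours --- and on $n\lambda^r=\omega(1)$ to make that mode small. Your route is more elementary (no Poissonization, no KL/Pinsker, no auxiliary process $\bar{\bm{\chi}}'$) and exploits the hypothesis $\lambda=o(1)$ in a way the paper does not; the paper's KL computation, by contrast, would still go through even without $\lambda=o(1)$, so it is slightly more robust in that direction, though that extra generality is not needed here. One terminological quibble: the inequality you invoke as ``subadditivity of $d_{TV}$ under convolution'' is really convexity of $d_{TV}$ in its first argument (the law of $\bm{\chi}$ is a $\bm{\zeta}$-mixture of translates of $\bm{\chi}'$), but the bound you write is correct.
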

To show that random vectors $\bm\chi$ and $\bar{\bm\chi}$ have a small total variation distance, we first estimate them by Poisson approximations.  Note that $\bm{\chi}$ and $\bar{\bm{\chi}}$ can be seen as ball and bin processes.  There are $r+1$ bins, and $n$ balls.  For $\bm{\chi}$, the probability of ball $i$ in bin $\ell$ is $\Pr[Y_i = \ell]$ when $0\le \ell<r$ and $\Pr[Y_i \ge r]$ for bin $r$.  $\chi_\ell$ is the number of balls in bin $\ell$.
Therefore, we can simplify the correlation between the coordinates of $\bm{\chi} = (\chi_0, \chi_1, \ldots, \chi_r)$, and formulate $\bm{\chi}$ as a $r+1$ coordinate-wise independent Poisson $\bm{\zeta} = (\zeta_0, \zeta_1, \ldots, \zeta_r)$  with the same expectation $\E[\bm{\chi}] = \E[\bm{\zeta}]$ conditioning on $\sum_{0\le \ell\le r} \zeta_\ell = n$.  For $\bar{\bm{\chi}}$, we define $\bar{\bm{\zeta}}$ similarly.

Then, we upper-bound the total variation distance between those two Poisson vectors $\bm{\zeta}$ and $\bar{\bm{\zeta}}$ conditioning on $\sum_{0\le \ell\le r} \zeta_\ell = \sum_{0\le \ell\le r} \bar{\zeta}_\ell = n$.  We compute the relative divergence between them and use the Pinsker's inequality~\cite{kemperman1969optimum} to upper bound the total variation distance.

\begin{proof}
For $\bm{\chi}$, there are $r+1$ bins and $n$ balls.
Let the probability of ball $i$ in bin $\ell$ be $p_\ell := \Pr[Y_i = \ell]$ when $0\le \ell<r$ and $p_r:= \Pr[Y_i \ge r]$ for bin $r$ (note that these probabilities are independent of the index $i$).  For $0\le \ell\le r$, $\chi_\ell$ is the number of balls in bin $\ell$.
Consider the following Poisson vector $\bm{\zeta} = (\zeta_0, \zeta_1, \ldots, \zeta_{r})$ with parameters $(\lambda_0, \ldots, \lambda_r)$ where $\lambda_\ell = np_\ell$ for $0\le \ell\le r$: each coordinate $\zeta_\ell$ is sampled from a Poisson distribution with parameter $\lambda_\ell$ independently.  Note that the distribution of $\bm{\chi}$ equals to $\bm{\zeta}$ conditioning on $\sum_{\ell=0}^r \zeta_\ell = n$: for all $\mathbf{k} \in \mathbb{Z}_{\geq0}^{r+1}$ with $\sum_{\ell=0}^r k_\ell = n$,
\begin{equation}\label{eq:poisson1}
    \Pr\left(\bm{\chi} = \mathbf{k}\right) = \Pr\left(\bm{\zeta} = \mathbf{k}\mid \sum_{\ell=0}^r k_\ell = n\right) = \frac{n!}{n^ne^{-n}}\prod_{\ell=0}^{r}\frac{\lambda_\ell^{k_\ell} e^{-\lambda_\ell}}{k_\ell!}.
\end{equation}

The process $\bar{\bm\chi}$ needs more work.
In the context of ball and bin process, the first $K$ balls are in bin $r$ with probability $1$, and the rest of balls follow the distribution $(p_\ell)_{0\le \ell\le r}$ defined above.   For $0\le \ell\le r$, $\bar{\chi}_\ell$ is the number of balls in bin $\ell$.
This non-symmetry makes the connection from $\bar{\bm{\chi}}$ to a Poisson distribution less obvious.
Here, we first use a process $\bar{\bm{\chi}}'$ to approximate $\bar{\bm{\chi}}$ where all balls are thrown into the bins independently and identically, and we translate $\bar{\bm{\chi}}'$ to a Poisson distribution.
Before defining $\bar{\bm{\chi}}'$, note that $\bar{\bm{\chi}}$ is equivalent to the following process: instead of picking first $K$ indices, we can randomly pick $K$ indices $i_1, i_2, \ldots, i_K$ and let $Z_{i_{\iota}} = r$ for $0\le \iota\le K$. The other follows the distribution $(p_\ell)_{0\le \ell\le r}$.
In this formulation, the distribution of the positions of balls are identical, but not independent.
Now we define $\bar{\bm{\chi}}'$ by setting them to be independent: Let the probability of ball $i$ in bin $\ell$ be $\bar{p}_\ell:= (1-K/n)p_\ell$ when $0\le \ell<r$ and $\bar{p}_r:= (1-K/n)p_r+K/n$.
The positions of balls are now mutually independent in $\bar{\bm{\chi}}'$.
For $0\le \ell\le r$, $\bar{\chi}_\ell'$ is the number of balls in bin $\ell$.

Note that the distributions of $\bar{\bm{\chi}}$ and $\bar{\bm{\chi}}'$ are different.
In particular, the marginal distribution of $\bar{\chi}_r$ is $K$ plus a binomial distribution with parameter $(n-K, p_r)$, and the marginal distribution of $\bar{\chi}_r'$ is a binomial distribution with parameter $(n, \bar{p}_r)$.
However, we can show that \begin{equation}\label{eq:poisson3}
    d_{TV}(\bar{\bm{\chi}}, \bar{\bm{\chi}}') = o(1).
\end{equation}
Equivalently, we want to show there exists a coupling between $\bar{\bm{\chi}}$ and $\bar{\bm{\chi}}'$ such that the probability of $\bar{\bm{\chi}}\neq \bar{\bm{\chi}}'$ is in $o(1)$.
First, for all $k_r\ge K$, the distributions of $\bar{\bm{\chi}}$ conditioning on $\bar{\chi}_r = k_r$ and $\bar{\bm{\chi}}'$ conditioning on $\bar{\chi}_r' = k_r$ are the same.
Therefore, fixing a coupling between $\bar{\chi}_r$ and $\bar{\chi}_r'$, we can extend it to a coupling between $\bar{\bm{\chi}}$ and $\bar{\bm{\chi}}'$ such that when an event $\bar{\chi}_r = \bar{\chi}_r'$ happens, $\bar{\bm{\chi}} = \bar{\bm{\chi}}'$.
Thus, we have $d_{TV}(\bar{\bm{\chi}}, \bar{\bm{\chi}}') = d_{TV}(\bar{\chi}_r, \bar{\chi}_r')$.  Now it suffices to show the following claim.
\begin{claim}\label{claim:poission1}
$$d_{TV}(\bar{\chi}_r, \bar{\chi}_r') = o(1).$$
\end{claim}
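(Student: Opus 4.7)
The plan is to bound $d_{TV}(\bar{\chi}_r,\bar{\chi}_r')$ by approximating both sides by Poisson distributions and then comparing those two Poisson approximations, using the smoothness of a Poisson with a large parameter. Let $\lambda_1 := (n-K)p_r$ and $\lambda_2 := n\bar{p}_r$; a direct computation gives $\lambda_2 = K+\lambda_1$. Under our hypotheses ($\lambda = \omega(n^{-1/r})$ and $\lambda = o(1)$) we have $p_r = \Theta(\lambda^r) = o(1)$, $\bar{p}_r = p_r + (1-p_r)K/n = o(1)$, and $\lambda_1 = \Theta(n\lambda^r) = \omega(1)$; each of these three facts will play a role below.

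The first step is to invoke the sharp Barbour--Hall Poisson approximation $d_{TV}(\mbox{Bin}(m,q),\po(mq))\le q$, applied separately to $\bar{\chi}_r-K \sim\mbox{Bin}(n-K,p_r)$ and to $\bar{\chi}_r'\sim\mbox{Bin}(n,\bar{p}_r)$. Using the shift-invariance of total variation, this yields $d_{TV}(\bar{\chi}_r,\, K+\po(\lambda_1))\le p_r = o(1)$ and $d_{TV}(\bar{\chi}_r',\, \po(\lambda_2))\le\bar{p}_r = o(1)$. It is important to use this ``$\le q$'' form of Barbour--Hall rather than the weaker Le Cam bound ``$\le mq^2$'', because $np_r^2 \asymp n\lambda^{2r}$ is not guaranteed to vanish in our regime.

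The second and main step is to bound $d_{TV}(K+\po(\lambda_1),\,\po(\lambda_2))$. Because $\lambda_2 = K+\lambda_1$ and the Poisson is infinitely divisible, $\po(\lambda_2)\stackrel{d}{=} Z+Y$ for independent $Y\sim\po(\lambda_1)$ and $Z\sim\po(K)$. Writing $K+\po(\lambda_1) = K+Y$, the standard convexity inequality for total variation combined with shift-invariance and the triangle inequality gives
$$d_{TV}(K+Y,\,Z+Y)\;\le\;\E|Z-K|\,\cdot\, d_{TV}\bigl(\po(\lambda_1),\,\po(\lambda_1)+1\bigr).$$
Here $\E|Z-K|\le\sqrt{\mbox{Var}(Z)}=\sqrt{K}=O(1)$, and the Poisson smoothness estimate $d_{TV}(\po(\lambda),\po(\lambda)+1)=O(1/\sqrt{\lambda})$ makes the right-hand side $O(1/\sqrt{\lambda_1})=o(1)$. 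The claim then follows from the triangle inequality across these three bounds.

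The main obstacle is justifying the Poisson smoothness bound $d_{TV}(\po(\lambda),\po(\lambda)+1)=O(1/\sqrt{\lambda})$. The cleanest route is to observe that, since the Poisson pmf is unimodal, the quantity $\sum_k |\Pr[\po(\lambda)=k]-\Pr[\po(\lambda)=k-1]|$ telescopes to exactly $2\max_k \Pr[\po(\lambda)=k]$, which by Stirling's formula is $\Theta(1/\sqrt{\lambda})$; an equivalent probabilistic argument uses the Mineka coupling of two Poisson random walks. Everything else in the plan (Barbour--Hall, shift-invariance, the convexity bound $d_{TV}(X,\,Y+Z)\le\E_Z[d_{TV}(X,\,Y+z)]$) is textbook.
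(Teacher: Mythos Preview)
Your proposal is correct and follows essentially the same route as the paper: approximate both binomials by Poissons via the $d_{TV}(\mathrm{Bin}(m,q),\mathrm{Po}(mq))\le q$ bound, then control the remaining Poisson-to-shifted-Poisson discrepancy, ultimately reducing to $\max_k \Pr(\mathrm{Po}(\lambda)=k)=O(1/\sqrt{\lambda})$ via Stirling. The paper does this in four triangle-inequality steps, separately bounding $d_{TV}(\mathrm{Po}((n-K)p_r),\mathrm{Po}(n\bar p_r))$ (using a known Poisson--Poisson TV estimate) and $d_{TV}(K+\mathrm{Po}(n\bar p_r),\mathrm{Po}(n\bar p_r))$ (directly computing it as $\le K\max_k\Pr(\mathrm{Po}=k)$); your use of the exact identity $\lambda_2=\lambda_1+K$ together with infinite divisibility merges these two middle steps into one smoothness bound, which is a small but genuine streamlining of the same argument.
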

Intuitively, the mean of $\bar{\chi}_r$ and $\bar{\chi}_r'$ are both $n\bar{p}_r$ which is in $\omega(1)$, so the small distinction between them should not matter.
We present a proof later for completeness.

Given $\bar{\bm{\chi}}'$, consider the following Poisson vector $\bar{\bm{\zeta}} = (\bar{\zeta}_0, \bar{\zeta}_1, \ldots, \bar{\zeta}_{r})$ with parameter $(\bar{\lambda}_0, \ldots, \bar{\lambda}_r)$ where $\bar{\lambda}_\ell = n\bar{p}_\ell$ for $0\le \ell\le r$.
The distribution of $\bar{\bm{\chi}}'$ equals to $\bar{\bm{\zeta}}$ conditioning on $\sum_{\ell=0}^r \bar{\zeta}_\ell = n$: for all $\mathbf{k} \in \mathbb{Z}_{\geq0}^{r+1}$ with $\sum_{\ell=0}^r k_\ell = n$,
\begin{equation}\label{eq:poisson2}
    \Pr\left(\bar{\bm{\chi}}' = \mathbf{k}\right) = \Pr\left(\bar{\bm{\zeta}} = \mathbf{k}\mid \sum_{\ell=0}^r k_\ell = n\right) = \frac{n!}{n^ne^{-n}}\prod_{\ell=0}^{r}\frac{\bar{\lambda}_\ell^{k_\ell} e^{-\bar{\lambda}_\ell}}{k_\ell!}.
\end{equation}

Finally, with \eqref{eq:poisson1}, \eqref{eq:poisson3}, and \eqref{eq:poisson2}, it suffices to upper-bound the total variation distance between $\bm{\chi}$ and $\bar{\bm{\chi}}'$.
We will prove the following claim later.
\begin{claim}\label{claim:poission2}
$$d_{TV}(\bm{\chi}, \bar{\bm{\chi}}') = o(1).$$
\end{claim}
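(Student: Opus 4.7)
The plan is to convert the bound on $d_{TV}(\bm\chi, \bar{\bm\chi}')$ into a bound on the Kullback--Leibler divergence, which we can then estimate by a careful Taylor expansion and convert back via Pinsker's inequality.

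First, I would observe that after Poissonization and conditioning on $\sum_\ell \zeta_\ell = n$ (resp.\ $\sum_\ell \bar{\zeta}_\ell = n$), the formulas \eqref{eq:poisson1} and \eqref{eq:poisson2} simplify: since $\sum_\ell \lambda_\ell = \sum_\ell \bar{\lambda}_\ell = n$, the Poisson prefactors collapse and $\bm\chi$, $\bar{\bm\chi}'$ become exactly the multinomial distributions $\mathrm{Multi}(n; p_0, \ldots, p_r)$ and $\mathrm{Multi}(n; \bar p_0, \ldots, \bar p_r)$ respectively. A direct computation then yields the well-known factorization
\[
D_{KL}\bigl(\mathrm{Multi}(n; p) \,\|\, \mathrm{Multi}(n; \bar p)\bigr) \;=\; n\sum_{\ell=0}^{r} p_\ell\,\log(p_\ell/\bar p_\ell) \;=\; n\, D_{KL}(p\,\|\,\bar p),
\]
because $\mathbb{E}_p[k_\ell] = np_\ell$ and $\log\Pr_p(\mathbf{k})/\Pr_{\bar p}(\mathbf{k}) = \sum_\ell k_\ell \log(p_\ell/\bar p_\ell)$.

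Next I would Taylor-expand $\log(p_\ell/\bar p_\ell)$. For $\ell<r$ we have $p_\ell/\bar p_\ell = 1/(1-K/n)$, which gives $\log(p_\ell/\bar p_\ell) = K/n + K^2/(2n^2) + O(1/n^3)$. For $\ell = r$, using $\bar p_r = p_r + K(1-p_r)/n$, we get $\log(p_r/\bar p_r) = -K(1-p_r)/(np_r) + \tfrac{1}{2}\bigl(K(1-p_r)/(np_r)\bigr)^2 + O(1/(np_r)^3)$. Using $\sum_{\ell<r} p_\ell = 1 - p_r$, the $O(1/n)$ contributions cancel exactly:
\[
(1-p_r)\cdot \tfrac{K}{n} \;+\; p_r\cdot\Bigl(-\tfrac{K(1-p_r)}{np_r}\Bigr) \;=\; 0,
\]
and the surviving second-order terms combine to $K^2(1-p_r)/(2n^2 p_r)$. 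Hence $D_{KL}(p\,\|\,\bar p) = O\bigl(1/(n^2 p_r)\bigr)$.

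Multiplying by $n$ yields $D_{KL}(\bm\chi \,\|\, \bar{\bm\chi}') = O(1/(np_r))$. Because $\lambda = \omega(n^{-1/r})$ implies $p_r = (1-o(1))\lambda^r/r! = \omega(1/n)$, we have $np_r = \omega(1)$, so the divergence is $o(1)$. Pinsker's inequality then gives $d_{TV}(\bm\chi, \bar{\bm\chi}') \le \sqrt{D_{KL}/2} = o(1)$, concluding the claim.

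The main obstacle is the delicate cancellation in the Taylor expansion: each of the two $O(1/n)$ contributions is a constant, and only because they cancel exactly (a consequence of the identity $\sum_\ell \bar p_\ell = \sum_\ell p_\ell = 1$) does the leading term of $n\,D_{KL}$ reduce from $\Theta(1)$ to the harmless $O(1/(np_r))$. The estimate would fail at the coarser level, and a careless application of Pinsker to $d_{TV}(\bm\zeta, \bar{\bm\zeta})$ followed by a crude conditioning argument would lose an additional $\Theta(\sqrt n)$ factor from $\Pr(\sum = n) = \Theta(1/\sqrt n)$; working directly with the multinomial $D_{KL}$ sidesteps this loss.
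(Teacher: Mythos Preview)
Your proposal is correct and follows essentially the same approach as the paper. Both arguments reduce to computing $D_{KL}(\bm\chi\|\bar{\bm\chi}')$ via the multinomial identity $D_{KL}=n\sum_\ell p_\ell\log(p_\ell/\bar p_\ell)$, observe the exact first-order cancellation in the Taylor expansion (yielding a bound $O(1/(np_r))=o(1)$ since $p_r=\Theta(\lambda^r)=\omega(1/n)$), and finish with Pinsker; the only difference is that you name the multinomial structure explicitly while the paper derives the same expression directly from the conditioned-Poisson formulas \eqref{eq:poisson1}--\eqref{eq:poisson2}.
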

With these claims, we completes the proof:
$$d_{TV}(\bm{\chi}, \bar{\bm{\chi}})\le d_{TV}(\bm{\chi}, \bar{\bm{\chi}}')+d_{TV}(\bar{\bm{\chi}}, \bar{\bm{\chi}}') = o(1)$$
by the triangle inequality.
\end{proof}
\begin{proof}[Proof of Claim~\ref{claim:poission1}]
Informally, the mean of $\bar{\chi}_r$ and $\bar{\chi}_r'$ are both $n\bar{p}_r = \omega(1)$, so the small distinction between them should not matter.  We formalize these by using Poisson distributions to approximate $\bar{\chi}_r$ (a binomial, $\mbox{Bin}(n, \bar{p}_r)$) and $\bar{\chi}_r'$ (a transported binomial, $K+\mbox{Bin}(n-K, p_r)$).

Recall that $\po(x)$ denotes a Poisson random variable with parameter $x$.
By the triangle inequality, the distance, $d_{TV}(\bar{\chi}_r, \bar{\chi}_r') = d_{TV}(K+\mbox{Bin}(n-K, p_r), \mbox{Bin}(n, \bar{p}_r))$, is less the the sum of the following four terms:
\begin{enumerate}
    \item $d_{TV}(K+\mbox{Bin}(n-K, p_r), K+\po((n-K)p_r))$,
    \item $d_{TV}(K+\po((n-K)p_r),K+ \po(n\bar{p}_r))$,
    \item $d_{TV}(K+ \po(n\bar{p}_r), \po(n\bar{p}_r))$, and
    \item $d_{TV}(\po(n\bar{p}_r), \mbox{Bin}(n, \bar{p}_r))$.
\end{enumerate}
Now we want to show all four terms are in $o(1)$.
By the Poisson approximation~\cite{le1960approximation}, for all $p$, $d_{TV}(\mbox{Bin}(n,p), \po(np))\le p$, the first and the final term, are less than $p_r$ and $\bar{p}_r$ respectively.
Both are in $o(1)$ since $p_r = \Theta(\lambda^r)$.

For the second term, because $d_{TV}(\po(\lambda_1), \po(\lambda_2))\le \frac{|\lambda_1-\lambda_2|}{\sqrt{\lambda_1}+\sqrt{\lambda_2}}$ for all $\lambda_1$ and $\lambda_2$ (see~\cite{adell2006exact}) and $p_r = \Omega(\lambda^r) = \omega(1/n)$,
\begin{align*}
    d_{TV}(\po((n-K)p_r), \po(n\bar{p}_r))
    \le \frac{n\bar{p}_r-(n-K)p_r}{\sqrt{n\bar{p}_r}+\sqrt{(n-K)p_r}} =  \frac{K+Kp_r}{\sqrt{n\bar{p}_r}+\sqrt{(n-K)p_r}} = o(1).
\end{align*}
Finally, for the third term, let $(x)^+ = \max\{0,x\}$ for all $x$.
Recall that $\bar{\lambda}_r = n\bar{p}_r$.
By a definition of total variation distance, we have
\begin{align*}
    &d_{TV}(K+ \po(\bar{\lambda}_r), \po(\bar{\lambda}_r))\\
    =& \sum_{x \ge 0} \left(\Pr(K+ \po(\bar{\lambda}_r) = x)-\Pr(\po(\bar{\lambda}_r) = x)\right)^+\\
    =& \sum_{x \ge K} \left(\Pr(\po(\bar{\lambda}_r) = x-K)-\Pr(\po(\bar{\lambda}_r) = x)\right)^+\tag{the first $K$ terms are zero}\\
    =& \sum_{x \ge K} \Pr(\po(\bar{\lambda}_r) = x-K)\left(1-\frac{\Pr(\po(\bar{\lambda}_r) = x)}{\Pr(\po(\bar{\lambda}_r) = x-K)}\right)^+\\
    =& \sum_{x \ge 0} \Pr(\po(\bar{\lambda}_r) = x)\left(1-\frac{\Pr(\po(\bar{\lambda}_r) = x+K)}{\Pr(\po(\bar{\lambda}_r) = x)}\right)^+\tag{change variable}\\
    =& \sum_{x \ge 0} \Pr(\po(\bar{\lambda}_r) = x)\left(1-\frac{(\bar{\lambda}_r)^K}{(x+1)(x+2)\ldots (x+K)}\right)^+
\end{align*}
Because ${(x+1)(x+2)\ldots (x+K)}$ is increasing as $x$ increases, there exists $x^*$ such that $(\bar{\lambda}_r)^K\le (x+1)(x+2)\ldots (x+K)$ if and only if $x\ge x^*$.
Therefore,
\begin{align*}
    &d_{TV}(K+ \po(\bar{\lambda}_r), \po(\bar{\lambda}_r))\\
   =& \sum_{x \ge 0} \Pr(\po(\bar{\lambda}_r) = x)\left(1-\frac{(\bar{\lambda}_r)^K}{(x+1)(x+2)\ldots (x+K)}\right)^+\\
   =& \sum_{x \ge x^*} \Pr(\po(\bar{\lambda}_r) = x)\left(1-\frac{(\bar{\lambda}_r)^K}{(x+1)(x+2)\ldots (x+K)}\right)\\
   =& \Pr(\po(\bar{\lambda}_r) \ge x^*)-\Pr(\po(\bar{\lambda}_r) \ge x^*+K)\\
   =& \sum_{x = x^*}^{x^*+K-1}\Pr(\po(\bar{\lambda}_r) = x)\le K\max_x\Pr(\po(\bar{\lambda}_r) = x)
\end{align*}
Now we want to show $\max_x\Pr(\po(\bar{\lambda}_r) = x) = o(1)$.  Intuitively, since the expectation $\bar{\lambda}_r = \omega(1)$ is large, the probability mass function $\Pr(\po(\bar{\lambda}_r) = x)$ is ``flat'', and the maximum of the probability mass function is small.  Formally, for all $x$, $\Pr(\po(\bar{\lambda}_r) = x+1)/\Pr(\po(\bar{\lambda}_r) = x) = \bar{\lambda}_r/(x+1)$, so the maximum happens at $x_M := \lfloor \bar{\lambda}_r\rfloor$.Then we can compute an upper bound of $\Pr(\po(\bar{\lambda}_r) = x_M)$ by Stirling approximations.
\begin{align*}
&\Pr\left(\po(\bar{\lambda}_r) = x_M\right)= \frac{(\bar{\lambda}_r)^{x_M}e^{-\bar{\lambda}_r}}{x_M!}\le \frac{(\bar{\lambda}_r)^{x_M}e^{-\bar{\lambda}_r}}{\sqrt{2\pi} x_M^{x_M+1/2}e^{-x_M}}\tag{Stirling's approximation~\cite{feller2008introduction}}\\
=& \frac{1}{\sqrt{2\pi} x_M^{1/2}}\cdot \frac{e^{-\bar{\lambda}_r}}{e^{-x_M}}\cdot \left(\frac{\bar{\lambda}_r}{x_M}\right)^{x_M}\le \frac{1}{\sqrt{2\pi x_M}}\cdot \left(\frac{\bar{\lambda}_r}{x_M}\right)^{x_M}\tag{$\bar{\lambda}_r\ge x_M$}\\
\le& \frac{1}{\sqrt{2\pi x_M}}\cdot \left(1+\frac{\bar{\lambda}_r-x_M}{x_M}\right)^{x_M}
\le \frac{1}{\sqrt{2\pi x_M}}\cdot \left(1+\frac{1}{x_M}\right)^{x_M}\le \frac{e}{\sqrt{2\pi x_M}} = o(1)
\end{align*}
The last one holds because $x_M = \lfloor \bar{\lambda}_r\rfloor = \omega(1)$.
\end{proof}
\begin{proof}[Proof of Claim~\ref{claim:poission2}]
Because the distributions of $\bm{\chi}$ and $\bar{\bm{\chi}}'$ are very close to product distributions, the relative entropy between them is easier to compute than the total variation distance.  By Pinsker's inequality, if the relative entropy is small, the total variation distance is also small.
\begin{align*}
    D_{KL}(\bm{\chi}\|\bar{\bm{\chi}}') =& -\sum_{\mathbf{k}:\sum_{\ell=0}^r k_\ell = n} \Pr(\bm{\chi} = \mathbf{k}) \log \frac{\Pr(\bar{\bm{\chi}}' = \mathbf{k})}{\Pr(\bm{\chi} = \mathbf{k})}\\
    =& -\sum_{\mathbf{k}:\sum_{\ell=0}^r k_\ell = n} \Pr(\bm{\chi} = \mathbf{k}) \log \left(\frac{\prod_{\ell = 0}^{r}\frac{\bar{\lambda}_\ell^{k_\ell} e^{-\bar{\lambda}_\ell}}{k_\ell!}}{\prod_{\ell = 0}^{r}\frac{\lambda_\ell^{k_\ell} e^{-\lambda_\ell}}{k_\ell!}}\right)\tag{by Eqn.~\eqref{eq:poisson1} and \eqref{eq:poisson2}}\\
    =& -\sum_{\mathbf{k}:\sum_{\ell=0}^r k_\ell = n} \Pr(\bm{\chi} = \mathbf{k}) \left(\sum_{\ell = 0}^{r}k_\ell \log \frac{\bar{\lambda}_\ell}{\lambda_\ell}\right)\tag{because $\sum_{\ell=0}^r \lambda_\ell = \sum_{\ell=0}^r \bar{\lambda}_\ell$}\\
    =& -\sum_{\mathbf{k}:\sum_{\ell=0}^r k_\ell = n} \Pr(\bm{\chi} = \mathbf{k}) \left( \sum_{\ell = 0}^{r-1}k_\ell\log\left(1-\frac{K}{n}\right)+k_r\log\left(1+(1/p_r-1)\frac{K}{n}\right) \right)
\end{align*}
In the outermost parentheses, everything except $k_\ell$ and $k_r$ are independent of the summation over $\mathbf{k}$, so we can simplify it as the following:
\begin{align*}
    D_{KL}(\bm{\chi}\|\bar{\bm{\chi}}') =& -\left[\log\left(1-\frac{K}{n}\right) \sum_{\mathbf{k}} \Pr(\bm{\chi} = \mathbf{k}) \left(\sum_{\ell = 0}^{r-1}k_\ell\right)+\log\left(1+(1/p_r-1)\frac{K}{n}\right) \sum_{\mathbf{k}} \Pr(\bm{\chi} = \mathbf{k})k_r\right]\\
    =& -\left[\log\left(1-\frac{K}{n}\right) \sum_{\ell = 0}^{r-1}\E[\chi_\ell]+\log\left(1+(1/p_r-1)\frac{K}{n}\right) \E[\chi_r]\right]\\
    =& -\left[(n-\lambda_r)\log\left(1-\frac{K}{n}\right)  +\lambda_r \log\left(1+(1/p_r-1)\frac{K}{n}\right) \right]\tag{$\E[\chi_\ell] = \lambda_\ell$}\\
    =& -n\left[(1-p_r)\log\left(1-\frac{K}{n}\right)  +p_r \log\left(1+(1/p_r-1)\frac{K}{n}\right) \right]\tag{$\lambda_r = np_r$}
\end{align*}
Now we want to show $(1-p_r)\log\left(1-\frac{K}{n}\right)  +p_r \log\left(1+(1/p_r-1)\frac{K}{n}\right)$ is $o(1/n)$.  Because $p_r = \Pr(Y_i\ge r) = \Theta(\lambda^r) = \omega(1/n)$ and $K$ is a constant, we can use Taylor expansion to approximate both logs at $1$,
\begin{align*}
 &(1-p_r)\log\left(1-\frac{K}{n}\right)  +p_r \log\left(1+(1/p_r-1)\frac{K}{n}\right)\\
 =& -(1-p_r)\frac{K}{n}+p_r (1/p_r-1)\frac{K}{n}+O\left(\frac{1}{n^2}\right)+O\left(\frac{1}{p_r n^2}\right)\\
 =& O\left(1/(p_r n^2)\right) = o(1/n)\tag{because $p_rn = \omega(1)$}
\end{align*}
Therefore, we have $D_{KL}(\bm{\chi}\|\bar{\bm{\chi}}') = o(1)$.  By Pinsker's inequality
$$d_{TV}(\bm{\chi}, \bar{\bm{\chi}}')\le \sqrt{\frac{1}{2}D_{KL}(\bm{\chi}\| \bar{\bm{\chi}}')} = o(1).$$
\end{proof}

\section{Proof of Proposition~\ref{prop:isolate}}
By Theorem~\ref{thm:gnp_sub} and Corollary~\ref{cor:gnp_boundary}, if no leaf is activated by the local seeds, then there can be at most constantly many infected vertices with high probability.
Consider an arbitrary vertex $v$ that is not infected, and let $t$ be the leaf such that $v\in V(t)$.
Let $K_{in}$ be the number of infected vertices in $V(t)$ after Phase I and $K_{out}$ be the number of infected vertices outside $V(t)$.
By our assumption, $K_{in}=O(1)$ and $K_{out}=O(1)$.
We compute an upper bound on the probability that $v$ is infected in the next cascade iteration.
Let $X_v$ be the number of $v$'s infected neighbors in $V(t)$ and $Y_v$ be the number of $v$'s infected neighbors outside $V(t)$.

Since the probability that $v$ is connected to each of those $K_{out}$ vertices is $o(n^{-1/r})$, we have
$$\Pr(Y_v\geq r-a)\leq\binom{K_{out}}{r-a}\left(o(n^{-1/r})\right)^{r-a}=o\left(n^{-(r-a)/r}\right)$$
for each $a\in\{0,1,\ldots,r-1\}$.

Ideally, we would also like to claim that
\begin{equation}\label{eqn:dom}
\Pr(X_v\geq a)\leq\binom{K_{in}}{a}w(t)^a=O\left(n^{-a/r}\right),
\end{equation}
so that putting together we have,
$$\Pr(v\mbox{ is infected})\leq\sum_{a=0}^{r-1}\Pr(X_v\geq a)\Pr(Y_v\geq r-a)=r\cdot O\left(n^{-a/r}\right)\cdot o\left(n^{-(r-a)/r}\right)=o\left(\frac1n\right).$$
and conclude that the expected number of infected vertices in the next iteration is $o(1)$, which implies the proposition by the Markov's inequality.

However, conditioning on the cascade in $V(t)$ stopping after $K_{in}$ infections, there is no guarantee that the probability an edge between $v$ and one of the $K_{in}$ infected vertices is still $w(t)$.
Moreover, for any two vertices $u_1,u_2$ that belong to those $K_{in}$ infected vertices, we do not even know if the probability that $v$ connects to $u_1$ is still independent of the probability that $v$ connects to $u_2$.
Therefore, (\ref{eqn:dom}) does not hold in a straightforward way.
The remaining part of this proof is dedicated to prove (\ref{eqn:dom}).

Consider a different scenario where we have put $K_{in}$ seeds in $V(t)$ (instead of that the cascade in $V(t)$ ends at $K_{in}$ infections), and let $\bar{X}_v$ be the number of edges between $v$ and those $K_{in}$ seeds (where $v$ is not one of those seeds).
Then we know each edge appears with probability $w(t)$ independently, and (\ref{eqn:dom}) holds for $\bar{X}_v$:
$$\Pr(\bar{X}_v\geq a)\leq\binom{K_{in}}{a}w(t)^a=O\left(n^{-a/r}\right).$$

Finally, (\ref{eqn:dom}) follows from that $\bar{X}_v$ stochastically dominates $X_v$ (i.e., $\Pr(\bar{X}_v\geq a)\geq\Pr(X_v\geq a)$ for each $a\in\{0,1,\ldots,r-1\}$), which is easy to see:
$$\Pr\left(X_v\geq a\right)=\Pr\left(\bar{X}_v\geq a\mid \bar{X}_v\leq r-1\right)=\frac{\Pr(a\leq \bar{X}_v\leq r-1)}{\Pr(\bar{X}_v\leq r-1)}$$
$$\qquad=\frac{\Pr(\bar{X}_v\geq a)-\Pr(\bar{X}_v\geq r)}{1-\Pr(\bar{X}_v\geq r)}\leq\Pr\left(\bar{X}_v\geq a\right),$$
where the first equality holds as $\Pr\left(\bar{X}_v\geq a\mid \bar{X}_v\leq r-1\right)$ exactly describes the probability that $v$ has at least $a$ infected neighbors among $K_{in}$ conditioning on $v$ has not yet been infected.
\end{document}